\theoremstyle{definition}\newtheorem{definition}{Definition}[section]
\theoremstyle{plain}\newtheorem{thm}{Theorem}
\newtheorem{lem}{Lemma}[section]
\newtheorem{cor}[lem]{Corollary}
\newtheorem{prop}[lem]{Proposition}
\theoremstyle{remark}\newtheorem{remark}{Remark}
\newcommand{\lemit}[1]{\begin{enumerate}[label={(\alph*)}, ref={\thelem\alph*}]{#1}\end{enumerate}}	
\newcommand{\remit}[1]{\begin{enumerate}[label={(\alph*)}, ref={\theremark\alph*}]{#1}\end{enumerate}}
\newcommand{\bo}{{\tilde \beta}}
\renewcommand{\hat}[1]{\widehat{#1}}											
\renewcommand{\tilde}[1]{\widetilde{#1}}										
\newcommand{\ls}{\lesssim}													
\newcommand{\gs}{\gtrsim}													
\newcommand{\Vp}{V^\parallel}
\newcommand{\phe}{\varphi^\varepsilon}
\newcommand{\chie}{\chi^\varepsilon}
\newcommand{\z}{z}															
\newcommand{\wb}{w_\beta}
\newcommand{\bb}{b_\beta}													
\newcommand{\na}{\nabla}														
\newcommand{\pp}{p^\Phi}														
\newcommand{\pc}{p^{\chie}}														
\newcommand{\qp}{q^\Phi}														
\newcommand{\qc}{q^{\chie}}														
\newcommand{\gb}{g_\bo}	
\newcommand{\gbot}{g_\bo^{(12)}}
\newcommand{\gbij}{g_\bo^{(ij)}}											
\newcommand{\fb}{f_\bo}	
\newcommand{\fbot}{f_\bo^{(12)}}
\newcommand{\fboth}{f_\bo^{(13)}}		
\newcommand{\fblm}{f_\bo^{(lm)}}		
\newcommand{\fblk}{f_\bo^{(lk)}}	
\newcommand{\fbrs}{f_\bo^{(rs)}}	
\newcommand{\fbij}{f_\bo^{(ij)}}	
\newcommand{\fblms}{f_\bo^{(l'm')}}											
\newcommand{\A}{\mathcal{A}}
\newcommand{\Abar}{\overline{\mathcal{A}}}
\newcommand{\Ao}{{\A_1}}
\newcommand{\Abaro}{{\Abar_1}}
\newcommand{\B}{\mathcal{B}}
\newcommand{\Bbar}{\overline{\mathcal{B}}}
\newcommand{\Bo}{{\B_1}}
\newcommand{\Bbaro}{{\Bbar_1}}
\newcommand{\Cbar}{\overline{\mathcal{C}}}
\newcommand{\Cbaro}{{\Cbar_1}}
\newcommand{\charAbaro}{\mathbbm{1}_\Abaro}
\newcommand{\charAo}{\mathbbm{1}_\Ao}
\newcommand{\charBbaro}{\mathbbm{1}_\Bbaro}
\newcommand{\charBo}{\mathbbm{1}_\Bo}
\newcommand{\charCbaro}{\mathbbm{1}_\Cbaro}
\newcommand{\charAbarox}{\mathbbm{1}_{\overline{\mathcal{A}}_1^x}}
\newcommand{\charAox}{\mathbbm{1}_{\mathcal{A}_1^x}}
\newcommand{\he}{h_\varepsilon}												
\newcommand{\heot}{\he^{(12)}}
\newcommand{\heij}{\he^{(ij)}}
\newcommand{\hbo}{\overline{h}_{\beta_1}}									
\newcommand{\hboot}{\overline{h}_{\beta_1}^{(12)}}
\newcommand{\te}{\Theta_\varepsilon}											
\newcommand{\teij}{\te^{(ij)}}
\newcommand{\teot}{\te^{(12)}}
\newcommand{\tb}{\overline{\Theta}_{\beta_1}}									
\newcommand{\tbot}{\overline{\Theta}_{\beta_1}^{(12)}}
\newcommand{\tbij}{\overline{\Theta}_{\beta_1}^{(ij)}}
\newcommand{\tz}{\overline{\Theta}_0}
\newcommand{\tzot}{\tz^{(12)}}
\newcommand{\hz}{\overline{h}_0}
\newcommand{\hzot}{\hz^{(12)}}
\newcommand{\hb}{\overline{h}_\bo}
\newcommand{\hbot}{\hb^{(12)}}
\newcommand{\lr}[1]{\left\langle #1 \right\rangle} 							
\newcommand{\llr}[1]{\left\llangle #1 \right\rrangle}							
\newcommand{\norm}[1]{\lVert#1\rVert}   										
\newcommand{\onorm}[1]{\lVert#1\rVert_\mathrm{op}}							
\newcommand{\R}{\mathbb{R}}													
\newcommand\mydots{,\makebox[1em][c]{.\hfil.\hfil.},}							
\newcommand\mycdots{\makebox[1em][c]{$\cdot$\hfil$\cdot$\hfil$\cdot$}}		
\newcommand{\Tr}{\mathrm{Tr}}
\renewcommand{\d}{\mathop{}\!\mathrm{d}}
\renewcommand{\i}{\mathrm{i}}
\DeclareMathOperator*{\supp}{\mathrm{supp}}
\DeclareFontFamily{OMX}{MnSymbolE}{}
\DeclareSymbolFont{MnLargeSymbols}{OMX}{MnSymbolE}{m}{n}
\DeclareFontShape{OMX}{MnSymbolE}{m}{n}{
    <-6>  MnSymbolE5
   <6-7>  MnSymbolE6
   <7-8>  MnSymbolE7
   <8-9>  MnSymbolE8
   <9-10> MnSymbolE9
  <10-12> MnSymbolE10
  <12->   MnSymbolE12
}{}
\DeclareFontShape{OMX}{MnSymbolE}{b}{n}{
    <-6>  MnSymbolE-Bold5
   <6-7>  MnSymbolE-Bold6
   <7-8>  MnSymbolE-Bold7
   <8-9>  MnSymbolE-Bold8
   <9-10> MnSymbolE-Bold9
  <10-12> MnSymbolE-Bold10
  <12->   MnSymbolE-Bold12
}{}
\let\llangle\@undefined
\let\rrangle\@undefined
\DeclareMathDelimiter{\llangle}{\mathopen}%
                     {MnLargeSymbols}{'164}{MnLargeSymbols}{'164}
\DeclareMathDelimiter{\rrangle}{\mathclose}%
                     {MnLargeSymbols}{'171}{MnLargeSymbols}{'171}
 \newcommand\smallO[1]{
        \mathchoice
            {
                \ensuremath{\mathop{}\mathopen{}{\scriptstyle\mathcal{O}}\mathopen{}\left(#1\right)}
            }
            {
                \ensuremath{\mathop{}\mathopen{}{\scriptstyle\mathcal{O}}\mathopen{}\left(#1\right)}
            }
            {
                \ensuremath{\mathop{}\mathopen{}{\scriptscriptstyle\mathcal{O}}\mathopen{}\left(#1\right)}
            }
            {
                \ensuremath{\mathop{}\mathopen{}{o}\mathopen{}\left(#1\right)}
            }
    }
\title{Derivation of the 1d Gross--Pitaevskii equation from the 3d quantum many-body dynamics of strongly confined bosons}
\author{Lea Boßmann\thanks{Fachbereich Mathematik, Eberhard Karls Universität Tübingen\newline
	\indent\hspace{4pt} Auf der Morgenstelle 10, 72076 Tübingen, Germany\newline
	\indent\hspace{4pt} E-mail: lea.bossmann@uni-tuebingen.de, stefan.teufel@uni-tuebingen.de}~ 
	and Stefan Teufel\footnotemark[1]}
\date{\today}
\begin{document}
\maketitle

\begin{abstract}\noindent
We consider the dynamics of $N$ interacting bosons initially forming a Bose--Einstein condensate. Due to an external trapping potential, the bosons are strongly confined in two dimensions, where the transverse extension of the trap is of order $\varepsilon$. 
The non-negative interaction potential is scaled such that its range and its scattering length  are both  of order $(N/\varepsilon^2)^{-1}$, corresponding to the Gross--Pitaevskii scaling of a dilute Bose gas.
We show that in the simultaneous limit $N\rightarrow\infty$ and $\varepsilon\rightarrow 0$, the dynamics preserve condensation and the time evolution is asymptotically described by a Gross--Pitaevskii equation in one dimension. The strength of the nonlinearity is given by the scattering length of the unscaled interaction, multiplied with a factor depending on the shape of the confining potential. For our analysis, we adapt a method by Pickl \cite{pickl2015} to the problem with dimensional reduction and rely on  the derivation of the one-dimensional NLS equation for interactions with softer scaling behaviour  in \cite{NLS}. 
\end{abstract}

\section{Introduction}
We consider $N$ identical bosons in $\R^3$ interacting through a repulsive pair interaction. The bosons are trapped within a cigar-shaped potential, which effectively confines the particles in two directions to a region of order $\varepsilon$.  
Using the coordinates 
$$\z=(x,y)\in\R^{1+2}\,,$$
the confinement in the $y$-directions is generated by a scaled potential 
$\frac{1}{\varepsilon^2}V^\perp\left(\tfrac{y}{\varepsilon}\right)$, where $V^\perp:\R^2\rightarrow\R$ and    $0<\varepsilon\ll 1$.    
The Hamiltonian describing the system is
\begin{equation}\label{H:micro:coord}
H(t)=\sum\limits_{j=1}^N\left(-\Delta_j+\frac{1}{\varepsilon^2}V^\perp\left(\frac{y_j}{\varepsilon}\right)+\Vp(t,\z_j)\right)+\sum\limits_{1\leq i<j\leq N}w_\mu(\z_i-\z_j),
\end{equation}
where $\Delta$ denotes the Laplace operator on $\R^3$ and $\Vp$ is an additional unscaled external potential. The units are chosen such that $\hbar=1$ and $m=\frac12$. 

The interaction between the particles is described by the potential 
\begin{equation}\label{interaction}
w_\mu(\z)=\mu^{-2} \,w\left(\frac{\z}{\mu}\right)\quad \mbox{ with }\quad \mu := \frac{\varepsilon^2}{N}  
\end{equation}
and for some compactly supported, spherically symmetric, non-negative potential $w$. This scaling of the   interaction describes a dilute gas in the Gross--Pitaevskii regime, which will be explained in detail below.

We are interested in the dynamics of the system in the simultaneous limit $(N,\varepsilon)\rightarrow(\infty,0)$.
The state $\psi^{N,\varepsilon}(t)$ of the system at time $t$ is given as the solution of the $N$-body Schrödinger equation
\begin{equation}\label{SE}
\i\tfrac{\d}{\d t}\psi^{N,\varepsilon}(t)=H(t)\psi^{N,\varepsilon}(t)
\end{equation}
with initial datum $\psi^{N,\varepsilon}(0)=\psi^{N,\varepsilon}_0\in L^2_+(\R^{3N}):=\otimes_\mathrm{sym}^N L^2(\R^3).$
We assume that the bosons initially form a Bose--Einstein condensate. Mathematically, this means that the one-particle reduced density matrix $\gamma^{(1)}_{\psi_0^{N,\varepsilon}}$ of $\psi^{N,\varepsilon}_0$, 
\begin{equation}\label{eqn:k:particle:RDM}
\gamma^{(k)}_{\psi_0^{N,\varepsilon}}:=\Tr_{k+1\mydots N}|\psi_0^{N,\varepsilon}\rangle\langle\psi_0^{N,\varepsilon}|
\end{equation}
for $k=1$, is asymptotically close to a projection $|\phe_0\rangle\langle\phe_0|$ onto a one-body state $\phe_0$. Because of the strong confinement, this condensate state factorises at low energies and is of the form  $\phe_0(z)=\Phi_0(x)\chie(y)\in L^2(\R^3)$ (see Remark \ref{rem:LSSY}). Here, $\Phi_0$ denotes the wavefunction along the $x$-axis and $\chie$ is the normalised ground state of $-\Delta_y+\frac{1}{\varepsilon^2}V^\perp(\frac{y}{\varepsilon})$. Due to the rescaling by $\varepsilon$, $\chie$ is given by
\begin{equation}\label{eqn:chie}
\chie(y)=\tfrac{1}{\varepsilon}\chi(\tfrac{y}{\varepsilon}),
\end{equation}
where $\chi$ is the normalised ground state of $-\Delta_y+V^\perp(y)$. 

In Theorem~\ref{thm}, we show that if the system initially condenses into a factorised state, i.e.\
$$\lim\limits_{(N,\varepsilon)\rightarrow(\infty,0)}\Tr_{L^2(\R^{3})}\Big|\gamma^{(1)}_{\psi^{N,\varepsilon}_0}-|\phe_0\rangle\langle\phe_0| \Big|=0$$
with $\phe_0=\Phi_0\chie$ and $\Phi_0\in H^2(\R)$ (where  the limit $(N,\varepsilon)\rightarrow(\infty,0)$ is taken in an appropriate way), then the condensation into a factorised state is preserved by the dynamics, i.e.\ for all $t\in\R$ and $k\in\mathbb{N}$
$$\lim\limits_{(N,\varepsilon)\rightarrow(\infty,0)}\Tr_{L^2(\R^{3k})}\Big|\gamma^{(k)}_{\psi^{N,\varepsilon}(t)}-|\phe(t)\rangle\langle\phe(t)|^{\otimes k}\Big|=0$$
with $\phe(t)=\Phi(t)\chie$. Moreover, $\Phi(t)$ is the solution of the one-dimensional Gross--Pitaevskii equation
\begin{equation}\label{NLS}
\i\tfrac{\partial}{\partial t}\Phi(t,x)=\left(-\tfrac{\partial^2}{\partial x^2}+\Vp(t,(x,0))+b|\Phi(t,x)|^2\right)\Phi(t,x)=:h(t)\Phi(t,x)
\end{equation}
with $\Phi(0)=\Phi_0 $ and
$$b=8\pi a\int_{\R^2}|\chi(y)|^4\d y=8\pi a \,\varepsilon^2\int_{\R^2}|\chie(y)|^4\d y,$$
where $a$ denotes the scattering length of the unscaled potential $w$. \\

To prove Theorem~\ref{thm}, we follow the approach developed by Pickl for the problem without strong confinement \cite{pickl2015}, which is outlined in Section~\ref{sec:proof}. 
To handle the   singular scaling of the interaction, he first shows the convergence for interactions with softer (but still singular) scaling behaviour, and as a second step uses this result to prove the Gross--Pitaevskii case.

The derivation of the one-dimensional NLS equation for softer scalings of the interaction combined with dimensional reduction was done in \cite{NLS}. In the present paper, we extend the result  from \cite{NLS} to treat the Gross--Pitaevskii regime. As in \cite{NLS}, the strong asymmetry of the problem requires non-trivial adjustments to the method by Pickl.
A description of the differences between our proof and \cite{pickl2015} is given in Remark \ref{rem:differences}.

In the remaining part of the introduction, we will first motivate the scaling \eqref{interaction} of the interaction. This scaling is physically relevant since, written in suitable coordinates, it describes an $(N,\varepsilon)$-independent interaction. 
Subsequently, we comment on related literature.

We wish to study $N$ three-dimensional bosons in an asymmetric trap, which confines in two directions to a length scale $L^\perp$ that is much smaller then the length scale $L^\parallel$ of the remaining direction\footnote{In this paragraph, the capital letters $L^\parallel$, $L^\perp$ and $A$ indicate length scales. In Theorem~\ref{thm} and the remainder of the paper, we use units where $L^\parallel=1$.}.
Hence, we have
$$L^\perp=\varepsilon L^\parallel$$
with $\varepsilon\ll 1$. 
The transverse confinement on the scale $L^\perp$ is achieved by the  potential $\tfrac{1}{(L^\perp)^2}V^\perp(\tfrac{\cdot}{L^\perp})$, where   $-\Delta+V^\perp$ is assumed to have a localised ground state.
In the remaining direction, the system is assumed to be localised in a region of length $L^\parallel$. The particle density is thus
$$\varrho_\mathrm{3d}\sim\tfrac{N}{L^\parallel (L^\perp)^2}=\tfrac{N}{\varepsilon^2 (L^\parallel)^3}\,.$$
To observe Gross--Pitaevskii dynamics in the longitudinal direction in the limit $(N,\varepsilon)\rightarrow(\infty,0)$, we require the kinetic energy per particle in this direction, $E_\mathrm{kin,p.p.}\sim (L^\parallel)^{-2}$, to remain comparable to the total internal energy per particle, i.e.~the total energy without the contributions from the confinement. For a dilute gas, the latter is given by $E_\mathrm{p.p.}\sim A\varrho_\mathrm{3d}$ \cite[Chapter 2]{LSSY}, where $A$ denotes the ($s$-wave) scattering length of the interaction. 
The physical significance of this parameter is the following: the scattering of a slow and sufficiently distant particle at some other particle is to leading order described by its scattering at a hard sphere with radius $A$. Consequently, the length scale determined by $A$ is the relevant length scale for the two-body correlations.
The condition $E_\mathrm{kin,p.p.}\sim E_\mathrm{p.p.}$ implies the scaling condition
\begin{equation}\label{GP:scaling:condition}
\tfrac{A}{L^\parallel}\sim\tfrac{\varepsilon^2}{N}.
\end{equation}
It seems physically reasonable to fix $A\sim1$ since $A$ describes the two-body scattering process and should therefore be independent of $N$ and $\varepsilon$. We will call this choice the microscopic frame of reference. By \eqref{GP:scaling:condition}, the length scales of the problem with respect to this frame are given by $L^\parallel=\tfrac{N}{\varepsilon^2}$ and $L^\perp=\tfrac{N}{\varepsilon}$, hence  both tend to infinity as $(N,\varepsilon)\rightarrow(\infty,0)$. 
$\varrho_\mathrm{3d}$ is of order $\varepsilon^4N^{-2}$ and converges to zero, which shows that we indeed consider a dilute gas. A useful characterisation of the low density regime is the requirement that the mean (three-dimensional) inter-particle distance $\varrho_\mathrm{3d}^{-\frac13}$ be much larger than the scattering length, i.e.~$A^3\varrho_\mathrm{3d}\rightarrow0$. The gas is also dilute with respect to the one-dimensional density $\varrho_\mathrm{1d}\sim\tfrac{N}{L^\parallel}$ because $A\varrho_\mathrm{1d}\sim\varepsilon^2\rightarrow0$, where $\varrho_{1d}^{-1}$ describes the mean one-dimensional inter-particle distance.

For the mathematical analysis, we follow the common practice  to choose coordinates where the longitudinal length scale $L^\parallel=1$ is fixed. Consequently, $L^\perp=\varepsilon$ and the scattering length shrinks as $A=a\,\frac{\varepsilon^2}{N}$. 
This frame of reference arises from the microscopic frame by the coordinate rescaling $\z\mapsto \frac{\varepsilon^2}{N}z$ and $t\mapsto (\tfrac{\varepsilon^2}{N})^2t$ in the Schrödinger equation \eqref{SE}, which yields the rescaled interaction \eqref{interaction}. 
Note that times of order one with respect to this frame correspond to extremely long times on the microscopic time scale, which relates to the low density of the gas.

We admit an external field $\Vp$ varying on the length scale $L^\parallel$. Consequently, it depends on $(N,\varepsilon)$ with respect to the microscopic frame of reference and is $(N,\varepsilon)$-independent in our coordinates.
As $L^\parallel\gg A$, the external potential is asymptotically constant on the scale of the interaction and therefore does not affect the scaling condition \eqref{GP:scaling:condition}.

Due to this scaling condition, the system always remains within the second of the five regions defined by Lieb, Seiringer and Yngvason in \cite{lieb2004}. In that paper, the authors prove that the ground state energy and density of a dilute Bose gas in a highly elongated trap can be obtained by minimising the energy functional corresponding to the Lieb--Liniger Hamiltonian with coupling constant $g=\frac{A}{\varepsilon^2}\int|\chi(y)|^4\d y$ \cite[Theorem~1.1]{lieb2004}. If $g\overline{\varrho}^{-1}\rightarrow0$, where $\overline{\varrho}$ denotes the mean one-dimensional density, the system can be described as one-dimensional limit of a three-dimensional effective theory. In particular, if $g\overline{\varrho}^{-1}\sim N^{-2}$, which is true for our system due to \eqref{GP:scaling:condition}, the ground state is described by a one-dimensional Gross--Pitaevskii energy functional \cite[Theorem~2.2]{lieb2004}. The other regions can be reached by scaling $A$  differently.\footnote{Let us assume that the external field $\Vp$ is given by a homogeneous function of degree $s>0$ acting only in the $x$-direction. The ideal gas case (region 1) is then obtained by the scaling $A\ll \varepsilon^2N^{-1}$ and the Thomas--Fermi case (region 3) by choosing $\varepsilon^2N^{-1}\ll A\ll \varepsilon^2 N^{\frac{s}{s+2}}$. Also the truly one-dimensional regime can be reached: $A\sim\varepsilon^2N^{\frac{s}{s+2}}$ corresponds to region 4 and $A\gg\varepsilon^2N^{\frac{s}{s+2}}$ yields a Girardeau--Tonks gas (region 5).}

It is also instructive to consider softer scaling interactions of the form 
\begin{equation}\label{eqn:small:beta}
\wb(\z):=(\tfrac{N}{\varepsilon^2})^{-1+3\beta}w\left((\tfrac{N}{\varepsilon^2})^\beta\z\right),
\end{equation}
where the scaling parameter $\beta\in(0,1)$ interpolates between the Hartree ($\beta=0$) and the Gross--Pitaevskii ($\beta=1$) regime. In this case, the scattering length still scales as $(\frac{N}{\varepsilon^2})^{-1}$ \cite[Lemma~A.1]{erdos2007} whereas the effective range of $\wb$ is now of order $(\frac{N}{\varepsilon^2})^{-\beta}$. This means that as $(N,\varepsilon)\rightarrow(\infty,0)$, the scattering length becomes negligible compared to the range of the interaction, i.e.~the two-body correlations become invisible on the length scale of the interaction. Consequently, the scattering length is well approximated by the first order Born approximation and the corresponding effective equation is  the one-dimensional NLS equation \eqref{NLS} with $b$ replaced by  $\norm{w}_{L^1(\R^3)}\int_{\R^2}|\chi(y)|^4\d y$ \cite{NLS}.
\\

Quasi one-dimensional bosons in highly elongated traps have been experimentally probed \cite{gorlitz2001,henderson2009} and the dynamics of such systems are physically very interesting \cite{esteve2006,kinoshita2006,meinert2017}. The first rigorous derivation of NLS and Gross--Pitaevskii equations for three-dimensional bosons using BBGKY hierarchies is due to Erd{\H o}s, Schlein and Yau \cite{erdos2007, erdos2010}. A different approach was  proposed by Pickl \cite{pickl2008,pickl2010,
pickl2015, jeblick2018}, who 
also obtained  rates for the convergence of the reduced density matrices. A third method for the Gross--Pitaevskii case, using Bogoliubov transformations and coherent states on Fock space, was proposed by Benedikter, De Oliveira and Schlein \cite{benedikter2015}. Extending this approach, Brennecke and Schlein \cite{brennecke2017} recently proved an optimal rate of the convergence. 
Several further results concern bosons in one \cite{adami2007, chen2016} and two \cite{kirkpatrick2011,jeblick2016,jeblick2017} dimensions.
The problem of dimensional reduction for the NLS equation was treated by M{\'e}hats and Raymond \cite{mehats2017}, who study the cubic NLS equation in a quantum waveguide.
In \cite{abdallah2005_2}, Ben Abdallah, Méhats, Schmeiser and Weishäupl consider an $(n + d)$-dimensional NLS equation subject to a strong confinement in $d$ directions and derive an effective $n$-dimensional NLS evolution.

There are few works on the derivation of lower-dimensional time-dependent NLS equations from the three-dimensional $N$-body dynamics. Chen and Holmer consider three-dimensional bosons with pair interactions in a strongly confining potential in one \cite{chen2013} and two \cite{chen2017} directions. For repulsive interactions scaling with $\beta\in (0,\frac25)$ in case of a disc-shaped and for attractive interactions with $\beta\in(0,\frac37)$ in case of a cigar-shaped confinement, they show that the dynamics are effectively described by two- and one-dimensional NLS equations. 
In \cite{keler2016}, von Keler and Teufel prove this for a Bose gas which is confined to a quantum waveguide with non-trivial geometry for $\beta\in(0,\tfrac13)$. 
In \cite{NLS}, Boßmann considers bosons interacting through a potential scaling with $\beta\in(0,1)$, but apart from this in the same setting as here, and shows that the evolution of the system is well captured by a one-dimensional NLS equation.\\

\noindent{\bf Notation.} We use the notation $A\ls B$ to indicate that there exists a constant $C>0$ independent of $\varepsilon, N, t, \psi^{N,\varepsilon}_0,\Phi_0$ such that $A\leq CB$. This constant may, however, depend on the quantities fixed by the model, such as $V^\perp$, $\chi$ and $\Vp$.
Besides, we will exclusively use the symbol $\,\hat{\cdot}\,$ to denote the weighted many-body operators from Definition~\ref{def:hat} (see also Remark~\ref{rem:notation}) and use the abbreviations $$\llr{\cdot,\cdot}:=\lr{\cdot,\cdot}_{L^2(\R^{3N})},\quad \norm{\cdot}:=\norm{\cdot}_{L^2(\R^{3N})}
\quad\text{and}\quad\onorm{\cdot}:=\norm{\cdot}_{\mathcal{L}(L^2(\R^{3N}))}.$$

\section{Main Result}\label{sec:main}
To study the effective dynamics of the many-body system in the limit $(N,\varepsilon)\rightarrow(\infty,0)$, we consider families of initial data $\psi^{N,\varepsilon}_0$ along the following sequences $(N_n,\varepsilon_n)\rightarrow(\infty,0)$:
\begin{definition}\label{def:admissible}
A sequence $(N_n,\varepsilon_n)$ in $\mathbb{N}\times(0,1)$ is called \textit{admissible} if
$$\lim\limits_{n\rightarrow\infty}(N_n,\varepsilon_n)=(\infty,0) \qquad\text{ and }\qquad 
\lim\limits_{n\rightarrow\infty}\tfrac{\varepsilon^{2+\delta}_n}{\mu_n}
=0
\quad\text{ for }\mu_n:=\left(\tfrac{N_n}{\varepsilon^2_n}\right)^{-1}$$
for some $0<\delta<\frac25$.
\end{definition}
The second condition ensures that the energy gap of order $\varepsilon^{-2}$ above the transverse ground state $\chie$ grows sufficiently fast. 
In the proof, this will be used to control transverse excitations into states orthogonal to $\chie$ (see also Remark~\ref{rem:admissibility}).
Since
$$\tfrac{\varepsilon^{2+\delta}}{\mu}=N\varepsilon^\delta\to0,$$ 
$\delta$ must be strictly positive, otherwise $N\varepsilon^\delta\rightarrow0$ would be impossible. 

To formulate our main theorem, we need two different one-particle energies:
\begin{itemize}
\item The \emph{``renormalised'' energy per particle}: for $\psi\in\mathcal{D}(H(t)^\frac12)$,
\begin{equation}\label{E^psi}
E^\psi(t):=\tfrac{1}{N}\llr{\psi,H(t)\psi}-\tfrac{E_0}{\varepsilon^2},
\end{equation}
where $E_0$ denotes the lowest eigenvalue of $-\Delta_y+V^\perp(y)$. By rescaling, the lowest eigenvalue of $-\Delta_y+\frac{1}{\varepsilon^2}V^\perp(\frac{y}{\varepsilon})$ is $\frac{E_0}{\varepsilon^2}$.
\item The \emph{effective energy per particle}: for $\Phi\in H^1(\R)$,
\begin{equation}\label{E^Phi}
\mathcal{E}^\Phi(t):=\lr{\Phi,\left(-\tfrac{\partial^2}{\partial x^2}+\Vp(t,(x,0))+\tfrac{b}{2}|\Phi|^2\right)\Phi}_{L^2(\R)}.
\end{equation}
\end{itemize}
Further, define the function $\mathfrak{e}:\R\rightarrow [1,\infty)$ by
\begin{equation}\label{def:e}
\mathfrak{e}^2(t):=1+|E^{\psi^{N,\varepsilon}_0}(0)|+|\mathcal{E}^{\Phi_0}(0)|+\int\limits_0^t \norm{\dot{\Vp}(s,\cdot)}_{L^\infty(\R^3)}\d s
+\sup\limits_{\substack{i,j\in\{0,1\}\\k\in\{1,2\}}}\norm{\partial_t^i\partial_{y_k}^j\Vp(t,\cdot)}_{L^\infty(\R^3)}\,.
\end{equation}
Note that $\mathfrak{e}(t)$ is for each $t\in\R$ uniformly bounded in $N$ and $\varepsilon$ because we will assume that $E^{\psi^{N,\varepsilon}_0}(0)\to \mathcal{E}^{\Phi_0}(0)$ as $(N,\varepsilon)\to(\infty,0)$ (see assumption A4 below)
and boundedness of $\Vp$ and its derivatives (see assumption A3).
The function $\mathfrak{e}$ will be useful because, by the fundamental theorem of calculus,
\begin{equation}
\big|E^{\psi^{N,\varepsilon}(t)}(t)\big|\leq \mathfrak{e}^2(t)-1 \quad\text{ and }\quad \big|\mathcal{E}^{\Phi(t)}(t)\big|\leq \mathfrak{e}^2(t)-1 
\end{equation}
for any $t\in\R$. Note that for a time-independent external field $\Vp$, it follows that $\mathfrak{e}^2(t)\ls 1$ for any $t$, hence $E^{\psi^{N,\varepsilon}(t)}(t)$  and $\mathcal{E}^{\Phi(t)}(t)$ are in this case bounded uniformly in time.
\\

\noindent Let us now state our assumptions.
\begin{itemize}
\item[A1] \emph{Interaction.} Let the unscaled interaction $w\in L^\infty(\R^3,\R)$ be spherically symmetric, non-negative and let $\supp w\subseteq \{z\in\R^3:|z|\leq 1\}$.
\item[A2] \emph{Confining potential.} Let $V^\perp:\R^2\rightarrow\R$ such that $-\Delta_y+V^\perp$ is self-adjoint and has a non-degenerate ground state $\chi$ with energy $E_0<\inf\sigma_\mathrm{ess}(-\Delta_y+V^\perp)$. 
Assume that the negative part of $V^\perp$ is bounded and that $\chi\in\mathcal{C}^2_\mathrm{b}(\R^2)$, i.e.~$\chi$ is bounded and twice continuously differentiable with bounded derivatives. We choose  $\chi$ normalised and real. 
\item[A3] \emph{External field.} Let $\Vp:\R\times\R^3\rightarrow\R$ such that for fixed $z\in\R^3$, $\Vp(\cdot,z)\in\mathcal{C}^1(\R)$. 
Further, assume that for each fixed $t\in \R$, $\Vp(t,(\cdot,0))\in H^4(\R)$, $\Vp(t,\cdot),\dot{\Vp}(t,\cdot)\in L^\infty(\R^3)\cap \mathcal{C}^1(\R^3)$ and $\nabla_y\Vp(t,\cdot),\nabla_y\dot{\Vp}(t,\cdot)\in L^\infty(\R^3)$.
\item[A4] \emph{Initial data.} Assume that the family of initial data, $\psi^{N,\varepsilon}_0\in\mathcal{D}(H(0))\cap L^2_+(\R^{3N})$ with $\norm{\psi^{N,\varepsilon}_0}^2=1$, is close to a condensate with condensate wavefunction $\phe_0=\Phi_0\chie$ for some normalised $\Phi_0\in H^2(\R)$ in the following sense: for some admissible sequence $(N,\varepsilon)$, it holds that
\begin{equation}\label{A4:1}
\lim\limits_{(N,\varepsilon)\rightarrow(\infty,0)}\Tr_{L^2(\R^3)}\Big|\gamma^{(1)}_{\psi^{N,\varepsilon}_0}-|\Phi_0\chie\rangle\langle\Phi_0\chie|\Big|=0
\end{equation}
and
\begin{equation}\label{A4:2}
\lim\limits_{(N,\varepsilon)\rightarrow(\infty,0)}\left|E^{\psi^{N,\varepsilon}_0}(0)-\mathcal{E}^{\Phi_0}(0)\right|=0.
\end{equation}
\end{itemize}

\begin{thm}\label{thm}
Assume that $w$, $V^\perp$ and $\Vp$ satisfy A1 -- A3. Let $\psi^{N,\varepsilon}_0$ be a family of initial data satisfying A4, let $\psi^{N,\varepsilon}(t)$ denote the solution of \eqref{SE} with initial datum $\psi^{N,\varepsilon}(0)=\psi^{N,\varepsilon}_0$ and let $\gamma^{(k)}_{\psi^{N,\varepsilon}(t)}$ denote its $k$-particle reduced density matrix as in \eqref{eqn:k:particle:RDM}. 
Then for any $T\in\R$ and $k\in\mathbb{N}$,
\begin{equation}\label{T1}
\lim\limits_{(N,\varepsilon)\rightarrow(\infty,0)}\,\sup\limits_{t\in[-T,T]}
\Tr_{L^2(\R^{3k})}\Big|\gamma^{(k)}_{\psi^{N,\varepsilon}(t)}-|\Phi(t)\chie\rangle\langle\Phi(t)\chie|^{\otimes k}\Big|=0
\end{equation}
and
\begin{equation}\label{T2}
\lim\limits_{(N,\varepsilon)\rightarrow(\infty,0)}\,\sup\limits_{t\in[-T,T]}\left|E^{\psi^{N,\varepsilon}(t)}(t)-\mathcal{E}^{\Phi(t)}(t)\right|=0,
\end{equation}
where  $\Phi(t)$ is the solution of \eqref{NLS} with initial datum $\Phi(0)=\Phi_0$ and with
\begin{equation}\label{b}
b=8\pi a \int\limits_{\R^2}|\chi(y)|^4\d y\,.
\end{equation}
Here, $a$ denotes the scattering length of $w$ and the limits in \eqref{T1} and \eqref{T2} are taken along the sequence from A4. 
\end{thm}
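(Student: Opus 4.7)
The proof plan is to follow Pickl's counting method \cite{pickl2015} with the dimensional-reduction adaptations developed in \cite{NLS}. Introduce the weighted counting functional
\begin{equation*}
\alpha(t) := \llr{\psi^{N,\varepsilon}(t),\, \tfrac{1}{N}\textstyle\sum_{j=1}^N q_j^{\phe(t)}\, \psi^{N,\varepsilon}(t)} + \bigl|E^{\psi^{N,\varepsilon}(t)}(t) - \mathcal{E}^{\Phi(t)}(t)\bigr|\,,
\end{equation*}
where $q^{\phe(t)} := 1 - |\phe(t)\rangle\langle\phe(t)|$ and $\phe(t) = \Phi(t)\chie$. Assumption A4 ensures $\alpha(0)\to 0$, and a standard combinatorial argument converts $\alpha(t)\to 0$ into trace-norm convergence of every $\gamma^{(k)}_{\psi^{N,\varepsilon}(t)}$ towards $|\phe(t)\rangle\langle\phe(t)|^{\otimes k}$. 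Hence both \eqref{T1} and \eqref{T2} follow once one establishes a Grönwall inequality $\dot\alpha(t) \ls C(\mathfrak{e}(t))\,\alpha(t) + R(N,\varepsilon,t)$ with $R(N,\varepsilon,t)\to 0$ locally uniformly in $t$.

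Differentiating $\alpha$ along the dynamics yields a commutator of the form $[H(t)-\sum_j h_j(t),\, \hat n^{\phe(t)}]$ sandwiched between copies of $\psi^{N,\varepsilon}(t)$. Splitting $q^{\phe}= p^\Phi q^{\chie} + q^\Phi$ separates excitations into a \emph{transverse} component (leaving the ground state $\chie$ of the confining well) and a \emph{longitudinal} component (leaving $\Phi$ in the $x$-direction). The transverse component is suppressed by the spectral gap of $-\Delta_y+\varepsilon^{-2}V^\perp(y/\varepsilon)$ above $\chie$, which is of order $\varepsilon^{-2}$; since $|E^{\psi^{N,\varepsilon}(t)}(t)|\leq \mathfrak{e}^2(t)-1$ remains bounded, the fraction of particles carrying a transverse excitation is of order $\varepsilon^2$, and the admissibility condition $\varepsilon^{2+\delta}/\mu\to 0$ is precisely the quantitative threshold needed to close the resulting error terms. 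The longitudinal component is treated by comparing the genuine many-body interaction contribution to the nonlinear term $b|\Phi|^2\Phi$ in \eqref{NLS}.

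The core difficulty is the Gross--Pitaevskii scaling: $w_\mu$ has range and scattering length both of order $\mu$, so two-body correlations cannot be treated perturbatively. To circumvent this, introduce an auxiliary softer-scaled potential $\wb$ of the form \eqref{eqn:small:beta} with a parameter $\bo<1$ chosen close to $1$, together with a zero-energy scattering solution $\fb$ of an equation schematically relating $\wb$, $w_\mu$ and a further softer potential $\gb$. The algebraic identity allows one to rewrite $w_\mu$ as $\gb$ plus terms involving $\Delta\fb$; integrating by parts transfers the singularity onto $\nabla\fb$, producing correction terms controllable via the kinetic energy (bounded by $\mathfrak{e}^2(t)$) together with the approximate correlation structure $\psi^{N,\varepsilon}\approx \prod_{i<j}(1-\fb(z_i-z_j))\,\phe^{\otimes N}$. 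The remaining effective interaction $\gb$ carries the softer $\bo$-scaling already handled in \cite{NLS}, and the one-dimensional coupling $\int\gb \cdot \int|\chi|^4$ converges to $8\pi a\int|\chi|^4 = b$ as $\bo\to 1$ by \cite[Lemma~A.1]{erdos2007}. The hardest step will be bounding the cubic and quartic $\nabla\fb$-errors uniformly along admissible sequences: the genuinely three-dimensional scattering structure must be reconciled with the quasi-one-dimensional shape of the condensate, and the parameters $\bo$ and $\delta$ have to be tuned so that the transverse-excitation error and the correlation-correction error vanish simultaneously.
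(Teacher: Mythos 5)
You have the right architecture---Pickl's counting method, a Grönwall inequality, reduction to reduced-density-matrix convergence via the equivalence lemma, and the use of a zero-energy scattering solution to account for short-range correlations in the Gross--Pitaevskii regime---but two essential ingredients are wrong or missing, and the argument would not close.

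\emph{The functional is not modified.} The decisive step in the paper is not to ``rewrite $w_\mu$'' inside the estimate of $\tfrac{\d}{\d t}\alpha^<$, but to subtract the correlation correction from the functional itself, defining $\alpha_\xi(t)=\alpha^<_\xi(t)-N(N-1)\Re\llr{\psi,\gbot\,\hat{r}\,\psi}$. Only $\tfrac{\d}{\d t}\alpha_\xi$ (not $\tfrac{\d}{\d t}\alpha^<$) satisfies a Grönwall inequality: the time derivative of the correction term, through the identity $-\Delta\fb+\tfrac12(w_\mu-U_\bo)\fb=0$, produces exactly the cancellation that replaces $w_\mu$ by $U_\bo\fb$, whose $L^1$-norm reproduces $8\pi a_\mu$. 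Without this subtraction, $\tfrac{\d}{\d t}\alpha^<$ retains the term $N(N-1)\Im\llr{\psi,w_\mu^{(12)}\hat{m}\,\psi}$, which is genuinely uncontrollable at $\beta=1$ because $\int w_\mu$ overshoots the scattering length by an $\mathcal{O}(1)$ amount; a manipulation carried out only inside the derivative cannot produce the cancellation, since the reorganisation does not commute with $\tfrac{\d}{\d t}$.

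\emph{The weight is wrong and the scattering construction is misdescribed.} Your functional uses the linear weight $\tfrac1N\sum_j q_j^{\phe}$, i.e.\ $m(k)=k/N$. The paper's weight (Definition~\ref{def:alpha}) is $m(k)=\sqrt{k/N}$ for $k\geq N^{1-2\xi}$, linear only below. This is not cosmetic: one needs $|m(k)-m(k+1)|\ls(kN)^{-1/2}$ to obtain $\norm{\hat{m}^a q_1\psi}\ls N^{-1}$ (Lemma~\ref{lem:l:3}); the linear weight gives $|m(k)-m(k+1)|=1/N$ and therefore only $\norm{\hat{m}^a q_1\psi}\ls N^{-1/2}$, losing a factor $\sqrt{N}$ that is fatal in the singular scaling. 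Further, the auxiliary potential is not $\wb$ from \eqref{eqn:small:beta}; it is the step potential $U_\bo$ of Definition~\ref{def:U}, tuned so that the scattering length of $w_\mu-U_\bo$ vanishes. The object $\gb=1-\fb$ is not ``a softer potential'' but the complement of the scattering solution $\fb$ of $w_\mu-U_\bo$; the correlated trial state carries a factor $\prod\fb$, not $\prod(1-\fb)$; and the coupling $\mu^{-1}\int U_\bo\fb\int|\chi|^4\to b$ holds as $\mu\to0$ at \emph{fixed} $\bo\in(\tfrac56,\tfrac{2}{2+\delta})$, not ``as $\bo\to1$''.
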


\begin{remark}
\remit{
	\item Assumption~A4 differs from the corresponding statement in \cite{NLS} in that we impose a weaker admissibility condition than the condition $\varepsilon^2/\mu\rightarrow0$ from \cite{NLS}, which cannot hold for $\beta=1$.
	\item A2 is fulfilled, e.g., by a harmonic potential or by any smooth potential with at least one bound state below the essential spectrum. According to \cite[Theorem~1]{griesemer2004}, A2 implies that the ground state $\chi$ of $-\Delta_y+V^\perp$ decays exponentially. Thus, $\chi^\varepsilon$ is indeed exponentially localised on a scale of order $\varepsilon$.
The regularity condition on $\Vp(t,(\cdot,0))$ is needed to ensure the global existence of $H^2$ solutions of \eqref{NLS} (see \cite[Appendix~A]{NLS}).
Due to assumptions A1--A3, the operators $H(t)$ are for any $t\in\R$ self-adjoint on the time-independent domain $\mathcal{D}(H)$ and generate a strongly continuous unitary evolution on $\mathcal{D}(H)$. 
	\item In \cite{lieb2004}, it is shown that the ground state of $H(0)$ with a homogeneous external field $\Vp(z,0)$ satisfies assumption A4 (Theorem~2.2 and Theorem~5.1). Note that to observe non-trivial dynamics in this case, it is important that we admit a time-dependent external potential $\Vp$.
	\label{rem:LSSY}
	\item Our proof yields an estimate of the rate of convergence of \eqref{T1}, which is given in Corollary~\ref{cor:rates}. This rate is not uniform in time but, contrarily, depends on it in form of a double exponential. 
	\item Our result is restricted to sequences where $\varepsilon^\delta\ll N^{-1}$ for some $\delta\in(0,\frac25)$ (Assumption~A4). Similar conditions appear also in comparable works \cite{NLS, chen2013, chen2017} for $\beta<1$. However, for the ground state analysis in~\cite{lieb2004}, no analogue of this admissibility condition is required. On a formal level, together with the result of the strong confinement limit of the three-dimensional NLS in~\cite{abdallah2005_2}, this suggests that our dynamical result could be extended to hold without imposing a condition on the rate of convergence of $\varepsilon$. As remarked before, in our proof this condition is crucial to control the transverse excitations by an a priori energy estimate. A possible approach to weaken the condition might be to replace the transverse ground state $\chie$ of the linear operator $-\Delta_y+\tfrac{1}{\varepsilon^2} V^\perp(\frac{\cdot}{\varepsilon})$ by the $x$-dependent ground state of the nonlinear functional 
	$$\lr{\tilde{\chie}(x,\cdot),\left(-\Delta_y+\tfrac{1}{\varepsilon^2}V^\perp(\tfrac{\cdot}{\varepsilon})+\varepsilon^2 8\pi a|\Phi(x)|^2|\tilde{\chie}(x,\cdot)|^2\right)\tilde{\chie}(x,\cdot)}_{L^2(\R^2)} $$
and to prove the smallness of transverse excitations by  adiabatic-type arguments.\label{rem:admissibility}
	\item We expect that our proof can be extended to cover systems that are trapped to quantum waveguides with non-trivial geometry as in~\cite{keler2016}. However, this is not straightforward as a Taylor expansion of the interaction was used in~\cite{keler2016} and the kinetic term now includes an additional vector potential due to the twisting of the waveguide.
	\item Further, we expect the same strategy to be applicable to one-dimensional confining potentials resulting in effectively two-dimensional condensates. The solution of this problem is not obvious since many of our estimates depend on the dimension and cannot be directly transferred. For instance, Green's function is different in two dimensions and the ratio of $N$ and $\varepsilon$ changes (the corresponding effective range is $\mu_\mathrm{2d}=\varepsilon/N$), making some key estimates invalid.
}
\end{remark}

\section{Proof of the main theorem}\label{sec:proof}
To prove Theorem~\ref{thm}, we must show that the expressions in \eqref{T1} and \eqref{T2} vanish in the limit $(N,\varepsilon)\rightarrow(\infty,0)$ for suitable initial data. Instead of directly estimating these differences, we follow the approach of Pickl  \cite{pickl2008,pickl2010,pickl2011,pickl2015}. As one crucial  first step, we  define a functional 
$$\alpha_\xi^<:\R\times L^2(\R^{3N}) \times L^2(\R^3)\to \R\,, \quad(t,\psi^{N,\varepsilon},\phe)\mapsto\alpha_\xi^<(t,\psi^{N,\varepsilon},\phe)$$
measuring the part of $\psi^{N,\varepsilon}$ which has not condensed into $\phe$.
This functional is chosen in such a way that $\alpha_\xi^<(t,\psi^{N,\varepsilon}(t),\phe(t))\rightarrow0$ is equivalent to \eqref{T1} and \eqref{T2}.
While we  roughly follow  \cite{pickl2015}, the strong asymmetry of the setup and the more singular scaling of the interaction require a non-trivial adaptation of the formalism. We also heavily rely  on the result in \cite{NLS} for the case $\beta\in(0,1)$.
The functional $\alpha^<_\xi$ is constructed as follows:

\begin{definition}\label{def:p}
Let $\varphi\in L^2(\R^3)$ be of the form $\varphi(z) =\Phi(x)\chi(y)$ for some $\Phi\in L^2(\R)$ and $\chi\in L^2(\R^2)$ and let
$$p^\varphi:=|\varphi\rangle\langle\varphi| \quad\mbox{and}\quad q^\varphi:=\mathbbm{1}-p^\varphi
\quad \in\mathcal{L}\left(L^2(\R^3)\right).$$
Further, define the orthogonal projections on $L^2(\R^3)$ 
\begin{align*}
\pp&:=|\Phi\rangle\langle\Phi| \otimes\mathbbm{1}_{L^2(\R^2)},  &  \qp&:=\mathbbm{1}_{L^2(\R^3)}-\pp,\\
p^\chi&:=\mathbbm{1}_{L^2(\R)}\otimes|\chi\rangle\langle\chi|  , &  q^\chi&:=\mathbbm{1}_{L^2(\R^3)}-p^\chi\,.
\end{align*}
Note that $p^\varphi=\pp p^\chi$, $q^{\Phi/\chi}q^\varphi=q^{\Phi/\chi}$, $q^\varphi=q^\chi+\qp p^\chi$ and $p^{\Phi/\chi}q^\varphi=p^{\Phi/\chi}q^{\chi/\Phi}$. \\[1mm]
These one-body projections are lifted to many-body projections on $L^2(\R^{3N})$ by defining   
$$p_j^\varphi:=\underbrace{\mathbbm{1}\otimes\cdots\otimes\mathbbm{1}}_{j-1}\otimes\, p^\varphi\otimes \underbrace{\mathbbm{1}\otimes\cdots\otimes\cdots\mathbbm{1}}_{N-j} \quad\text{and}\quad q^\varphi_j:=\mathbbm{1}-p^\varphi_j\quad\mbox{ for  $j\in\{1,\dots,N\}$},
$$
and analogously  $\pp_j$, $\qp_j$, $p^\chi_j$ and $q^\chi_j$. We will also write $p^\varphi_j=|\varphi(z_j)\rangle\langle\varphi(z_j)|$. \\[1mm]
Finally, for $0\leq k\leq N$, define the symmetrised many-body projections  
$$ P_k^\varphi=\big(q_1^\varphi\cdots q_k^\varphi p_{k+1}^\varphi\cdots p_N^\varphi\big)_\mathrm{sym}:=\sum\limits_{\substack{J\subseteq\{1,\dots,N\}\\|J|=k}}\prod\limits_{j\in J}q_j^\varphi\prod\limits_{l\notin J}p_l^\varphi $$
and $P^\varphi_k=0$ for $k<0$ and $k>N$. 

\end{definition}

\begin{definition}\label{def:hat}
Let $f: \mathbb{N}_0\rightarrow\R_0^+$ and $d\in\mathbb{Z}$. Using the projections $P_k^\varphi$ from Definition~\ref{def:p}, we define the operators $\hat{f}^\varphi,\hat{f}_d^\varphi\in\mathcal{L}\left(L^2(\R^{3N})\right)$ by 
$$\hat{f}^\varphi:= \sum\limits_{k=0}^N f(k)P_k^\varphi, \qquad \hat{f}_d^\varphi:=\sum\limits_{j=-d}^{N-d} f(j+d)P_j^\varphi.$$
\end{definition}

\begin{definition}\label{def:alpha} 
For $\xi\in(0,\frac12)$, define the functional $$\alpha^<_\xi:\R\times L^2(\R^{3N})\times L^2(\R^3 )\supset \R\times\mathcal{D}(H^{\frac{1}{2}})\times (H^1(\R )\times L^2(\R^2))\rightarrow \R$$ by
$$\alpha^<_\xi(t,\psi,\varphi=\Phi\chi):=\llr{\psi,\hat{m}^\varphi\psi}+\left|E^{\psi}(t)-\mathcal{E}^{\Phi}(t)\right|,$$
where the weight function $m:\mathbb{N}_0\rightarrow\R^+_0$ is given by
$$ m(k):=\begin{cases}
	\sqrt{\frac{k}{N}} & \text{for } k\geq N^{1-2\xi},\vspace{0.1cm}\\
	\tfrac12\left(N^{-1+\xi}k+N^{-\xi}\right) & \text{else}.\end{cases}$$ 
\end{definition}

For simplicity, we will not explicitly indicate the $\xi$-dependence of the weight $m$ in the notation. For the proof of Theorem~\ref{thm}, we will choose some fixed $\xi$ within a suitable range.

The operators $P_k^\varphi$ project onto states with $k$ particles outside the condensate described by $\varphi$. Consequently, $\llr{\psi,\hat{m}^\varphi\psi}$ is a weighted measure of the relative number of such particles in the state $\psi$.
Note that the weight function $m$ is increasing and $m(0)\approx0$, hence only the parts of $\psi$ outside the condensate contribute significantly to $\llr{\psi,\hat{m}^\varphi\psi}$.
For a sequence $(\psi^N)_{N\in\mathbb{N}}$ of $N$-body wavefunctions,~\cite[Lemma~3.2]{NLS}\footnote{Lemma~3.2 in \cite{NLS} collects different statements somewhat scattered in the literature. The respective proofs can be found e.g.~in \cite{keler2016,knowles2010,pickl2011,pickl2015,rodnianski2009}.}
implies that $\llr{\psi^N,\hat{m}^\varphi\psi^N}\rightarrow0$ as $N\to\infty$ is equivalent to the convergence of the one-particle reduced density matrix of $\psi^N$ to $|\varphi\rangle\langle\varphi|$ in trace norm or in operator norm.
Further, convergence of the one-particle reduced density matrix implies convergence of
all $k$-particle reduced density matrices.
This is summarised in the following lemma:

\begin{lem}\label{lem:equivalence}
Let $t\in\R$, $k\in\mathbb{N}$, $\varphi=\Phi\chi\in H^1(\R)\times L^2(\R^2)$ with $\Phi$ and $\chi$ normalised. Let $(\psi^N)_{N\in\mathbb{N}}\subset L^2(\R^{3N})$ be a sequence of normalised $N$-body wavefunctions and denote by $\gamma^{(k)}_{\psi^N}$ the $k$-particle reduced density matrix of $\psi^N$. Then the following statements are equivalent:
\lemit{
	\item $\lim\limits_{N\to\infty}\alpha^<_\xi(t,\psi^N,\varphi)=0$ for some $\xi\in(0,\frac12)$,
	\item $\lim\limits_{N\to\infty}\alpha^<_\xi(t,\psi^N,\varphi)=0$ for any $\xi\in(0,\frac12)$,
	\item $\lim\limits_{N\to\infty}\Tr\Big|\gamma^{(k)}_{\psi^N}-|\varphi\rangle\langle\varphi|^{\otimes k}\Big|=0$ \quad and \;
$\lim\limits_{N\to\infty}\left|E^{\psi^N}(t)-\mathcal{E}^{\Phi}(t)\right|=0 $\quad for all $k\in\mathbb{N}$,
	\item $\lim\limits_{N\to\infty}\Tr\Big|\gamma^{(1)}_{\psi^N}-|\varphi\rangle\langle\varphi|\Big|=0$ \quad and \;
$\lim\limits_{N\to\infty}\left|E^{\psi^N}(t)-\mathcal{E}^{\Phi}(t)\right|=0.$
}
The relation between the rates of convergence of $\alpha_\xi^<(t,\psi^N,\varphi)$ and $\gamma^{(1)}_{\psi^N}$ is 
$$
\Tr\left|\gamma_{\psi^N}^{(1)}-|\varphi\rangle\langle\varphi|\right|\leq \sqrt{8\alpha_\xi^<(t,\psi^N,\varphi)},$$
$$
\alpha_\xi^<(t,\psi^N,\varphi)\leq \left|E^{\psi^N}(t)-\mathcal{E}^{\Phi}(t)\right|+\sqrt{\Tr\left|\gamma_{\psi^N}^{(1)}-|\varphi\rangle\langle\varphi|\right|}+\tfrac12 N^{-\xi}.
$$
\end{lem}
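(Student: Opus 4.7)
The plan is to separate the two non-negative contributions to $\alpha_\xi^<(t,\psi^N,\varphi)$: the energy term $|E^{\psi^N}(t)-\mathcal{E}^\Phi(t)|$ appears verbatim in each of (a), (b), (c) and (d), so the statement reduces to proving that, for any fixed $\xi\in(0,\tfrac12)$,
\begin{equation*}
\llr{\psi^N,\hat{m}^\varphi\psi^N}\xrightarrow{N\to\infty}0\quad\Longleftrightarrow\quad\Tr\big|\gamma^{(1)}_{\psi^N}-|\varphi\rangle\langle\varphi|\big|\xrightarrow{N\to\infty}0,
\end{equation*}
and that the latter propagates to trace-norm convergence of all $k$-particle reduced density matrices to $|\varphi\rangle\langle\varphi|^{\otimes k}$. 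Since $\alpha^<_\xi$ is a sum of two non-negative quantities, both must vanish in the limit, so this splitting loses nothing.

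First I would establish the two elementary comparison inequalities
\begin{equation*}
n(k)\;\leq\;m(k)\;\leq\;\sqrt{n(k)}+\tfrac{1}{2}N^{-\xi},\qquad n(k):=k/N,\quad 0\leq k\leq N,
\end{equation*}
by case distinction at the threshold $k=N^{1-2\xi}$: on the upper regime $m=\sqrt{n}$, while on the lower regime both bounds reduce to elementary calculus (using $\sqrt{x}\geq x$ on $[0,1]$ and $\tfrac12 N^{-1+\xi}k\leq \sqrt{k/N}$ for $k\leq 4N^{1-2\xi}$). By bosonic symmetry of $\psi^N$ one has
\begin{equation*}
\llr{\psi^N,\hat{n}^\varphi\psi^N}=\tfrac{1}{N}\sum_{j=1}^{N}\llr{\psi^N,q_j^\varphi\psi^N}=\Tr\big(q^\varphi\gamma^{(1)}_{\psi^N}\big),
\end{equation*}
and by standard manipulations of rank-one trace-class operators $\Tr(q^\varphi\gamma^{(1)}_{\psi^N})\leq\Tr|\gamma^{(1)}_{\psi^N}-|\varphi\rangle\langle\varphi||\leq 2\sqrt{2\,\Tr(q^\varphi\gamma^{(1)}_{\psi^N})}$. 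Combining these ingredients gives the first quantitative bound from the lower comparison (via the right-hand trace-norm inequality), and the second quantitative bound from the upper comparison together with Jensen's inequality $\llr{\psi^N,(\hat{n}^\varphi)^{1/2}\psi^N}\leq\sqrt{\llr{\psi^N,\hat{n}^\varphi\psi^N}}$, which is immediate from the mutual orthogonality of the $P_k^\varphi$. Adding $|E^{\psi^N}(t)-\mathcal{E}^\Phi(t)|$ back in then yields the equivalence (a) $\Leftrightarrow$ (b) $\Leftrightarrow$ (d).

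For (d) $\Rightarrow$ (c) I would invoke the standard fact that trace-norm convergence of the one-particle reduced density matrix of a bosonic state to a rank-one projector implies the analogous convergence of all $k$-particle reduced density matrices; the reverse implication is immediate by taking partial traces. The entire scheme, including this rank-one-to-$k$-particle propagation, is exactly what is collected in \cite[Lemma~3.2]{NLS} with explicit reference to the original sources, so there is no genuine obstacle here. The main point of the present lemma over that reference is merely to verify that the specific weight $m$ and its $\xi$-dependence are harmless in the limit, which is precisely what the two comparison inequalities above ensure.
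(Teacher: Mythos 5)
Your proposal is correct and reproduces, in a self-contained way, exactly the argument that the paper delegates to \cite[Lemmas~3.2 and 3.3]{NLS}: the pointwise comparison $n(k)\le m(k)\le\sqrt{n(k)}+\tfrac12 N^{-\xi}$, the identification $\llr{\psi^N,\hat{n}^\varphi\psi^N}=\Tr(q^\varphi\gamma^{(1)}_{\psi^N})$ via symmetry, the two-sided rank-one trace-class estimate $\Tr(q^\varphi\gamma^{(1)})\le\Tr|\gamma^{(1)}-|\varphi\rangle\langle\varphi||\le\sqrt{8\,\Tr(q^\varphi\gamma^{(1)})}$, Jensen on $\hat{n}^{1/2}$, and the standard propagation from the one-particle to all $k$-particle reduced density matrices. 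Since the paper's own proof is a bare citation, your version supplies the details of that same route rather than an alternative one.
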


\begin{proof}
\cite{NLS}, Lemma~3.2 and Lemma~3.3.
\end{proof}

To prove Theorem~\ref{thm}, we evaluate the functional $\alpha_\xi^<$ on the solution  $\psi^{N,\varepsilon}(t)$ of \eqref{SE} with initial datum $\psi^{N,\varepsilon}_0$ given by assumption A4, the solution $\Phi(t)$ of the Gross--Pitaevskii equation \eqref{NLS} with initial datum $\Phi_0$ from A4, and the ground state $\chie$ of $-\Delta_y+\tfrac{1}{\varepsilon^2}V^\perp(\tfrac{y}{\varepsilon})$ from A2. 
For simplicity, we will abbreviate
$$\alpha_\xi^<(t):=\alpha_\xi^<\Big(t,\,\psi^{N,\varepsilon}(t),\,\phe(t) = \Phi(t)\chie \Big).
$$
Due to Lemma~\ref{lem:equivalence}, $\alpha^<_\xi(t)\rightarrow0$ is equivalent to \eqref{T1} and \eqref{T2}; conversely, \eqref{A4:1} and \eqref{A4:2} imply $\alpha^<_\xi(0)\rightarrow 0$. Hence, to prove Theorem~\ref{thm}, it suffices to show the convergence of $\alpha^<_\xi(t)\rightarrow0$ for all $t\in\R$.

In \cite{NLS}, the functional $\alpha_\xi^<(t)$ is used as counting measure for the interaction \eqref{eqn:small:beta} scaling with $\beta\in(0,1)$. For the proof in that case, one   first shows an estimate of the kind $|\frac{\d}{\d t}\alpha_\xi^<(t)|\ls\alpha_\xi^<(t)+\smallO{1}$ and subsequently applies Grönwall's inequality, using that $\alpha_\xi^<(0)\rightarrow0$.

For the Gross--Pitaevskii scaling of the interaction, we cannot simply estimate $\frac{\d}{\d t}\alpha_\xi^<(t)$ for $\beta=1$ because this derivative is not controllable with the methods used in \cite{NLS}.
To understand why this is the case, let us first give a heuristic argument why the NLS equation with coupling parameter $\bb=\norm{w}_{L^1(\R^3)}\int_{\R^2}|\chi(y)|^4\d y$ is the right effective description for $\beta\in(0,1)$ but not for $\beta=1$. To this end, we compute the renormalised energy per particle with respect to the trial state $\psi_\mathrm{prod}(t,z_1\mydots z_N)=\phe(t,z_1)\phe(t,z_2)\cdots\phe(t,z_N)$, i.e.~the state where all particles are condensed into the single-particle orbital $\phe(t)$. For simplicity, we will ignore the external potential $\Vp$ and drop the time-dependence of $\phe$ in the notation. Making use of the fact that $\left(-\Delta_y+\tfrac{1}{\varepsilon^2}V^\perp(\tfrac{y}{\varepsilon})-\tfrac{E_0}{\varepsilon^2}\right)\chie(y)=0$ and that $\phe$ is normalised, we obtain
\begin{eqnarray*}
&&\hspace{-0.7cm}\tfrac{1}{N}\big\llangle\psi_\mathrm{prod},H\psi_\mathrm{prod}\big\rrangle-\tfrac{E_0}{\varepsilon^2}\\
&&\hspace{-0.5cm}=\lr{\Phi(x_1),(-\partial_{x_1}^2)\Phi(x_1)}+\tfrac{N-1}{2N}\int\d z_1|\Phi(x_1)|^2|\chi(y_1)|^2\int\d z|\Phi(x_1-\mu^\beta x)|^2|\chi(y_1-\tfrac{\mu^\beta}{\varepsilon}y)|^2w(z)\\
&&\hspace{-0.5cm}\rightarrow \lr{\Phi(x_1),\left(-\partial_{x_1}^2+\tfrac12\left(\int|\chi(y_1)|^4\d y_1 \int w(z)\d z \right)|\Phi(x_1)|^2\right)\Phi(x_1)}\;=\;\mathcal{E}^\Phi_{\beta\in(0,1)}
\end{eqnarray*}
in the limit $(N,\varepsilon)\rightarrow(\infty,0)$, where we have chosen the limiting sequence in such a way that $\tfrac{\mu^\beta}{\varepsilon}\rightarrow0$.\footnote{This condition in \cite{NLS}, called \textit{moderate confinement}, ensures that the extension $\varepsilon$ is always large compared to the range $\mu^\beta=(\tfrac{N}{\varepsilon^2})^{-\beta}$ of the interaction $\wb$. As $\tfrac{\mu^\beta}{\varepsilon}=N^{-\beta}\varepsilon^{2\beta-1}$, this is a restriction only for $\beta<\frac12$; in particular, it is satisfied for $\beta=1$.} Here, $\mathcal{E}^\Phi_{\beta\in(0,1)}$ is the effective energy per particle for $\beta\in(0,1)$, i.e.~it equals \eqref{E^Phi} with $\Vp=0$ and   $b$ replaced by $\bb$.

For the Gross--Pitaevskii scaling $\beta=1$, this very argument yields the same one-particle energy $\mathcal{E}^\Phi_{\beta\in(0,1)}$, which differs from the correct expression \eqref{E^Phi} by an error of $\mathcal{O}(1)$ as $\bb\neq b$. 
The reason for this error is that for $\beta=1$, the scattering length $a_\mu$ of $w_\mu$ is of the same order as its range $\mu$, i.e.~the inter-particle correlations live on the scale of the interaction and thus decrease the energy per particle by an amount of $\mathcal{O}(1)$.

Hence, an initial state $\psi^{N,\varepsilon}_0$ that is a pure product state is excluded by assumption A4.
This reasoning suggests to include the pair correlations in our trial function. To do so, let us first recall the definition of the scattering length: the zero energy scattering equation for the interaction $  w_\mu = \mu^{-2} w(\cdot/\mu)$ is given by 
\begin{equation}\label{eqn:scat}
\begin{cases}
	\left(-\Delta+\tfrac12 w_\mu(\z)\right)j_\mu(\z)=0 	& \text{ for } |\z|<\infty,\\
	\;j_\mu(\z)\rightarrow 1														& \text{ as } |\z|\rightarrow\infty.
\end{cases}
\end{equation}
By \cite[Theorems C.1 and C.2]{LSSY}, the unique solution $j_\mu\in\mathcal{C}^1(\R^3)$ of \eqref{eqn:scat} is spherically symmetric, non-negative, non-decreasing in $|\z|$ and
\begin{equation}\label{eqn:j}
\begin{cases}
	j_\mu(\z)=1-\frac{a_\mu}{|\z|} & \text{ for }|\z|>\mu,\vphantom{\bigg(}\\
	j_\mu(\z)\geq 1-\frac{a_\mu}{|\z|} & \text{ else. }
\end{cases}
\end{equation} 
The number  $a_\mu\in\R$ is by definition  the scattering length of $w_\mu$. Equivalently,
\begin{equation}\label{eqn:integral:scat}
8\pi a_\mu=\int\limits_{\R^3}w_\mu(\z)j_\mu(\z)\d\z.
\end{equation}
By the scaling behaviour of \eqref{eqn:scat}, we obtain
\begin{equation*}
\mu^{-2}\left(-\Delta+\tfrac12 w(\z)\right)j_\mu(\mu\z)=0
\end{equation*}
 for $|\z|<\infty$, hence $j_\mu(z) = j_1(  z/\mu)$ and 
\begin{equation}\label{eqn:a^N,eps}
a_\mu=\mu a,
\end{equation}
where $a$ denotes the scattering length of the unscaled interaction $w= w_1$. 
From \eqref{eqn:j} and \eqref{eqn:a^N,eps}, one immediately concludes that $j_\mu$ differs from one by an error of $\mathcal{O}(1)$ on $\supp w_\mu$. Hence, \eqref{eqn:integral:scat} implies that the first order Born approximation $\tfrac{1}{8\pi}\int w_\mu(z)\d z$ is no valid approximation to the scattering length $a_\mu$ in the Gross--Pitaevskii regime, whereas this approximation was justified for interactions $w_\beta$ as in \eqref{eqn:small:beta} with $\beta\in(0,1)$.

For practical reasons, we will in the following consider a function $\fb$ which asymptotically coincides with $j_\mu$ on $\supp w_\mu$ but is defined in such a way that $\fb(z)=1$ for $|z|$ sufficiently large. 
This is achieved by constructing a potential $U_\bo$ in such a way that the scattering length of $w_\mu-U_\bo$ equals zero; $\fb$ is then defined as the scattering solution of $w_\mu-U_\bo$. The advantage of using $\fb$ instead of $j_\mu$ is that $\nabla\fb$ and $1-\fb$ have compact support, which is not true for $j_\mu$. 

\begin{definition} \label{def:U}
Let $\bo\in(\tfrac13,1)$. Define
$$U_\bo(\z):=\begin{cases}
	\mu^{1-3\bo}a 	& \text{ for } \mu^\bo<|\z|<R_\bo,\\
	0				& \text{ else,}\end{cases}$$
where $R_\bo$ is the minimal value in $(\mu^\bo,\infty]$ such that the scattering length of $w_\mu-U_\bo$ equals zero. 
\end{definition}
In Section~\ref{subsec:mictrostructure}, we show by explicit construction that a suitable $R_\bo$ exists and that it is of order $\mu^\bo$. 
We will abbreviate
$$U_\bo^{(ij)}:=U_\bo(\z_i-\z_j)\quad \text{ and }\quad w_\mu^{(ij)}:=w_\mu(z_i-z_j).$$

\begin{definition}\label{def:scat}
Let $\fb\in\mathcal{C}^1(\R^3)$ be the solution of
\begin{equation}\label{eqn:scat:f}
\begin{cases}
\Big(-\Delta+\frac{1}{2}\left(w_\mu(\z)-U_{\bo}(\z)\right)\Big)\fb(\z)=0 \;& \text{for }|\z|< R_{\bo},\vphantom{\Big)}\\
\;\fb(\z)=1 & \text{for }|\z|\geq R_{\bo}.
\end{cases}
\end{equation}
Further, define
$$\gb:=1-\fb.$$

\end{definition}
We will in the sequel abbreviate
$$
\gbij:=\gb(\z_i-\z_j) \quad \text{ and } \quad \fbij:=\fb(\z_i-\z_j).
$$
Definitions \ref{def:U} and \ref{def:scat} imply in particular that 
\begin{equation}\label{eqn:scat(w-U)=0}
\int\limits_{\R^3}\left(w_\mu(\z)-U_\bo(\z)\right)\fb(\z)\d\z=0.
\end{equation}

We now repeat the above heuristic estimate for the renormalised energy per particle with the trial function\footnote{Note that this trial function is not normalised. However, a reasoning similar to Lemma~\ref{lem:g} leads to the estimate  $0\leq 1-\norm{\psi_\mathrm{cor}}^2\ls N\mu^{2\bo}$. As $\bo>\frac13$, the normalisation error is thus irrelevant for our heuristic argument.} $\psi_\mathrm{cor}(z_1\mydots z_N):=\prod_{k=1}^N\phe(z_k)\prod_{1\leq l<m\leq N}\fb(z_l-z_m)$, where the product state is overlaid with a microscopic structure characterised by $\fb$. For $\Vp=0$, this yields 
\begin{align*}
\tfrac{1}{N}&\llr{\psi_\mathrm{cor},H\psi_\mathrm{cor}}-\tfrac{E_0}{\varepsilon^2}\\
=&\;\llr{\prod_{k\geq1}\phe(z_k)\prod_{l<m}\fblm,(-\partial_{x_1}^2\phe(z_1))\prod_{k'\geq2}\phe(z_k')\prod_{l'<m'}\fblms}\\
&+(N-1)\llr{\prod_{k\geq1}\phe(z_k)\prod_{l<m}\fblm,\left(-\Delta_1\fbot+\tfrac12 w_\mu^{(12)}\fbot\right)\prod_{k'\geq1}\phe(z_k')\prod_{\substack{l'<m'\\(l',m')\neq (1,2)}}\fblms}\\
&+2(N-1)\llr{\prod_{k\geq1}\phe(z_k)\prod_{l<m}\fblm,\left(\nabla_1\phe(z_1)\cdot\nabla_1\fbot\right)\prod_{k'\geq2}\phe(z_k')\prod_{\substack{l'<m'\\(l',m')\neq (1,2)}}\fblms}\\
&+(N-1)(N-2)\llr{\prod_{k\geq1}\phe(z_k)\prod_{l<m}\fblm,\left(\nabla_1\fbot\cdot\nabla_1\fboth\right)\prod_{k'\geq1}\phe(z_k')\prod_{\substack{l'<m'\\(l',m')\notin\\\{(1,2),(1,3)\}}}\fblms}.
\end{align*}
Very roughly speaking, we may substitute $\fb\approx1$ unless we integrate against $w_\mu$, which is peaked on the set where $\fb\not=1$, or apply the Laplacian to $\fb$.
For the last line, also note that $\supp\nabla\fb\subseteq B_{R_\bo}(0)$ with $R_\bo=\mathcal{O}(\mu^\bo)$ (Lemma~\ref{lem:scat}), which is for $\bo>\tfrac13$ negligible compared to the mean inter-particle distance $\mu^\frac13$. Thus, the measure of the set 
$\supp\nabla_1\fb(\cdot-z_2)\cap\supp\nabla_1\fb(\cdot-z_3)$ vanishes sufficiently fast in the limit 
 $(N,\varepsilon)\rightarrow(\infty,0)$. 
For the second line, note that \eqref{eqn:scat:f} implies $-\Delta_1\fbot+\tfrac12 w_\mu^{(12)}\fbot=\tfrac12 U_\bo^{(12)}\fbot$. Besides, $1\geq \fb\geq 1- a\mu^{1-\bo}$ on the support of $U^\bo$ and $\fb\approx j_\mu$ on the support of $w_\mu$ (Lemma~\ref{lem:scat}). Hence $\norm{\fb U_\bo\fb}_{L^1(\R^3)}\approx\norm{U_\bo\fb}_{L^1(\R^3)}\approx\int_{\R^3} w_\mu(z)j_\mu(z)\d z=8\pi\mu a$ according to \eqref{eqn:scat(w-U)=0} and \eqref{eqn:a^N,eps}. Thus,  the second line gives to leading order
$$ \tfrac{N-1}{2}\int\d z_1|\phe(z_1)|^2\int\d z|\phe(z_1-z)|^2 U_\bo(z)\fb(z)
\rightarrow 4\pi a\int\d x_1|\Phi(x_1)|^4 \int\d y_1\chi(y_1)|^4\,,$$
and  the renormalised energy per particle is consequently given by the correct expression
$$\tfrac{1}{N}\llr{\psi_\mathrm{cor},H\psi_\mathrm{cor}}-\tfrac{E_0}{\varepsilon^2}
\rightarrow\lr{\Phi(x_1),\left(-\partial_{x_1}^2+\tfrac12\left(8\pi a\int|\chi(y)|^4\d y \right)|\Phi(x_1)|^2\right)\Phi(x_1)}\,.$$

This heuristic argument indicates that the state of the system is asymptotically close to $\psi_\mathrm{cor}$. We will therefore modify the counting functional such that 
$p_1^\varphi p_2^\varphi\cdots p_N^\varphi=|\psi_\mathrm{prod}\rangle\langle\psi_\mathrm{prod}|$
is replaced by $|\psi_\mathrm{cor}\rangle\langle\psi_\mathrm{cor}|,$
i.e.~$P_0$ is replaced by the projection onto the product state overlaid with a microscopic structure minimising the energy.
We substitute in the first term of $\alpha_\xi^<(t)$ 
\begin{equation}\label{alpha:replaced}
\llr{\psi,\hat{m}^{\phe}\psi}
\mapsto\llr{\psi,\prod\limits_{k<l}\fblk\,\hat{m}^{\phe}\prod\limits_{r<s}\fbrs\psi}
\approx\llr{\psi,\hat{m}^{\phe}\psi}-N(N-1)\Re\llr{\psi,\gbot\hat{m}^{\phe}\psi}, 
\end{equation}
where we have used the symmetry of $\psi^{N,\varepsilon}(t)\equiv\psi$ and expanded the products by writing $\fb=1-\gb$ and keeping only the terms which are at most linear in $\gb$. 

This correction in the functional effectively leads to the replacement of $w_\mu$ by $U_\bo\fb$ in the time derivative of the new functional. 
The underlying physical idea is that the low energy scattering is essentially described by the $s$-wave scattering length, hence the scattering at $w_\mu$ is to leading order equivalent to the scattering at $U_\bo\fb$. The terms containing $U_\bo\fb$ can be controlled by the result from \cite{NLS}; the remainders from this substitution must be estimated additionally.
To understand how the substitution works, let us for simplicity consider the case $N=2$ with $\Vp=0$. The full argument is given in Section~\ref{subsec:prop:dt_alpha:GP}. Abbreviating $Z^{(12)}:=w_\mu^{(12)}-b(|\Phi(x_1)|^2+|\Phi(x_2)|^2)$, we obtain
\begin{eqnarray*}
\tfrac{\d}{\d t}\llr{\psi,\hat{m}^{\phe}\psi}
&=&\i\llr{\psi,[Z^{(12)},\hat{m}^{\phe}]\psi}=-2\Im\llr{\psi, Z^{(12)}\hat{m}^{\phe}\psi},\\
-2\tfrac{\d}{\d t}\Re\llr{\psi,\gbot\hat{m}^{\phe}\psi}
&=&2\Im\Big\llangle\psi,\Big(\gbot[Z^{(12)},\hat{m}^{\phe}]+(w_\mu^{(12)}-U_\bo^{(12)})\fbot\hat{m}^{\phe}\\
&&\hspace{6cm}+4\nabla_1\fbot\cdot\nabla_1\hat{m}^{\phe}\Big)\psi\Big\rrangle.
\end{eqnarray*}
Adding these expressions and using that $\gb=1-\fb$, we observe that the term $\llr{\psi,Z^{(12)}\hat{m}^{\phe}\psi}$ cancels. It remains, among other contributions, 
$$-2\Im\llr{\psi,\left(U_\bo^{(12)}\fbot-b_\bo(|\Phi(x_1)|^2+|\Phi(x_2)|^2)\right)\hat{m}^{\phe}\psi},$$
where $w_\mu$ is replaced by $U_\bo\fb$.

\begin{remark}\label{rem:notation}
To simplify the notation, we will in the following drop the index $\varphi$ in all projections and (weighted) many-body operators from Definitions~\ref{def:p} and~\ref{def:hat}. 
From now on, $p=\pp\pc$ always projects onto $\phe(t)=\Phi(t)\chie$, where $\Phi(t)$ is the solution of the Gross--Pitaevskii equation \eqref{NLS} with initial datum $\Phi_0$ from A4, and $\chie$ is the ground state of $-\Delta_y+\tfrac{1}{\varepsilon^2}V^\perp(\tfrac{y}{\varepsilon})$ from A2.
\end{remark}

In our proof, we will use a slightly modified variant of the correction term in \eqref{alpha:replaced}. The reason for the modification is that Lemma~\ref{lem:equivalence} establishes the equivalence of \eqref{T1} and \eqref{T2} with $\alpha_\xi^<(t)\rightarrow0$, hence we must ensure that the correction term converges to zero in the limit $(N,\varepsilon)\rightarrow(\infty,0)$. To make the correction term in \eqref{alpha:replaced} controllable, we replace $\hat{m}$ by the weighted many-body operator $\hat{r}$, which is defined as follows:

\begin{definition}
Define the weight functions
\begin{equation*}\begin{array}{ll}
m^a(k)\,:=\,m(k)-m(k+1), &\quad  m^b(k)\,:=\,m(k)-m(k+2),\\
m^c(k)\,:=\,m^a(k)-m^a(k+1), \qquad &\quad m^d(k)\,:=\,m^a(k)-m^a(k+2),\\
m^e(k)\,:=\,m^b(k)-m^b(k+1), \qquad &\quad m^f(k)\,:=\,m^b(k)-m^b(k+2).\\
\end{array}
\end{equation*}
The corresponding weighted many-body operators are denoted by $\hat{m}^\sharp$, $\sharp\in\{a,b,c,d,e,f\}$. Further, define
$$\hat{r}:=\hat{m}^b p_1 p_2+\hat{m}^a(p_1 q_2+q_1p_2).$$
\end{definition}

Note that the weight functions $m^\sharp$ correspond to discrete derivatives of $m$, which appear in the computations when taking commutators with two-body operators such as $[Z^{(12)},\hat{m}]$.

When replacing $\hat{m}$ by $\hat{r}$ in \eqref{alpha:replaced}, we gain an additional projection $p_1$, which allows us to estimate $ {\gbot p_1}$ instead of $ {\gbot }$ (Lemma~\ref{lem:g:2}).
This change does not affect the replacement of $w_\mu$ by $U_\bo$ because $[Z^{(12)},\hat{m}]=[Z^{(12)},\hat{r}\,]$ by Lemma~\ref{lem:commutators:5}. 
The modified functional is now defined as follows:

\begin{definition}
$$\alpha_\xi(t):=\alpha_\xi^<(t)-N(N-1)\Re\llr{\psi^{N,\varepsilon}(t),\gbot\,\hat{r}\,\psi^{N,\varepsilon}(t)}.$$
\end{definition}

In Proposition~\ref{prop:dt_alpha:GP}, the time derivative of the new functional $\alpha_\xi(t)$ is explicitly calculated, following essentially the steps sketched for $N=2$.  

\begin{prop}\label{prop:dt_alpha:GP}

Under assumptions A1 -- A4, 
$$\left|\tfrac{\d}{\d t}\alpha_\xi(t)\right|\leq \big|\gamma^<(t)\big|+\big|\gamma_a(t)\big|+|\gamma_b(t)|+|\gamma_c(t)|+|\gamma_d(t)|+|\gamma_e(t)|+|\gamma_f(t)|$$
for almost every $t\in\R$,
where
\begin{eqnarray}
\gamma^<(t)&:=&\left|\llr{\psi^{N,\varepsilon}(t),\dot{\Vp}(t,\z_1)\psi^{N,\varepsilon}(t)}-\lr{\Phi(t),\dot{\Vp}(t,(x,0))\Phi(t)}_{L^2(\R)}\right| \label{gamma:GP:a<:1}\\
&&-2N\Im\llr{\psi^{N,\varepsilon}(t),q_1\hat{m}^a_{-1}\big(\Vp(t,\z_1)-\Vp(t,(x_1,0))\big)p_1\psi^{N,\varepsilon}(t)} \label{gamma:GP:a<:2}\\
&&-N(N-1)\Im\Big\llangle\psi^{N,\varepsilon}(t),\tilde{Z}^{(12)}\hat{m}\psi^{N,\varepsilon}(t)\Big\rrangle,\label{eqn:gamma:GP:b:1:3}\\\nonumber\\
\gamma_a(t)&:=&N^2(N-1)\Im\llr{\psi^{N,\varepsilon}(t),\gbot\left[\Vp(t,\z_1)-\Vp(t,(x_1,0)),\hat{r}\right]\psi^{N,\varepsilon}(t)},\label{gamma:GP:a}\\\nonumber\\
\gamma_b(t)&:=&-N\Im\llr{\psi,b(|\Phi(x_1)|^2+|\Phi(x_2)|^2)\gbot\,\hat{r}\,\psi}\label{gamma:GP:b:1:1}\\
&&-N\Im\Big\llangle\psi^{N,\varepsilon}(t),(b_\bo-b)(|\Phi(x_1)|^2+|\Phi(x_2)|^2)\,\hat{r}\,\psi^{N,\varepsilon}(t)\Big\rrangle\label{gamma:GP:b:1:2}\\
&&-N(N-1)\Im\llr{\psi^{N,\varepsilon}(t),\gbot\,\hat{r}\,Z^{(12)}\psi^{N,\varepsilon}(t)},\label{gamma:GP:b:2}\\
\nonumber\\
\gamma_c(t)&:=&-4N(N-1)\Im\llr{\psi^{N,\varepsilon}(t),(\na_1\gbot)\cdot\na_1\hat{r}\,\psi^{N,\varepsilon}(t)}\label{gamma:GP:c},\\\nonumber\\
\gamma_d(t)&:=&-N(N-1)(N-2)\Im\llr{\psi^{N,\varepsilon}(t),\gbot\left[b|\Phi(x_3)|^2,\hat{r}\,\right]\psi^{N,\varepsilon}(t)}\label{gamma:GP:d:1}\\
&&+2N(N-1)(N-2)\Im\llr{\psi^{N,\varepsilon}(t),\gbot\big[w_\mu^{(13)},\hat{r}\,\big]\psi^{N,\varepsilon}(t)}\label{gamma:GP:d:2},\\\nonumber\\
\gamma_e(t)&:=&\tfrac12 N(N-1)(N-2)(N-3)\Im\llr{\psi^{N,\varepsilon}(t),\gbot\big[w_\mu^{(34)},\hat{r}\,\big]\psi^{N,\varepsilon}(t)}\label{gamma:GP:e},\\\nonumber\\
\gamma_f(t)&:=&-2N(N-2)\Im\llr{\psi^{N,\varepsilon}(t),\gbot\left[b|\Phi(x_1)|^2,\hat{r}\,\right]\psi^{N,\varepsilon}(t)}\label{gamma:GP:f}.
\end{eqnarray}
Here, we have used the abbreviations
\begin{eqnarray*}
Z^{(ij)}&:=&w_\mu^{(ij)}-\tfrac{b}{N-1}\left(|\Phi(x_i)|^2+|\Phi(x_j)|^2\right),\\
\tilde{Z}^{(ij)}&:=&U_\bo^{(ij)}\fbij-\tfrac{b_\bo}{N-1}(|\Phi(x_i)|^2+|\Phi(x_j)|^2),
\end{eqnarray*}
where $$
b_\bo\,:=\lim\limits_{(N,\varepsilon)\rightarrow(\infty,0)}\mu^{-1}\int\limits_{\R^3}U_\bo(z)\fb(z)\d z\int\limits_{\R^2}|\chi(y)|^4\d y.$$
\end{prop}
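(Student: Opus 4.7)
The plan is to differentiate each of the three summands of $\alpha_\xi(t)$ separately, using $\i\partial_t\psi^{N,\varepsilon}=H\psi^{N,\varepsilon}$ together with the identity $\partial_tp=-\i[h^\varphi,p]$, where $h^\varphi=-\partial_x^2+\Vp(t,(x,0))+b|\Phi|^2$ generates $\phe(t)=\Phi(t)\chie$ (using that $\chie$ is time-independent and an eigenstate of $-\Delta_y+\varepsilon^{-2}V^\perp(y/\varepsilon)$), which in turn yields $\partial_t\hat m=-\i[\sum_jh^\varphi_j,\hat m]$. For the first summand this will give
\[
\tfrac{\d}{\d t}\llr{\psi^{N,\varepsilon},\hat m\,\psi^{N,\varepsilon}}=\i\llr{\psi^{N,\varepsilon},[H-\textstyle\sum_jh^\varphi_j,\hat m]\psi^{N,\varepsilon}}.
\]
The transverse kinetic and confining terms commute with every $p_k$ and drop out; bosonic symmetry then collapses what remains to a one-body piece built from $\Vp(t,z_1)-\Vp(t,(x_1,0))$ and a two-body piece built from $Z^{(12)}$. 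The one-body piece, after inserting $\mathbbm{1}=p_1+q_1$ and applying the shift rules from Lemma \ref{lem:commutators:5}, is precisely \eqref{gamma:GP:a<:2}. The derivative of the second summand $|E^{\psi^{N,\varepsilon}}(t)-\mathcal{E}^{\Phi(t)}(t)|$ is bounded by its unsigned derivative, which by energy conservation of the $N$-body and the Gross--Pitaevskii flows reduces to \eqref{gamma:GP:a<:1}.

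For the third summand $-N(N-1)\Re\llr{\psi^{N,\varepsilon},\gbot\hat r\,\psi^{N,\varepsilon}}$, the analogous computation will produce
\[
N(N-1)\Im\bigl(\llr{\psi^{N,\varepsilon},[H,\gbot]\,\hat r\,\psi^{N,\varepsilon}}+\llr{\psi^{N,\varepsilon},\gbot\,[H-\textstyle\sum_jh^\varphi_j,\hat r\,]\psi^{N,\varepsilon}}\bigr).
\]
In $[H,\gbot]$ only the Laplacians $-\Delta_1-\Delta_2$ contribute (every other term of $H$ is a multiplication operator and commutes with $\gbot$), and from $[-\Delta_k,\gbot]=-(\Delta_k\gbot)-2(\na_k\gbot)\cdot\na_k$ together with the scattering equation \eqref{eqn:scat:f} and $\gb=1-\fb$ one obtains the algebraic identity $-(\Delta_1+\Delta_2)\gbot=(w_\mu^{(12)}-U_\bo^{(12)})\fbot$. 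The two gradient pieces will merge, via the bosonic symmetry of $\psi^{N,\varepsilon}$, into $\gamma_c$. In the remaining bracket the transverse parts drop again; splitting the summations according to whether an index lies in $\{1,2\}$, bosonic symmetry will collapse the one-body $\Vp$-difference into $\gamma_a$, the nonlinear mean-field into \eqref{gamma:GP:b:1:1}, \eqref{gamma:GP:d:1}, and \eqref{gamma:GP:f}, the pair interaction with exactly one index in $\{1,2\}$ into \eqref{gamma:GP:d:2}, the pair interaction with no index in $\{1,2\}$ into $\gamma_e$, and the $(1,2)$-pair interaction into $\gbot[w_\mu^{(12)},\hat r\,]$ (since $\gbot$ and $w_\mu^{(12)}$ commute).

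The key cancellation then combines the leftover $\i\tfrac{N(N-1)}{2}[Z^{(12)},\hat m]$ from the first summand with the newly produced $(w_\mu^{(12)}-U_\bo^{(12)})\fbot\,\hat r$ and $\gbot[w_\mu^{(12)},\hat r\,]$ from the correction. Using Lemma \ref{lem:commutators:5} to exchange $\hat m$ with $\hat r$ in two-body commutators, and then $\fb=1-\gb$, all the $w_\mu^{(12)}$ contributions will cancel, while $U_\bo^{(12)}\fbot$ survives and assembles with the nonlinear mean-field into $-N(N-1)\Im\llr{\psi^{N,\varepsilon},\tilde Z^{(12)}\hat m\,\psi^{N,\varepsilon}}$, up to the Born-to-scattering correction \eqref{gamma:GP:b:1:2}, which is exactly the discrepancy dictated by the defining limit of $b_\bo$ and the integral identity \eqref{eqn:scat(w-U)=0}. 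Taking absolute values and applying the triangle inequality will then yield the claimed bound.

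The hardest part will be the combinatorial bookkeeping: $\hat r=\hat m^bp_1p_2+\hat m^a(p_1q_2+q_1p_2)$ is a sum of four weighted projection blocks, and every commutator with a one- or two-body operator expands into several pieces carrying the shifted weights $m^c,m^d,m^e,m^f$. Arranging these shifts so that the $w_\mu^{(12)}$ contributions cancel exactly---rather than merely reduce to something controllable---is the delicate algebraic step. The physical input is clean: two-body scattering replaces the Born approximation $\int w_\mu$ by the scattering integral $\int w_\mu j_\mu=8\pi\mu a$. Its implementation at the operator level on every projection block, however, relies on the commutator identities collected in Lemma \ref{lem:commutators:5} together with the scattering identity \eqref{eqn:scat:f}, and must be carried out with care so that the leftover pieces organise themselves precisely into $\gamma_a,\ldots,\gamma_f$ rather than into an intractable combination.
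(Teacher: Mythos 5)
Your plan follows the paper's route at the top level: differentiate the three summands, use $\partial_t\hat m = \i[\hat m,\sum_j h_j(t)]$ and the scattering equation \eqref{eqn:scat:f} to turn $-(\Delta_1+\Delta_2)\gbot$ into $(w_\mu^{(12)}-U_\bo^{(12)})\fbot$, and then combine the $Z^{(12)}$-term of $\tfrac{\d}{\d t}\llr{\psi,\hat m\,\psi}$ with the two-body pieces produced by the correction so that $w_\mu$ is effectively replaced by $U_\bo\fb$. (The paper quotes \cite[Proposition~3.4]{NLS} for $|\tfrac{\d}{\d t}\alpha_\xi^<|$ rather than re-deriving it, but that is only presentational.)

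However, the bookkeeping you sketch does not land on the stated $\gamma$'s, and the discrepancies are not merely cosmetic. First, you assign \eqref{gamma:GP:b:1:1} $=-N\Im\llr{\psi,b(|\Phi(x_1)|^2+|\Phi(x_2)|^2)\gbot\hat r\,\psi}$ to the mean-field block of $\gbot\big[H-\sum_jh_j,\hat r\big]$, but that term is \emph{not} a commutator; it appears only after the algebraic reorganisation of the cancellation, from the residual $\tfrac{b}{N-1}(|\Phi(x_1)|^2+|\Phi(x_2)|^2)\gbot\hat r$ that emerges when one rewrites $\fbot=1-\gbot$ inside $Z^{(12)}\fbot$. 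Second, if you split $w_\mu^{(12)}$ from $b|\Phi(x_1)|^2+b|\Phi(x_2)|^2$ \emph{before} taking the commutator, the $(1,2)$ mean-field piece carries prefactor $2N(N-1)$, whereas \eqref{gamma:GP:f} has $2N(N-2)$; the paper obtains $N(N-2)$ by first rewriting $w_\mu^{(12)}-b(|\Phi(x_1)|^2+|\Phi(x_2)|^2)=Z^{(12)}-\tfrac{N-2}{N-1}b(|\Phi(x_1)|^2+|\Phi(x_2)|^2)$ and keeping $\gbot[Z^{(12)},\hat r\,]$ together as a single block. Third, you never produce \eqref{gamma:GP:b:2} $=-N(N-1)\Im\llr{\psi,\gbot\hat r\,Z^{(12)}\psi}$; in the paper it comes from expanding $(1-\fbot)[Z^{(12)},\hat r\,]=Z^{(12)}\hat r-\fbot Z^{(12)}\hat r-\gbot\hat r\,Z^{(12)}$. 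If instead you carry out the cancellation with $\gbot[w_\mu^{(12)},\hat r\,]$ alone, the leftover is $-\gbot\hat r\,w_\mu^{(12)}$ together with a mean-field factor sitting on the wrong side of $\hat r$; since $\hat r$ does not commute with $|\Phi(x_j)|^2$, these do not equal \eqref{gamma:GP:b:2} and \eqref{gamma:GP:b:1:1} without further reshuffling. The physical picture you give is correct, but the ``delicate algebraic step'' you flag is precisely where the argument has to keep $Z^{(12)}$ as a single two-body object; your proposed finer split does not close as written.
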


The first expression $\gamma^<$ equals $|\tfrac{\d}{\d t}\alpha_\xi^<(t)|$ with $w_\mu$ replaced by the interaction $U_\bo\fb$. 
The terms $\gamma_a$ to $\gamma_f$ collect all remainders resulting from this replacement. Whereas $\gamma_a$ arises from the strong confinement, 
$\gamma_b$ to $\gamma_f$ are comparable to the corresponding terms from the problem without strong confinement in \cite{pickl2015}. 

\begin{prop}\label{prop:gamma:GP}
Let $\mu$ be sufficiently small and let assumptions A1 -- A4 be satisfied. Then there exist $\frac56<d<\bo<\frac{2}{2+\delta}$ and $0<\xi<\min\{1-\bo,\frac{\bo}{6}\}$ such that for any $t\in\R$
\begin{eqnarray*}
\big|\gamma^<(t)\big|&\ls &\mathfrak{e}(t)\exp\left\{\mathfrak{e}^2(t)+\int_0^t\mathfrak{e}^2(s)\d s\right\}\Big(\alpha_\xi^<(t)+(N\varepsilon^\delta)^{1-\bo}+N^{-1+\bo+\xi}+\mu^{d-\frac13-\frac{\bo}{2}}\Big), \\
\big|\gamma_a(t)\big|&\ls& \mathfrak{e}^3(t)\,\varepsilon^2,\\
\big|\gamma_b(t)\big|&\ls& \mathfrak{e}^3(t)\left(\varepsilon^{1+\bo}+N^{-1+\bo+\xi}\right),\\
\big|\gamma_c(t)\big|&\ls& \mathfrak{e}^2(t)\,N^{-1+\bo+\xi}, \\
\big|\gamma_d(t)\big|&\ls& \mathfrak{e}^3(t)\left(\varepsilon^{1+\bo}+(N\varepsilon^\delta)^{1-\bo+\xi}\right),\\
\big|\gamma_e(t)\big|&\ls&\mathfrak{e}^3(t)\,\varepsilon^{1+\bo}, \\
\big|\gamma_f(t)\big|&\ls&\mathfrak{e}^2(t)\,\varepsilon.
\end{eqnarray*}
\end{prop}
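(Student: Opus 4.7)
The strategy is to estimate each of the seven terms separately and then to choose the parameters $\bo$, $d$, $\xi$ at the end so that all resulting bounds are small. The central observation is that, thanks to the correlation correction built into $\alpha_\xi$, the interaction $w_\mu$ has been effectively replaced by the softer potential $U_\bo\fb$ in the principal piece $\gamma^<$. Since $U_\bo\fb$ has range of order $\mu^\bo$ with $\bo<1$, it behaves like a $\beta$-scaled interaction with $\beta=\bo$ in the sense of \eqref{eqn:small:beta}. Hence $\gamma^<$ falls squarely within the scope of the NLS analysis of \cite{NLS}, and I would import the bound on $|\tfrac{\d}{\d t}\alpha_\xi^<|$ from there essentially verbatim, with the coupling constant $\bb$ in \cite{NLS} replaced by $b_\bo$. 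The Gr\"onwall-type factor $\mathfrak{e}(t)\exp\{\mathfrak{e}^2(t)+\int_0^t\mathfrak{e}^2\}$ and the errors $(N\varepsilon^\delta)^{1-\bo}$ and $N^{-1+\bo+\xi}$ are precisely those of \cite{NLS} for $\beta=\bo$; the additional $\mu^{d-1/3-\bo/2}$ stems from introducing an auxiliary length-scale parameter $d$ to separate $\mu$ from $\mu^\bo$ when performing the scattering replacement.

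For the correction $\gamma_a$, the key point is the Taylor expansion $\Vp(t,z_1)-\Vp(t,(x_1,0))=\mathcal{O}(|y_1|)$ by A3; since $\hat r$ contains the projector $p_1$, whose inner factor $\chi^\varepsilon$ localises $y_1$ on a ball of radius $\mathcal{O}(\varepsilon)$ by the exponential decay noted in the remarks, one gains one $\varepsilon$ per side of the commutator, producing $\varepsilon^2$. For $\gamma_b$, the three subterms are handled separately: \eqref{gamma:GP:b:1:1} uses the pointwise bound on $|\gb|$ combined with smoothness of $b|\Phi|^2$; \eqref{gamma:GP:b:1:2} requires the quantitative convergence $|b_\bo-b|\lesssim\mu^{1-\bo}$, which follows from comparing $\fb$ with the true scattering solution $j_\mu$ on $\supp w_\mu$ via \eqref{eqn:j} and \eqref{eqn:scat(w-U)=0}; and \eqref{gamma:GP:b:2} is reduced, via Lemma \ref{lem:commutators:5} and the symmetry of $\psi^{N,\varepsilon}$, to a form already bounded by $\gamma^<$-type arguments. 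The gradient term $\gamma_c$ is handled by an integration by parts, moving one $\nabla_1$ off $\gbot$, after which the $L^2$ norm of $\nabla_1\hat r\,\psi^{N,\varepsilon}$ is controlled by the a priori energy estimates encoded in $\mathfrak{e}(t)$.

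The most delicate terms are the higher-particle contributions $\gamma_d$ and $\gamma_e$ with combinatorial prefactors $N^3$ and $N^4$. Here I would exploit the support condition $\supp\gbot\subseteq B_{R_\bo}(z_2)$ with $R_\bo\sim\mu^\bo$, which shrinks the effective volume of integration by a factor $\mu^{3\bo}$, together with the scattering identity $-2\Delta\fb=-(w_\mu-U_\bo)\fb$ to rewrite the extra interactions $w_\mu^{(13)}$ and $w_\mu^{(34)}$; the resulting $\nabla\fb$ terms are integrated by parts using the support control from Lemma \ref{lem:scat}. The lower-order term $\gamma_f$ is controlled directly by the localisation of $\chi^\varepsilon$ on scale $\varepsilon$. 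The main obstacle is the bookkeeping required to choose $\bo$, $d$, $\xi$ simultaneously: the estimate for $\gamma_d$ forces $\bo>5/6$; admissibility forces $\bo<2/(2+\delta)$, which is compatible with $\bo>5/6$ only when $\delta<2/5$, matching exactly the admissibility window of Definition \ref{def:admissible}; additionally $d$ must satisfy $\max\{5/6,\,1/3+\bo/2\}<d<\bo$, and $\xi$ must lie in $(0,\min\{1-\bo,\bo/6\})$. Verifying non-emptiness of this parameter region for every admissible $\delta$, and tracking which small error in each bound originates from which combinatorial cancellation, is the principal non-routine task.
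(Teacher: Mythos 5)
The most serious gap is in your treatment of $\gamma^<(t)$. You propose to import Proposition~3.5 of \cite{NLS} ``essentially verbatim'' with the interaction $U_\bo\fb$ playing the role of a $\beta$-scaled potential. But that proposition bounds the analogue of $\gamma^<$ in terms of the energy difference $\big|E^\psi_{U_\bo\fb}(t)-\mathcal{E}^\Phi_{U_\bo\fb}(t)\big|$, and in the Gross--Pitaevskii regime this quantity is \emph{not} small: it differs from $\big|E^\psi(t)-\mathcal{E}^\Phi(t)\big|\le\alpha^<_\xi(t)$ by an $\mathcal{O}(1)$ amount, because the short-range correlations carry a kinetic energy of order one (see the discussion surrounding \eqref{eqn:heuristics:2}). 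Consequently one cannot deduce $\norm{\partial_{x_1}\qp_1\psi^{N,\varepsilon}(t)}^2\ls\alpha^<_\xi(t)$ as one does for $\beta<1$, and the single term in the proof of \cite[Prop.~3.5]{NLS} that uses this bound (the term \eqref{eqn:24}) must be re-estimated from scratch. The paper's mechanism for this is Lemma~\ref{lem:E_kin:GP}: the kinetic energy away from the scattering centres, $\norm{\charAo\partial_{x_1}\qp_1\psi^{N,\varepsilon}(t)}$, \emph{is} controlled by $\alpha^<_\xi(t)$ plus lower-order errors, and the contribution from the cut-out balls is absorbed using Lemmas~\ref{lem:cutoffs} and \ref{lem:scat:peter}. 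This is where the parameter $d$ and the sets $\mathcal{A}_1$, $\mathcal{B}_1$, $\mathcal{C}_1$ come from, and where the new error $\mu^{d-\frac13-\frac{\bo}{2}}$ actually originates — not, as you suggest, from ``separating $\mu$ from $\mu^\bo$ when performing the scattering replacement.'' Without the cutoff argument your proposed estimate of $\gamma^<$ does not close.

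Several smaller points are also off. For \eqref{gamma:GP:b:1:2} you would prove $|b_\bo-b|\ls\mu^{1-\bo}$, but $b_\bo$ is \emph{defined} as the $(N,\varepsilon)\to(\infty,0)$ limit of $\mu^{-1}\norm{U_\bo\fb}_{L^1}\int|\chi|^4$, so by \eqref{b=b} one has $b_\bo=b$ exactly and the term vanishes; the quantitative rate $\mu^{1-\bo}$ is used elsewhere, to show $U_\bo\fb\in\mathcal{W}_{\bo,\eta}$. For $\gamma_c$ you propose to integrate by parts off $\gbot$ and control $\norm{\nabla_1\hat r\,\psi}$ by a priori bounds; this is not what the paper does and would produce a loss of $\varepsilon^{-1}$ from $\nabla_{y_1}$ that you have not accounted for. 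The paper instead inserts $\mathbbm{1}_{\supp\gb}(z_1-z_2)$ and combines the new Lemma~\ref{lem:nabla:g} (giving $\norm{\nabla\gb}_{L^2}\ls N^{-1/2}\varepsilon$) with the Sobolev-type bound of Lemma~\ref{lem:g:5}. Finally, for $\gamma_d$ and $\gamma_e$ you propose to rewrite $w_\mu^{(13)}$ and $w_\mu^{(34)}$ via the scattering identity $-2\Delta\fb=-(w_\mu-U_\bo)\fb$; but that identity applies to the pair $(1,2)$ appearing in $\gbot=1-\fbot$, not to the pairs $(1,3)$ or $(3,4)$, so the proposed rewriting is not available. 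The paper instead estimates these commutators directly, using $\onorm{\gbot p_1}$, $\norm{p_1 w_\mu^{(13)}\psi}$, the geometric observation that $\gbot w_\mu^{(13)}\ne0$ forces $|z_2-z_3|\ls R_\bo$, and Lemma~\ref{lem:commutators:5} for $\gamma_e$. Your parameter bookkeeping at the end is essentially correct, but note $\frac13+\frac{\bo}{2}<\frac56$ whenever $\bo<1$, so the constraint reduces to $\frac56<d<\bo$.
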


To control $\gamma^<$, we first prove that the interaction $U_\bo\fb$ is of the kind considered in \cite{NLS} and subsequently apply \cite[Proposition~3.5]{NLS}. 
This provides a bound of $|\gamma^<(t)|$ in terms of 
$$\llr{\psi^{N,\varepsilon}(t),\hat{m}\psi^{N,\varepsilon}(t)}+\big|E^{\psi^{N,\varepsilon}(t)}_{U_\bo\fb}(t)-\mathcal{E}^{\Phi(t)}_{U_\bo\fb}(t)\big|,$$ 
where $E^{\psi}_{U_\bo\fb}(t)$ and $\mathcal{E}^{\Phi}_{U_\bo\fb}(t)$ denote the quantities corresponding to \eqref{E^psi} and \eqref{E^Phi}, respectively, but where $w_\mu$ is replaced by $U_\bo\fb$ and $b$ by $\lim_{(N,\varepsilon)\to(\infty,0)}\mu^{-1}\norm{U_\bo\fb}_{L^1(\R^3)}\int|\chi(y)|^4\d y$.
The potential $U_\bo$ is chosen in such a way that $\lim_{(N,\varepsilon)\to(\infty,0)}\norm{U_\bo\fb}_{L^1(\R^3)}=8\pi a$, hence $\mathcal{E}^{\Phi}(t)=\mathcal{E}^{\Phi}_{U_\bo\fb}(t)$ but 
\begin{equation}\label{eqn:heuristics:2}
\Big|E^{\psi^{N,\varepsilon}_0}_{U_\bo\fb}(0)-E^{\psi^{N,\varepsilon}_0}(0)\Big|\sim\mathcal{O}(1).
\end{equation}
To explain why one expects the energy difference \eqref{eqn:heuristics:2} to be of order one,
let us again consider the trial funciton $\psi_\mathrm{cor}$. Following the same heuristic reasoning as before (i.e.~$\fb\approx 1$ unless we integrate against $w_\mu$ or apply the Laplacian, $\fb\approx j_\mu$ on $\supp w_\mu$, and $\fb\approx 1$ on $\supp U_\bo$), this difference is to leading order given by
\begin{eqnarray*}
\tfrac{N-1}{2}\left|\llr{\psi_\mathrm{cor}, \left(w_\mu^{(12)}-(U_\bo\fb)^{(12)}\right)\psi_\mathrm{cor}}\right|
&\sim &N\left|\int\d z_1|\phe(z_1)|^4\int\d z\,\fb(z)^2(w_\mu(z)-U_\bo(z))\right|\\
&\overset{\text{\eqref{eqn:scat(w-U)=0}}}{\sim} &\mu^{-1}\int\d z\,\gb(z)w_\mu(z)\fb(z)\\
&\geq& \mu^{-1}\gb(\mu)\int\d z\, w_\mu(z)\fb(z)\,\overset{\text{\eqref{eqn:integral:scat}}}{\sim} \,8\pi a^2
\;\sim\;\mathcal{O}(1).
\end{eqnarray*}
In the first line, we have substituted $z_2\mapsto z:=z_2-z_1$, approximated $\phe(z_1+z)\approx\phe(z_1)$ for $z\in\supp(w_\mu-U_\bo)$ and used the estimate $\norm{\phe}_{L^\infty(\R^3)}^2\ls\varepsilon^{-2}$ (Lemma~\ref{lem:Phi}).
Further, we have decomposed $\fb=1-\gb$ and used that $\gb$ is decreasing in $|z|$, $\gb(\mu)\sim a$ and $\gb\approx0$ on $\supp U_\bo$. 
Note that by \eqref{eqn:scat:f}, this difference between the potential energies equals exactly the part of the kinetic energy $\llr{\psi_\mathrm{cor},(-\Delta_1)\psi_\mathrm{cor}}$ that is due to the correlations.

As a consequence of \eqref{eqn:heuristics:2}, \cite[Proposition~3.5]{NLS} does not immediately provide a bound of $|\gamma^<(t)|$ in terms of $\alpha_\xi^<(t)$. However, the energy difference enters merely in the single term in this proposition%
\footnote{It enters in (24) in \cite{NLS}, which is a part of $\gamma_b^{(3)}$ in Proposition~3.4. The estimate is given in \cite[Section~4.4.4]{NLS}.}
whose control requires a bound of the kinetic energy $\norm{\partial_{x_1}\qp_1\psi^{N,\varepsilon}(t)}$. 
For interactions $\wb$ scaling with $\beta\in(0,1)$, one shows that (neglecting some terms that vanish in the limit)
\begin{eqnarray}
|E^\psi_{\wb}(t)-\mathcal{E}_{\wb}^\Phi(t)|
&\gs & \llr{\psi,(-\Delta_1+\tfrac{1}{\varepsilon^{2}}(V^\perp(\tfrac{y_1}{\varepsilon})-\tfrac{E_0}{\varepsilon^2})\psi}-\norm{\Phi'}^2\nonumber\\
&\gs&\norm{\partial_{x_1}\qp_1\psi}^2+(\norm{\partial_{x_1}\pp_1\psi}^2-\norm{\Phi'}^2)\nonumber\\
&\geq&\norm{\partial_{x_1}\qp_1\psi}^2-\norm{\Phi'}^2\llr{\psi,\hat{n}\psi}.\label{eqn:heuristics:1}
\end{eqnarray}
Hence, essentially $\norm{\partial_{x_1}\qp_1\psi^{N,\varepsilon}(t)}^2\ls\alpha_\xi^<(t)$ \cite[Lemma~4.17]{NLS}, which is why the energy difference enters the estimate of $|\gamma^<(t)|$.

Turning back to the Gross--Pitaevskii regime, let us apply \eqref{eqn:heuristics:1} to the interaction $U_\bo\fb$. Making use of the fact that $\mathcal{E}^{\Phi}(t)=\mathcal{E}^{\Phi}_{U_\bo\fb}(t)$, we obtain
$$|E^\psi_{U_\bo\fb}(t)-E^\psi(t)|+|E^\psi(t)-\mathcal{E}^\Phi(t)|
\;\geq\;|E^\psi_{U_\bo\fb}(t)-\mathcal{E}^\Phi(t)|
\;\gs\; \norm{\partial_{x_1}\qp_1\psi}^2-\norm{\Phi'}^2\llr{\psi,\hat{n}\psi}.
$$
Since $|E^\psi_{U_\bo\fb}(t)-E^\psi(t)|\sim\mathcal{O}(1)$
already at time zero by \eqref{eqn:heuristics:2}
and $|E^\psi(t)-\mathcal{E}^\Phi(t)|\leq \alpha_\xi^<(t)$, we expect
$$\norm{\partial_{x_1}\qp_1\psi}^2\; \ls\; \alpha_\xi^<(t)+\mathcal{O}(1)$$ 
for the Gross--Pitaevskii scaling of the interaction.
The additional $\mathcal{O}(1)$-contribution arises because one of the terms\footnote{This is the term $\llr{\psi,((N-1)w_\mu^{(12)}-b|\Phi(x_1)|^2)\psi}$. In the proof of Lemma~\ref{lem:E_kin:GP}, we cope with this term essentially by adding and subtracting the potential $U_\bo$. The term containing the difference $w_\mu-U_\bo$ together with the part of the kinetic energy close around the scattering centers is non-negative (Lemma~\ref{lem:scat:peter}). The terms containing $U_\bo$ can be shown to vanish in the limit as in \eqref{eqn:heuristics:1}.} we have neglegted in \eqref{eqn:heuristics:1} is not small for $\beta=1$. 

The part of the kinetic energy orthogonal to the condensate 
$\norm{\partial_{x_1}\qp_1\psi}$ is not small since the microscopic structure does not vanish in the limit but carries a kinetic energy of order $\mathcal{O}(1)$.
This energy is the reason for the factor $8\pi a$ in the effective equation, which is $\mathcal{O}(1)$ different from the factor $\norm{w}_{L^1(\R^3)}$ for scalings $\beta\in(0,1)$ with negligible microscopic structure.

To estimate the one problematic term in $\gamma^<(t)$, one notes that the predominant part of the kinetic energy is localised around the scattering centers, where the microscopic structure is non-trivial. Therefore, we define the set $\Ao$ (Definition~\ref{def:cutoffs}) as $\R^{3N}$ where sufficiently large balls around the scattering centers are cut out,
and show that $\norm{\charAo\partial_{x_1}\qp_1\psi^{N,\varepsilon}(t)}^2\ls |E^\psi(t)-\mathcal{E}^\Phi(t)|+\llr{\psi^{N,\varepsilon}(t),\hat{n}\psi^{N,\varepsilon}(t)}$ plus some terms vanishing in the limit (Lemma~\ref{lem:E_kin:GP}). 
Subsequently, we adapt the estimate from \cite[Proposition~3.5]{NLS} to this new energy lemma, making use of the fact that the complement of $\Ao$ is very small.

The remainder of the proof consists of estimating the terms $\gamma_a$ to $\gamma_f$ arising from the effective replacement of $w_\mu$ by $U_\bo\fb$. 
The key tool for this is our knowledge of the microscopic structure (Lemma~\ref{lem:scat} and Lemma~\ref{lem:g}).

\begin{remark}\label{rem:differences}
In principle, we adjust the method from \cite{pickl2015} to the situation with strong confinement and to the associated more singular scaling of the interaction. We give a new proof for Lemma~\ref{lem:scat:1}-c (concerning the microscopic structure) by exploiting the spherical symmetry of the scattering problem to reduce it to an ODE and explicitly construct its solution. 

The proof of Lemma~\ref{lem:E_kin:GP} (providing an estimate for the kinetic energy) becomes more involved due to the confinement, since one must show that the positive expression $\norm{\nabla_{y_1}\psi^{N,\varepsilon}(t)}^2$ compensates not only for a sufficient share of the negative part of $\llangle{\psi^{N,\varepsilon}(t),(w_\mu-U_\bo)\psi^{N,\varepsilon}(t)}\rrangle$ as in \cite{pickl2015} but also for the large negative part of $\tfrac{1}{\varepsilon^2}\llr{\psi^{N,\varepsilon}(t),(V^\perp(\tfrac{y_1}{\varepsilon})-E_0)\psi^{N,\varepsilon}(t)}$.

For the control of $\gamma^d$, we follow \cite{jeblick2016}. 
The estimate of $\gamma_c$ is different from the problem without confinement because each $\na$ contributes a factor $\varepsilon^{-1}$. To handle this, we prove a new Lemma~\ref{lem:nabla:g} which provides estimates for $\na\gb$, and combine this with the new estimate in Lemma~\ref{lem:g:5}.
\end{remark}

The last proposition ensures that the correction term converges to zero as $(N,\varepsilon)\rightarrow(\infty,0)$, which is required for the Grönwall argument. 

\begin{prop}\label{prop:correction}
Under assumptions A1 -- A4, the correction term in $\alpha_\xi(t)$ is for all $t\in\R$ bounded as
$$\left|N(N-1)\Re\llr{\psi^{N,\varepsilon}(t),\gbot\,\hat{r}\,\psi^{N,\varepsilon}(t)}\right|\;\ls\; \varepsilon^{1+\bo}N^{\xi-\frac{\bo}{2}}.$$
\end{prop}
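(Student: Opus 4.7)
The plan is to decompose $\hat r$ into its three summands, commute each weighted operator $\hat{m}^\sharp$ past the associated product of one-body projections, and then apply Cauchy--Schwarz together with operator-norm bounds on $\gbot p_1$ and on $\hat{m}^\sharp$. The target bound $\varepsilon^{1+\bo}N^{\xi-\bo/2}$ is exactly what one obtains from $N^2\cdot\onorm{\gbot p_1}\cdot\onorm{\hat{m}^\sharp}$ with these two factors of the expected sizes $\varepsilon^{-1}\mu^{1+\bo/2}$ and $N^{-1+\xi}$, respectively.

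First I would record the commutation identities $[p_1 p_2,\hat{f}^\varphi]=[p_1 q_2,\hat{f}^\varphi]=[q_1 p_2,\hat{f}^\varphi]=0$, valid for every weight $f$: writing $P_k^\varphi=\sum_{|J|=k}\prod_{j\in J}q_j\prod_{l\notin J}p_l$ from Definition~\ref{def:p} and using $p_i q_i=0$, both $p_1 p_2 P_k^\varphi$ and $P_k^\varphi p_1 p_2$ reduce to the projection onto $k$ excitations among the particles $3,\dots,N$ multiplied by $p_1 p_2$; the other two identities are analogous. With these commutations, together with the bosonic symmetry of $\psi^{N,\varepsilon}(t)$ and the spherical symmetry of $\gb$ (which identifies $\llr{\psi^{N,\varepsilon}(t),\gbot p_1 q_2\,\hat{m}^a\,\psi^{N,\varepsilon}(t)}$ with $\llr{\psi^{N,\varepsilon}(t),\gbot q_1 p_2\,\hat{m}^a\,\psi^{N,\varepsilon}(t)}$), a straight application of Cauchy--Schwarz reduces the claim to proving $\onorm{\gbot p_1}\ls\varepsilon^{-1}\mu^{1+\bo/2}$ and $\onorm{\hat{m}^\sharp}\ls N^{-1+\xi}$ for $\sharp\in\{a,b\}$.

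The weight estimate follows directly from the piecewise definition of $m$: on the linear range $k\leq N^{1-2\xi}$ the discrete derivatives $m^a(k)$ and $m^b(k)$ are constant multiples of $N^{-1+\xi}$, while on the square-root range $k\geq N^{1-2\xi}$ a one-step Taylor expansion of $\sqrt{k/N}$ gives $|m(k)-m(k+j)|\ls(Nk)^{-1/2}\leq N^{-1+\xi}$ for $j\in\{1,2\}$. For the operator norm of $\gbot p_1$, writing $\norm{\gbot p_1\psi}^2$ out explicitly and bounding the inner $z_1$-integral uniformly in $z_2$ yields $\onorm{\gbot p_1}\leq\norm{\phe}_{L^\infty(\R^3)}\norm{\gb}_{L^2(\R^3)}$; the microstructure bounds from Lemma~\ref{lem:scat} (i.e.\ $\supp\gb\subseteq B_{R_\bo}(0)$ with $R_\bo\sim\mu^\bo$, and $\gb(z)\ls\min\{1,a\mu/|z|\}$) then give $\norm{\gb}_{L^2(\R^3)}^2\ls\mu^3+\mu^2 R_\bo\ls\mu^{2+\bo}$, while $\norm{\phe}_{L^\infty(\R^3)}\ls\varepsilon^{-1}$ follows from the rescaling $\chie(y)=\varepsilon^{-1}\chi(y/\varepsilon)$ and $\Phi\in H^2(\R)\hookrightarrow L^\infty(\R)$; in effect this reproduces Lemma~\ref{lem:g:2}. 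Combining all pieces with $\mu=\varepsilon^2/N$ yields
$$N^2\cdot\varepsilon^{-1}\mu^{1+\bo/2}\cdot N^{-1+\xi}=\varepsilon^{1+\bo}\,N^{\xi-\bo/2},$$
as claimed. I do not anticipate a serious obstacle: the only mildly delicate step is the verification of the commutation of $\hat{m}^\sharp$ with the two-point projections, after which the bound is an immediate consequence of the microstructure estimates from Section~\ref{subsec:mictrostructure} and the weight hierarchy built into Definition~\ref{def:alpha}.
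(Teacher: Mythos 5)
Your proposal is correct and follows the same route as the paper's own proof: decompose $\hat{r}$, use the commutativity of $\hat{m}^\sharp$ with the $p,q$-projections (Lemma~\ref{lem:commutators:1}) to expose a $\gbot p_1$ (or $\gbot p_2$) factor, apply Cauchy--Schwarz, and conclude via $\onorm{\gbot p_1}\ls\mathfrak{e}(t)\varepsilon^{1+\bo}N^{-1-\bo/2}$ (Lemma~\ref{lem:g:2}) and $\onorm{\hat{m}^{a/b}}\ls N^{-1+\xi}$ (Lemma~\ref{lem:l:2}). The only difference is stylistic: you re-derive the two operator-norm bounds from scratch rather than citing the lemmas, which is harmless.
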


\noindent\emph{Proof of Theorem~\ref{thm}.}
From Propositions \ref{prop:dt_alpha:GP} and \ref{prop:gamma:GP}, we gather that for sufficiently small $\mu$, there exist suitable $\bo$, $\xi$ and $d$ such that 
$$\left|\tfrac{\d}{\d t}\alpha_\xi(t)\right|\;\ls\; 
C(t)\left(\alpha_\xi^<(t)+(N\varepsilon^\delta)^{1-\bo+\xi}+N^{-1+\bo+\xi}+\mu^{d-\frac13-\frac{\bo}{2}}\right)$$
for almost every $t\in\R$. We have simplified the expression by noting that 
$\varepsilon^{1+\bo}<\varepsilon<(N\varepsilon^\delta)^{1+\xi-\bo}$ because $\delta(1+\xi-\bo)<\delta(1+\xi)<1$ as $\delta<\frac25$ and $\xi<1-\bo<\frac16$. 
Besides, we have used the abbreviation
\begin{equation}\label{C}
C(t)\;:=\;\mathfrak{e}(t)\exp\left\{\mathfrak{e}^2(t)+\int_0^t\mathfrak{e}^2(s)\d s\right\}.
\end{equation} 
Recall that $\mathfrak{e}(t)$ is for each $t\in\R$ bounded uniformly in $N$ and $\varepsilon$ by assumption A4. 
Let us introduce the abbreviations
\begin{eqnarray*}
R(t)&:=&-N(N-1)\Re\llr{\psi^{N,\varepsilon}(t),\gbot\,\hat{r}\,\psi^{N,\varepsilon}(t)},\\
B&:=&(N\varepsilon^\delta)^{1-\bo+\xi}+N^{-1+\bo+\xi}+\mu^{d-\frac13-\frac{\bo}{2}}.
\end{eqnarray*}
By Proposition~\ref{prop:correction}, $|R(t)|< B$ uniformly in $t$. $\alpha_\xi(t)+B$ is thus non-negative and 
\begin{eqnarray*}
\alpha_\xi^<(t)&=&\alpha_\xi(t)-R(t)\;\leq\;\alpha_\xi(t)+|R(t)|\;\ls\;\alpha_\xi(t)+B,\\
\alpha_\xi(t)+B&=&\alpha_\xi^<(t)+R(t)+B\;\ls\;\alpha_\xi^<(t)+B,
\end{eqnarray*}
hence
$$\tfrac{\d}{\d t}(\alpha_\xi(t)+B)\; \ls\;  C(t)\left(\alpha_\xi(t)+B\right)$$
for almost every $t\in\R$.
By the differential form of Grönwall's inequality,
$$0\; \leq\; \alpha_\xi^<(t)\ls \alpha_\xi(t)+B \ls\left(\alpha_\xi^<(0)+B\right)\exp\left\{2\int_0^tC(s)\d s\right\}$$
for all $t\in\R$.
The sequence $(N,\varepsilon)$ is admissible and $\xi<1-\bo$, hence $\lim_{(N,\varepsilon)\rightarrow(\infty,0)}B=0$ and \eqref{A4:1} and \eqref{A4:2} imply by Lemma~\ref{lem:equivalence} that
$$0\; \leq\; \lim\limits_{(N,\varepsilon)\rightarrow(\infty,0)}\left(\alpha_\xi(0)+B\right)
\; \ls\; \lim\limits_{(N,\varepsilon)\rightarrow(\infty,0)}\left(\alpha^<_\xi(0)+B\right)\,\overset{\text{\ref{lem:equivalence}}}{=}\,0,$$
 which by Lemma~\ref{lem:equivalence} concludes the proof.\qed

\begin{cor}\label{cor:rates}
Let $t\in\R$. Then for any $\rho\in(0,\frac{1}{12})$,
$$
\Tr\left|\gamma^{(1)}_{\psi^{N,\varepsilon}(t)}-|\phe(t)\rangle\langle\phe(t)|\right|\; \ls\; 
\left(A(0)+N^{-\frac{1}{12}+\rho}+\left(N\varepsilon^\delta\right)^{\frac{3}{12}-3\rho} \right)^\frac12
\exp\left\{\int_0^tC(s)\d s\right\},
$$
with $C(t)$ as in \eqref{C} and where $$A(0)\; :=\; \left|E^{\psi^{N,\varepsilon}_0}(0)-\mathcal{E}^{\Phi_0}(0)\right|+\sqrt{\Tr\Big|\gamma^{(1)}_{\psi_0^{N,\varepsilon}}-|\phe_0\rangle\langle\phe_0|\Big|}.$$
\end{cor}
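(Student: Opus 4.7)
My plan is to make the Grönwall inequality derived in the proof of Theorem~\ref{thm} quantitative and then invoke Lemma~\ref{lem:equivalence} to convert the resulting bound on $\alpha_\xi^<(t)$ into the trace-norm estimate claimed here. The analytical content is fully contained in Propositions~\ref{prop:dt_alpha:GP}, \ref{prop:gamma:GP} and~\ref{prop:correction}; what remains is to choose the parameters $\xi$, $\bo$ and $d$ optimally within the ranges they permit.

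Concretely, reading off the estimate used in the proof of Theorem~\ref{thm}, for suitable $\tfrac56<d<\bo<\tfrac{2}{2+\delta}$ and $0<\xi<\min\{1-\bo,\bo/6\}$ we have
\[
\alpha_\xi^<(t)\;\ls\;\bigl(\alpha_\xi^<(0)+B\bigr)\exp\Bigl\{2\int_0^tC(s)\,\d s\Bigr\},\qquad B:=(N\varepsilon^\delta)^{1-\bo+\xi}+N^{-1+\bo+\xi}+\mu^{d-\frac13-\frac{\bo}{2}}.
\]
Lemma~\ref{lem:equivalence} then supplies both the bound $\Tr|\gamma^{(1)}_{\psi^{N,\varepsilon}(t)}-|\phe(t)\rangle\langle\phe(t)||\leq\sqrt{8\,\alpha_\xi^<(t)}$ at time $t$ and the estimate $\alpha_\xi^<(0)\leq A(0)+\tfrac12 N^{-\xi}$ at time zero. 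Taking the square root of the Grönwall bound and combining it with these two inequalities reduces the corollary to showing that $N^{-\xi}+B\;\ls\;N^{-\frac{1}{12}+\rho}+(N\varepsilon^\delta)^{\frac{3}{12}-3\rho}$ for some admissible choice of $\xi,\bo,d$.

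Given $\rho\in(0,\tfrac{1}{12})$, I would set $\xi=\tfrac{1}{12}-\rho$ and take $\bo$ in the interval $\bigl(\tfrac56,\min\{\tfrac56+2\rho,\tfrac{2}{2+\delta}\}\bigr)$, with $d\in(\tfrac56,\bo)$. The structural constraint $\xi<\min\{1-\bo,\bo/6\}$ is elementary to verify for $\rho\in(0,\tfrac{1}{12})$. A direct computation then gives $N^{-\xi}=N^{-1/12+\rho}$, $N^{-1+\bo+\xi}\leq N^{-1/12+\rho}$ (from $\bo+\xi\leq\tfrac{11}{12}+\rho$), and $(N\varepsilon^\delta)^{1-\bo+\xi}\leq(N\varepsilon^\delta)^{3/12-3\rho}$ (from $1-\bo+\xi\geq\tfrac{3}{12}-3\rho$, using that $N\varepsilon^\delta\to 0$ so smaller exponent means larger value). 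For the last term in $B$, since $\mu=\varepsilon^2/N\leq N^{-1}$, we have $\mu^{d-\frac13-\frac{\bo}{2}}\leq N^{-(d-\frac13-\frac{\bo}{2})}\leq N^{-\frac{1}{12}+\rho}$, as the choice of $\bo$ ensures $d-\tfrac13-\tfrac{\bo}{2}\geq\tfrac{1}{12}-\rho$.

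The one thing that needs verification is that all parameter windows are simultaneously nonempty. Admissibility forces $\delta<\tfrac25$, hence $\tfrac{2}{2+\delta}>\tfrac56$, so the interval $(\tfrac56,\tfrac{2}{2+\delta})$ is nonempty, and its intersection with $(\tfrac56,\tfrac56+2\rho)$ is nonempty for every $\rho>0$. There is no deeper obstacle in the argument: the corollary is essentially a careful extraction of quantitative rates from the Grönwall step already performed in the proof of Theorem~\ref{thm}, combined with the explicit conversion constants supplied by Lemma~\ref{lem:equivalence}.
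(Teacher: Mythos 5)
Your proposal is correct and takes precisely the route the paper intends: the published proof is only the one-line "Follows from Lemma~\ref{lem:equivalence} after optimisation over $\xi$, $\bo$ and $d$," and you are filling in exactly that optimisation, using the Grönwall bound from the proof of Theorem~\ref{thm}, the two quantitative inequalities of Lemma~\ref{lem:equivalence}, and the bound $\mu\le N^{-1}$. Your choices $\xi=\tfrac{1}{12}-\rho$, $\bo\in\bigl(\tfrac56,\min\{\tfrac56+2\rho,\tfrac{2}{2+\delta}\}\bigr)$ and any $d\in(\tfrac56,\bo)$ do satisfy the structural constraints of Proposition~\ref{prop:gamma:GP} and make all three terms of $B$ dominated by $N^{-\frac{1}{12}+\rho}+(N\varepsilon^\delta)^{\frac{3}{12}-3\rho}$, so the corollary follows.
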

\begin{proof}
Follows from Lemma~\ref{lem:equivalence} after optimisation over $\xi$, $\bo$ and $d$.
\end{proof}

\begin{remark}
For $\Vp=0$, one obtains $\norm{\Phi(t)}_{H^2(\R)}\ls C(\norm{\Phi_0}_{H^2(\R)})$ uniformly in $t$, where $C(\norm{\Phi_0}_{H^2(\R)})$ denotes some expression depending only on $\norm{\Phi_0}_{H^2(\R)}$ \cite[Exercise 3.36]{tao}\footnote{To prove this, one observes that the quantity 
$E_2(\Phi):=\int_\R\left(|\partial_x^2\Phi|^2+c_1|\partial_x\Phi|^2|\Phi|^2+c_2\Re((\overline{\Phi}\partial_x\Phi)^2)+c_3|\Phi|^6\right)\d x$ 
is conserved for solutions of \eqref{NLS} with $\Vp=0$, where $c_1$, $c_2$ and $c_3$ denote some absolute constants.}. Defining $\tilde{\mathfrak{e}}:=1+|E^{\psi_0^{N,\varepsilon}}(0)|+|\mathcal{E}^{\Phi_0}(0)|+(C(\norm{\Phi_0}_{H^2(\R)}))^2$ in analogy to \eqref{def:e}, we obtain the rate
$$\Tr\left|\gamma^{(1)}_{\psi^{N,\varepsilon}(t)}-|\phe(t)\rangle\langle\phe(t)|\right|\; \ls\; 
\left(A(0)+N^{-\frac{1}{12}+\rho}+\left(N\varepsilon^\delta\right)^{\frac{3}{12}-3\rho} \right)^\frac12
\exp\left\{\,\tilde{\mathfrak{e}}\,t\right\},
$$
where the growth in time is exponential instead of doubly exponential.
\end{remark}

\section{Proofs of the propositions}
\subsection{Preliminaries}
In this section, we collect some useful lemmata, which are for the most part taken from \cite{NLS} and we refer to this work for the proofs. Lemma~\ref{lem:a_priori:w12} contains additional statements following \cite[Proposition~A.2]{pickl2015}.
We will from now on always assume that assumptions A1 -- A4 are satisfied.

\begin{lem}\label{lem:l}
Let $f:\mathbb{N}_0\rightarrow\mathbb{R}_0^+$, $d\in\mathbb{Z}$, $\rho\in\{a,b\}$ and $\nu\in\{c,d,e,f\}$. Then
\lemit{
 	\item $\onorm{\hat{f}}=\onorm{\hat{f}_d}=\onorm{\hat{f}^\frac{1}{2}}^2=\sup\limits_{0\leq k\leq N}f(k)$,\label{lem:l:1} 
	\item $\onorm{\hat{m}^\rho}\leq N^{-1+\xi}$, \; $\onorm{\hat{m}^\nu}\ls N^{-2+3\xi}$ \, and \, $\onorm{\hat{r}}\ls N^{-1+\xi},$ \label{lem:l:2}
	\item $\norm{\hat{m}^\rho q_1\psi^{N,\varepsilon}(t)}\ls N^{-1},$\label{lem:l:3}
	\item	\label{lem:fqq:2}
			$\norm{\hat{f}q_1q_2\psi^{N,\varepsilon}(t)}^2\ls\norm{\hat{f}\,\hat{n}^2\psi^{N,\varepsilon}(t)}^2.$
}
\end{lem}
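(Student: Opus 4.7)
The plan is to exploit the spectral decomposition given by the mutually orthogonal projections $P_k$ of Definition~\ref{def:p}, which satisfy $P_k P_l = \delta_{kl} P_k$ and $\sum_{k=0}^N P_k = \mathbbm 1$ on $L^2_+(\R^{3N})$. Part (a) is then immediate, since $\hat f = \sum_k f(k) P_k$ is already in spectral form: $\onorm{\hat f} = \sup_k f(k)$, and the identities for $\hat f_d$ and $\hat f^{1/2} = \sum_k \sqrt{f(k)} P_k$ follow by reindexing and squaring. For part (b), by (a) I need only bound the weights $m^\sharp$ pointwise, which I plan to do by splitting into the two regimes defining $m$. In the linear regime $k+2 < N^{1-2\xi}$ the differences are explicit: $m^a(k) = -\tfrac12 N^{-1+\xi}$ and $m^b(k) = -N^{-1+\xi}$, while the second differences $m^c,\dots,m^f$ vanish. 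In the square-root regime $k \geq N^{1-2\xi}$ a first-order Taylor estimate of $k \mapsto \sqrt{k/N}$ yields $|m^a(k)|,|m^b(k)| \ls N^{-1/2} k^{-1/2} \leq N^{-1+\xi}$, and a second-order one yields $|m^c(k)|,\dots,|m^f(k)| \ls N^{-1/2} k^{-3/2} \leq N^{-2+3\xi}$. The finite transition region is handled by taking the larger of the two branches, which stays within the claimed bound because the two definitions of $m$ agree up to $O(N^{-1-\xi})$ at the seam. The bound on $\hat r$ follows from these estimates and the triangle inequality, using $\onorm{p_j p_l}, \onorm{p_j q_l} \leq 1$.

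For parts (c) and (d) the workhorse is the combinatorial identity
\begin{equation*}
q_j P_k \;=\; P_k q_j \;=\; \sum_{\substack{J \subseteq \{1,\dots,N\}\\ |J|=k,\, j \in J}} \prod_{i \in J} q_i \prod_{l \notin J} p_l,
\end{equation*}
which is immediate from $p_j q_j = 0$. Summing over $j$ gives $\sum_j q_j P_k = k P_k$, so by the bosonic symmetry of $\psi^{N,\varepsilon}(t)$,
\begin{equation*}
\llr{\psi^{N,\varepsilon}(t), q_1 P_k q_1 \psi^{N,\varepsilon}(t)} \;=\; \tfrac{k}{N}\, \llr{\psi^{N,\varepsilon}(t), P_k \psi^{N,\varepsilon}(t)}.
\end{equation*}
Inserting this into the spectral expansion converts $\norm{\hat m^\rho q_1 \psi^{N,\varepsilon}(t)}^2$ into $\sum_k (m^\rho(k))^2 \tfrac{k}{N} \llr{\psi^{N,\varepsilon}(t), P_k \psi^{N,\varepsilon}(t)}$, and part (c) then reduces to the uniform bound $(m^\rho(k))^2\, k/N \ls N^{-2}$: in the square-root regime the factor $k/N$ cancels the $1/k$ coming from $(m^a)^2$, and in the linear regime $k/N \leq N^{-2\xi}$ absorbs the excess $N^{2\xi}$ in $(m^\rho)^2$. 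For part (d) I plan to iterate the same symmetry trick with two indices, obtaining
\begin{equation*}
\llr{\psi^{N,\varepsilon}(t), q_1 q_2 P_k q_1 q_2 \psi^{N,\varepsilon}(t)} \;=\; \tfrac{k(k-1)}{N(N-1)}\, \llr{\psi^{N,\varepsilon}(t), P_k \psi^{N,\varepsilon}(t)} \;\leq\; \left(\tfrac{k}{N}\right)^2 \llr{\psi^{N,\varepsilon}(t), P_k \psi^{N,\varepsilon}(t)},
\end{equation*}
and recognising the resulting sum against $f(k)^2$ as $\norm{\hat f \hat n^2 \psi^{N,\varepsilon}(t)}^2$ with $n(k)^2 = k/N$.

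The main obstacle will be the case analysis in part (b) at the transition $k \sim N^{1-2\xi}$, where one must verify that the discrete second differences, which straddle both branches of $m$, still satisfy the claimed $N^{-2+3\xi}$ bound; the remaining steps are routine spectral calculus and symmetry exploitation.
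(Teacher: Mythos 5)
Your proposal is essentially the route taken in the paper and in its precursor \cite{NLS}, to which the paper delegates parts (a), (c), (d). The spectral decomposition $\sum_k P_k=\mathbbm{1}$, the identity $q_jP_k=P_kq_j$ supported only on index sets $J\ni j$, hence $\sum_j q_jP_k=kP_k$ and its two-index analogue $\sum_{i\neq j}q_iq_jP_k=k(k-1)P_k$, and the use of bosonic symmetry of $\psi^{N,\varepsilon}(t)$ to convert $\llr{\psi,q_1P_kq_1\psi}$ into $\tfrac{k}{N}\llr{\psi,P_k\psi}$ are exactly the ingredients of the argument in \cite{NLS}. Your pointwise bounds $(m^\rho(k))^2\,k/N\ls N^{-2}$ (split into the two regimes of $m$) for (c) and $\tfrac{k(k-1)}{N(N-1)}\leq\left(\tfrac{k}{N}\right)^2$ for (d) are correct, and your statement of (a) via the spectral form is the intended one.

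For part (b) the paper argues slightly more cleanly: it extends $m$ to a $C^1$ function on $[0,\infty)$ (the two branches match in value \emph{and} first derivative at the seam $k_0=N^{1-2\xi}$), computes $|m'(k)|\leq\tfrac12 N^{-1+\xi}$ and $|m''(k)|\leq\tfrac14 N^{-2+3\xi}$ a.e., and bounds the discrete first and second differences by the mean value theorem, so no separate seam case is needed. Your explicit two-regime split is fine away from the seam, but your treatment of the transition window is off: you assert the two branches "agree up to $O(N^{-1-\xi})$," whereas what the argument needs --- and what is actually true, precisely because value and slope match at $k_0$ and $|m''|\ls N^{-2+3\xi}$ on both sides --- is agreement to $O(N^{-2+3\xi})$ on a unit window about $k_0$. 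In the regime the paper later selects ($\xi<\tfrac{\bo}{6}<\tfrac16<\tfrac14$) one has $N^{-1-\xi}\gg N^{-2+3\xi}$, so the order you stated would not close the estimate for the second differences $m^c,\dots,m^f$ that straddle the seam. The fix is routine (either a second-order Taylor comparison of the two branches at $k_0$, or simply the mean-value-theorem argument on the $C^1$ extension), but the quantitative claim as written would leave a gap.
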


\begin{proof} 
Assertions (a), (c) and (d) are proven in \cite{NLS}, Lemma~4.1 and 4.4. For part (b), note that
\begin{equation*}
m'(k)= 
\begin{cases}
      \frac{1}{2\sqrt{kN}}& \text{for}\;\; k\geq N^{1-2\xi}, \vspace{2pt}\\
      \frac12N^{-1+\xi} & \text{else}\\
\end{cases} 
\quad\text{and}\quad
m''(k)=
\begin{cases}
      -\frac{1}{4\sqrt{k^3N}}& \text{for}\;\; k\geq N^{1-2\xi},\vspace{2pt} \\
      0 & \text{else},
\end{cases} 
\end{equation*}
where $'\equiv\tfrac{\d}{\d k}$. Hence $|m'(k)|\leq \tfrac12 N^{-1+\xi}$ and $|m''(k)|\leq\tfrac14 N^{-2+3\xi}$ for any $k\geq0$. By the mean value theorem, this implies e.g.~$|m^a(k)|\ls N^{-1+\xi}$ and $|m^c(k)|\ls N^{-2+3\xi}$. The other expressions work analogously.
\end{proof}

\begin{lem}\label{lem:commutators}
Let $f,g:\mathbb{N}_0\rightarrow\mathbb{R}_0^+$ be any weights and $i,j\in\{1,\dots,N\}$. 
\lemit{
	\item	\label{lem:commutators:1} For $k\in\{0,\dots,N\}$,
			$$\hat{f}\,\hat{g}=\hat{fg}=\hat{g}\hat{f},\qquad\hat{f}p_j=p_j\hat{f},
			\qquad\hat{f}q_j=q_j\hat{f}, \qquad\hat{f}P_k=P_k\hat{f}.$$
	\item	\label{lem:commutators:2}
			Define $Q_0:=p_j$, $Q_1:=q_j$, $\tilde{Q}_0:=p_ip_j$, $\tilde{Q}_1\in\{p_iq_j,q_ip_j\}$ and $\tilde{Q}_2:=q_iq_j$. 
			Let $S_j$ be an operator acting only on factor $j$ in the tensor product and $T_{ij}$ acting only on $i$ and $j$.
			Then for $\mu,\nu\in\{0,1,2\}$	
			$$ Q_\mu\hat{f}S_jQ_\nu=Q_\mu S_j\hat{f}_{\mu-\nu}Q_\nu \quad \text{ and } \quad
			\tilde{Q}_\mu\hat{f}T_{ij}\tilde{Q}_\nu=\tilde{Q}_\mu T_{ij}\hat{f}_{\mu-\nu}\tilde{Q}_\nu.$$
	\item	\label{lem:commutators:5}
			\begin{equation*}
			[T_{ij},\hat{f}]=[T_{ij},p_ip_j(\hat{f}-\hat{f}_2)+(p_iq_j+q_ip_j)(\hat{f}-\hat{f}_1)].
			\end{equation*}
}
\end{lem}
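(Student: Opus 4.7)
The plan is to reduce everything to two basic facts about the projections $P_k=P_k^\varphi$: they form a family of mutually orthogonal projections summing to $\mathbbm{1}$, and they admit a decomposition isolating one or two tensor factors. Specifically, with the partial projections
\[
\tilde{P}_k^{(j)} := \sum_{\substack{J\subseteq\{1,\ldots,N\}\setminus\{j\}\\|J|=k}}\prod_{l\in J}q_l\prod_{l'\notin J\cup\{j\}}p_{l'},
\qquad
\tilde{P}_k^{(ij)} := \sum_{\substack{J\subseteq\{1,\ldots,N\}\setminus\{i,j\}\\|J|=k}}\prod_{l\in J}q_l\prod_{l'\notin J\cup\{i,j\}}p_{l'},
\]
one has the identities
\[
P_k = p_j\tilde{P}_k^{(j)} + q_j\tilde{P}_{k-1}^{(j)} = p_ip_j\tilde{P}_k^{(ij)} + (p_iq_j + q_ip_j)\tilde{P}_{k-1}^{(ij)} + q_iq_j\tilde{P}_{k-2}^{(ij)},
\]
with the convention that $\tilde{P}_k^{(\cdot)}$ vanishes if $k<0$ or exceeds the number of remaining factors.

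For part (a), I observe that $P_kP_l=\delta_{kl}P_k$: any two summands from $P_k$ and $P_l$ either agree on which particles carry $q$, forcing $k=l$, or else contain some factor $p_mq_m=0$. Hence $\hat{f}\hat{g}=\widehat{fg}$. The commutation relations $\hat{f}p_j=p_j\hat{f}$ and $\hat{f}q_j=q_j\hat{f}$ follow because $p_j$ and $q_j$ commute termwise with every factor appearing in every $P_k$, and commutation with $P_k$ is a special case of the commutation of $\hat{f}$ with itself.

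For part (b), I insert the first decomposition into $Q_\mu\hat{f}S_jQ_\nu = \sum_k f(k)Q_\mu P_k S_jQ_\nu$, use $P_kQ_\mu=Q_\mu\tilde{P}_{k-\mu}^{(j)}$, and commute $S_j$ past $\tilde{P}_{k-\mu}^{(j)}$, which act on disjoint factors. Reindexing $l=k-\mu$ produces $Q_\mu S_j\sum_l f(l+\mu)\tilde{P}_l^{(j)}Q_\nu$; performing the analogous manipulation on $Q_\mu S_j\hat{f}_{\mu-\nu}Q_\nu$ (inserting $Q_\nu\tilde{P}_{k-\nu}^{(j)}$ and reindexing) produces the same expression. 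The two-body identity is identical with $\tilde{P}^{(ij)}$ and the three-term decomposition in place of $\tilde{P}^{(j)}$.

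For part (c) the cleanest route is to show that $\hat{f}-A$ commutes with $T_{ij}$, where $A$ denotes the right-hand side of the claimed identity. Substituting the two-body decomposition gives
\[
\hat{f} = \tilde{Q}_0\sum_l f(l)\tilde{P}_l^{(ij)} + \tilde{Q}_1\sum_l f(l+1)\tilde{P}_l^{(ij)} + \tilde{Q}_2\sum_l f(l+2)\tilde{P}_l^{(ij)},
\]
with $\tilde{Q}_0=p_ip_j$, $\tilde{Q}_1=p_iq_j+q_ip_j$, $\tilde{Q}_2=q_iq_j$. The analogous computation for $A$ yields $\tilde{Q}_0\sum_l(f(l)-f(l+2))\tilde{P}_l^{(ij)}+\tilde{Q}_1\sum_l(f(l+1)-f(l+2))\tilde{P}_l^{(ij)}$, and subtracting and using $\tilde{Q}_0+\tilde{Q}_1+\tilde{Q}_2=\mathbbm{1}$ collapses the three sectors into
\[
\hat{f}-A = \sum_l f(l+2)\,\tilde{P}_l^{(ij)},
\]
an operator acting only on the remaining $N-2$ factors, which therefore commutes with $T_{ij}$. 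The main obstacle, such as it is, is purely bookkeeping: tracking the index shifts through the sums and observing that the surviving coefficient $f(l+2)$ is the \emph{same} in all three $\tilde{Q}_\mu$ sectors — this uniformity is precisely what makes the telescoping collapse and is the real content of the identity.
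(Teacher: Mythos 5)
Your proof is correct, and since the paper itself gives no argument but only cites \cite{NLS}, Lemma~4.2, there is nothing in this source to compare against line by line. The approach you take --- introducing the partial projections $\tilde{P}_k^{(j)}$ and $\tilde{P}_k^{(ij)}$ on the remaining tensor factors, verifying the decompositions $P_k = p_j\tilde{P}_k^{(j)} + q_j\tilde{P}_{k-1}^{(j)}$ and its two-body analogue, and then reading off the index shifts --- is the canonical one for this family of weighted-operator commutation lemmas in the Pickl-school literature, so I would be surprised if the cited proof differs in substance.

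The one genuinely nice touch is your treatment of part (c): rather than expanding $[T_{ij},\hat f]$ into sectors and matching terms, you observe that $\hat{f}-A$ collapses to $\sum_l f(l+2)\tilde{P}_l^{(ij)}$, an operator supported entirely on the factors disjoint from $i$ and $j$, and hence trivially commuting with $T_{ij}$. That the coefficient $f(l+2)$ is the same in all three $\tilde{Q}$-sectors after subtracting $A$ is indeed the whole content, and you flag it explicitly. A couple of cosmetic remarks: your local redefinition $\tilde{Q}_1 := p_iq_j + q_ip_j$ in part (c) clashes with the lemma's convention $\tilde{Q}_1\in\{p_iq_j,q_ip_j\}$ from part (b); the meaning is clear from context, but a different symbol would avoid the overload. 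And in part (a), the phrase ``commutation of $\hat f$ with itself'' is loose --- what you mean is that $\hat{f}P_k = P_k\hat{f}$ follows from $P_kP_l = \delta_{kl}P_k$ directly, or equivalently by taking $g=\delta_k$ in $\hat{f}\hat{g}=\hat{g}\hat{f}$. You also silently rely on the convention $\tilde{P}_k^{(\cdot)}=0$ for $k<0$ or $k$ exceeding the available number of factors to make the reindexing in part (b) produce identical effective summation ranges on both sides; you state the convention but it is worth noting that this is precisely where the boundary terms of $\hat{f}_{\mu-\nu}$ (with its shifted summation range) and of $\hat{f}$ are reconciled. These are all presentational; the mathematics is sound.
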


\begin{proof}
\cite{NLS}, Lemma~4.2.
\end{proof}

\begin{lem}\label{lem:derivative_m}
Let $f:\mathbb{N}_0\rightarrow\R^+_0$. Then
\lemit{
	\item	\label{lem:derivative_m:1}
			$P_k,\;\hat{f}\in \mathcal{C}^1\big(\R,\mathcal{L}\left(L^2(\R^{3N})\right)\big)$ for $0\leq k\leq N$,
	\item	\label{lem:derivative_m:2}
			$\left[-\Delta_{y_j}+\tfrac{1}{\varepsilon^2}V^\perp(\tfrac{y_j}{\varepsilon}),\hat{f}\right]=0$ for $1\leq j \leq N$,
	\item	\label{lem:derivative_m:3}
			$\tfrac{\d}{\d t}\hat{f}=i\Big[\hat{f},\sum\limits_{j=1}^N h_j(t)\Big],$\\
			where $h_j(t)$ denotes the one-particle operator corresponding to $h(t)$ from \eqref{NLS} acting on the $j$\textsuperscript{th} factor in $L^2(\R^{3N})$.
}
\end{lem}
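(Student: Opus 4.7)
For part (a), the plan is to reduce everything to $C^1$-regularity in operator norm of the one-body rank-one projection $p(t)=|\phe(t)\rangle\langle\phe(t)|$ on $L^2(\R^3)$. Assumption A3 together with the global $H^2$-theory for the 1D Gross--Pitaevskii equation~\eqref{NLS} (see \cite[Appendix~A]{NLS}) yields $\Phi\in C(\R,H^2(\R))\cap C^1(\R,L^2(\R))$, so $\phe=\Phi\chie\in C^1(\R,L^2(\R^3))$. Using the algebraic identity $|a\rangle\langle a|-|b\rangle\langle b|=|a-b\rangle\langle a|+|b\rangle\langle a-b|$ together with the standard bound $\bigl\||\phi\rangle\langle\psi|\bigr\|_{\mathrm{op}}\leq\|\phi\|_{L^2}\|\psi\|_{L^2}$, $L^2$-differentiability of $\phe$ transfers to operator-norm differentiability of $p$ with
$$\partial_t p=|\partial_t\phe\rangle\langle\phe|+|\phe\rangle\langle\partial_t\phe|.$$
Since $p_j$ is obtained from $p$ by tensoring with identities on the remaining factors, and $q_j=\mathbbm{1}-p_j$, the product rule then gives $C^1$-regularity of each summand $\prod_{j\in J}q_j\prod_{l\notin J}p_l$, hence of $P_k$, and of the finite linear combination $\hat{f}=\sum_{k=0}^N f(k)P_k$.

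For part (b), I would exploit that $\chie$ is the ground state of $H^\perp_j:=-\Delta_{y_j}+\tfrac{1}{\varepsilon^2}V^\perp(\tfrac{y_j}{\varepsilon})$ with eigenvalue $E_0/\varepsilon^2$, so $H^\perp_j$ commutes with $\pc_j$ acting on the same $y_j$ slot. Since $\pp_j$ acts on $x_j$ while $H^\perp_j$ acts only on $y_j$, we also have $[H^\perp_j,\pp_j]=0$; hence $[H^\perp_j,p_j]=[H^\perp_j,\pp_j\pc_j]=0$ and consequently $[H^\perp_j,q_j]=[H^\perp_j,\mathbbm{1}-p_j]=0$. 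For $l\neq j$, $H^\perp_j$ commutes trivially with $p_l$ and $q_l$ as they act on different many-body factors. Since each summand in $P_k$ is a product carrying exactly one projection per site, we obtain $[H^\perp_j,P_k]=0$ and therefore $[H^\perp_j,\hat{f}]=0$.

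For part (c), the plan is to compute $\tfrac{\d}{\d t}p$ on $L^2(\R^3)$ first. Viewing $h(t)$ from \eqref{NLS} as the one-particle operator on $L^2(\R^3)$ acting trivially in $y$, the equation $i\partial_t\Phi=h(t)\Phi$ together with the time-independence of $\chie$ gives $i\partial_t\phe=h(t)\phe$; combined with self-adjointness of $h$, the rank-one computation yields
$$\partial_t p=|\partial_t\phe\rangle\langle\phe|+|\phe\rangle\langle\partial_t\phe|=-i\,h\,p+i\,p\,h=i[p,h].$$
Lifting to the $j$-th factor gives $\tfrac{\d}{\d t}p_j=i[p_j,h_j]$; since $h_k$ commutes with $p_j$ whenever $k\neq j$, this can be rewritten as $\tfrac{\d}{\d t}p_j=i[p_j,\sum_k h_k]$, and the same holds for $q_j$. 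A short telescoping Leibniz computation then shows that if each $A_m$ in a product $A_1\cdots A_N$ satisfies $\tfrac{\d}{\d t}A_m=i[A_m,H_{\mathrm{tot}}]$ with $H_{\mathrm{tot}}:=\sum_j h_j$, then so does the product itself; summing this identity over the summands of $P_k$ and weighting by $f(k)$ yields $\tfrac{\d}{\d t}\hat{f}=i[\hat{f},\sum_j h_j]$. The only step requiring input beyond algebraic manipulation is the $L^2$-differentiability of $\Phi$ used in (a), which is the one mildly technical ingredient imported from the $H^2$-wellposedness theory for \eqref{NLS}.
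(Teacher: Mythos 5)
Your proof is correct and follows the standard argument one would expect; the paper itself only gives a citation to \cite[Lemma~4.3]{NLS} for all three parts, and the approach there (reducing everything to the $C^1$-in-$L^2$ regularity of $\phe$ coming from $H^2$ wellposedness of \eqref{NLS}, the eigenstate property of $\chie$, and a Leibniz/telescoping computation for products of projections) is the same as yours.
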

\begin{proof}
\cite{NLS}, Lemma~4.3.
\end{proof}

\begin{lem}\label{lem:Gamma:Lambda}
Let $\Gamma,\Lambda\in L^2(\R^{3N})$ be symmetric in the coordinates $\{z_2\mydots z_N\}$, let $r_2$ and $s_2$ denote operators acting only on the second factor of the tensor product, and let $F:\R^3\times\R^3\to\R^d$ for $d\in\mathbb{N}$. Then
$$\left|\llr{\Gamma,r_2F(z_1,z_2)s_2\Lambda}\right|\leq\norm{\Gamma}\left(\norm{s_2F(z_1,z_2)r_2\Lambda}^2+\tfrac{1}{N-1}\norm{r_2F(z_1,z_2)s_2\Lambda}^2\right)^\frac12.$$
\end{lem}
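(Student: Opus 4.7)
The plan is to exploit the symmetry of $\Gamma,\Lambda$ in the coordinates $\{z_2,\ldots,z_N\}$ and then apply Cauchy--Schwarz twice. Setting $A_j:=r_j F(z_1,z_j) s_j$ for $j\geq 2$, with $r_j,s_j$ the liftings of $r_2,s_2$ to the $j$th tensor factor, the symmetry yields
$$\llr{\Gamma,r_2 F(z_1,z_2) s_2 \Lambda}=\tfrac{1}{N-1}\sum_{j=2}^N\llr{\Gamma,A_j\Lambda}.$$
A first Cauchy--Schwarz then reduces the task to bounding $\bigl\|\sum_{j=2}^N A_j\Lambda\bigr\|^2$, which I expand into a diagonal contribution $(N-1)\|r_2 F(z_1,z_2) s_2\Lambda\|^2$ (using again the symmetry of $\Lambda$) and an off-diagonal sum $\sum_{j\neq k}\llr{A_j\Lambda,A_k\Lambda}$.

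The heart of the proof is the commutation identity $A_j^*A_k=A_k A_j^*$ for $j\neq k$ with $j,k\geq 2$. Indeed, the elementary pieces of $A_j^*=s_j^*F(z_1,z_j)^*r_j^*$ act on the factors $\{1,j\}$ and those of $A_k=r_k F(z_1,z_k)s_k$ on $\{1,k\}$; on the common factor $1$ both act merely by multiplication and $F(z_1,z_j)^*$ commutes with $F(z_1,z_k)$, while on the factors $j,k$ they act on disjoint tensor factors. Rewriting
$$\llr{A_j\Lambda,A_k\Lambda}=\llr{\Lambda,A_j^* A_k\Lambda}=\llr{\Lambda,A_k A_j^*\Lambda}=\llr{A_k^*\Lambda, A_j^*\Lambda}$$
and applying a second Cauchy--Schwarz together with the symmetry of $\Lambda$ gives $|\llr{A_j\Lambda,A_k\Lambda}|\leq\|A_2^*\Lambda\|^2=\|s_2 F(z_1,z_2) r_2 \Lambda\|^2$ (the last equality using that the $r_2,s_2$ arising in the applications are self-adjoint and $F$ is real-valued).

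Summing over the $(N-1)(N-2)$ off-diagonal pairs and combining with the diagonal term leads to
$$\bigl\|\sum_{j=2}^N A_j\Lambda\bigr\|^2\leq (N-1)\|r_2 F(z_1,z_2) s_2\Lambda\|^2+(N-1)^2\|s_2 F(z_1,z_2) r_2 \Lambda\|^2,$$
and dividing by $(N-1)^2$ and taking a square root yields the claimed inequality. The main obstacle is the verification of $A_j^*A_k=A_k A_j^*$; once one recognises that operators on distinct tensor factors commute automatically and that $F(z_1,z_j)^*$ and $F(z_1,z_k)$ commute as multiplications on the shared factor $1$, the identity is immediate and the remainder of the argument is just the two bookkeeping applications of Cauchy--Schwarz described above.
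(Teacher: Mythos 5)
Your proof is correct and uses the standard approach for lemmas of this kind in Pickl's method: symmetrize over $j$, apply Cauchy--Schwarz, expand the square, and exploit the commutation $A_j^*A_k=A_kA_j^*$ for $j\neq k$ (multiplications on the shared factor $1$ commute, and the remaining factors are disjoint) to control the $(N-1)(N-2)$ off-diagonal terms by $\norm{A_2^*\Lambda}^2$, which together with the diagonal contribution and the prefactor $\tfrac{1}{N-1}$ reproduces the stated bound exactly. You correctly flag that the identification $\norm{A_2^*\Lambda}=\norm{s_2F(z_1,z_2)r_2\Lambda}$ uses self-adjointness of $r_2,s_2$ and realness of $F$; this is consistent with the lemma's formulation and with every application in the paper, where $r_2,s_2$ are projections.
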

\begin{proof}
\cite{NLS}, Lemma~4.7.
\end{proof}

\begin{lem}\label{lem:Phi}
The nonlinear equation \eqref{NLS} is well-posed and $H^2(\R)$ solutions exist globally, i.e.~for any initial datum $\Phi_0\in H^2(\R)$, it follows that $\Phi(t)\in H^2(\R)$ for any $t\in\R$. Besides, for sufficiently small $\varepsilon$,
\lemit{
	\item 	\label{lem:Phi:1}
			\begin{tabular}{p{4.5cm}p{5cm}}
			$\norm{\Phi(t)}_{L^2(\R)}=1,$ &
			$\norm{\Phi(t)}_{H^1(\R)}\leq\mathfrak{e}(t),$
			\end{tabular}\\
			\begin{tabular}{p{4.5cm}p{10cm}}
			$\norm{\Phi(t)}_{L^\infty(\R)}\leq\mathfrak{e}(t),$ & 
			$\norm{\Phi'}_{L^\infty(\R)}\leq\norm{\Phi(t)}_{H^2(\R)}\ls \exp\left\{\mathfrak{e}^2(t)+\int_0^t\mathfrak{e}^2(s)\d s\right\},$
			\end{tabular}
	\item	\label{cor:varphi:1} 
			\begin{tabular}{p{4.5cm}p{5cm}}
			$\norm{\phe(t)}_{L^\infty(\R^3)}\ls\mathfrak{e}(t)\varepsilon^{-1},$ &
			$\norm{\na\phe(t)}_{L^\infty(\R^3)}\ls\mathfrak{e}(t)\varepsilon^{-2}.$
			\end{tabular}
}
\end{lem}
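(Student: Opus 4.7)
The plan is to establish~(a) by a standard local well-posedness plus energy-estimate argument for the one-dimensional cubic NLS with a time-dependent external potential, and then derive~(b) by direct computation from~(a). Local existence in $H^2(\R)$ follows from a contraction-mapping argument in $\mathcal{C}([0,T_*];H^2(\R))$: the map $\Phi\mapsto|\Phi|^2\Phi$ is locally Lipschitz on $H^2(\R)$ because $H^2(\R)$ is a Banach algebra via the Sobolev embedding $H^2(\R)\hookrightarrow L^\infty(\R)$, and by Assumption~A3 the multiplier $\Vp(t,(\cdot,0))\in H^4(\R)$ is bounded on $H^2(\R)$; compare \cite[Ch.~3]{tao}. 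Global extension and the quantitative bounds will follow from the a priori estimates derived below.

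For the $L^2$ bound, pair \eqref{NLS} with $\bar\Phi$ and take imaginary parts to get $\tfrac{\d}{\d t}\norm{\Phi(t)}_{L^2}^2=0$. For the $H^1$ bound, a direct computation using the equation and self-adjointness of $-\partial_x^2+\Vp+b|\Phi|^2$ gives
\begin{equation*}
\tfrac{\d}{\d t}\mathcal{E}^{\Phi(t)}(t)=\lr{\Phi(t),\dot{\Vp}(t,(\cdot,0))\Phi(t)}_{L^2(\R)},
\end{equation*}
whence integrating and invoking \eqref{def:e} produces $|\mathcal{E}^{\Phi(t)}(t)|\leq\mathfrak{e}^2(t)-1$. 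Since $b>0$, the cubic contribution to $\mathcal{E}^\Phi$ is non-negative, so
\begin{equation*}
\norm{\Phi'(t)}_{L^2}^2\leq\mathcal{E}^{\Phi(t)}(t)+\norm{\Vp(t,\cdot)}_{L^\infty(\R^3)}\ls\mathfrak{e}^2(t),
\end{equation*}
giving $\norm{\Phi(t)}_{H^1}\leq\mathfrak{e}(t)$; the $L^\infty$ bound then follows from the one-dimensional Sobolev embedding $H^1(\R)\hookrightarrow L^\infty(\R)$.

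The main technical step is the $H^2$ bound. I apply $\partial_x^2$ to \eqref{NLS}, pair with $\partial_x^2\Phi$, and take imaginary parts: the fourth-derivative term drops, the linear-potential contribution is controlled by $\norm{\Vp(t,(\cdot,0))}_{H^2}\norm{\Phi(t)}_{H^2}^2$ using A3, and the cubic term is estimated after integration by parts using the already-established $L^\infty$ bound on $\Phi$ together with the one-dimensional Gagliardo--Nirenberg interpolation $\norm{\Phi'}_{L^4}^2\ls\norm{\Phi'}_{L^2}^{3/2}\norm{\Phi''}_{L^2}^{1/2}$. Absorbing sub-leading factors by Young's inequality yields an inequality of the form $\tfrac{\d}{\d t}\norm{\Phi''(t)}_{L^2}^2\ls\mathfrak{e}^2(t)(1+\norm{\Phi(t)}_{H^2}^2)$, and Gronwall then produces the claimed double-exponential bound; this $H^2$ control rules out blow-up and extends the solution globally, after which $H^2(\R)\hookrightarrow W^{1,\infty}(\R)$ gives the $L^\infty$ bound on $\Phi'$. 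I expect the Gronwall step to be the main obstacle: one must track the coefficient $\mathfrak{e}^2(t)$ exactly and balance the cubic nonlinearity against only a small fraction of the top-order term, because the time-dependent external field prevents the higher-order conservation law $E_2(\Phi)$ that the subsequent remark invokes for the case $\Vp=0$.

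Finally, (b) follows from the factorization $\phe(t,z)=\Phi(t,x)\chie(y)$ and the scaling \eqref{eqn:chie}, which by Assumption~A2 gives $\norm{\chie}_{L^\infty(\R^2)}\ls\varepsilon^{-1}$ and $\norm{\na_y\chie}_{L^\infty(\R^2)}\ls\varepsilon^{-2}$. Combining with~(a) yields $\norm{\phe(t)}_{L^\infty(\R^3)}\leq\norm{\Phi(t)}_{L^\infty}\norm{\chie}_{L^\infty}\ls\mathfrak{e}(t)\varepsilon^{-1}$. For $\na\phe$, the $y$-components give $\norm{\Phi}_{L^\infty}\norm{\na_y\chie}_{L^\infty}\ls\mathfrak{e}(t)\varepsilon^{-2}$, while the $x$-component gives $\norm{\Phi'}_{L^\infty}\norm{\chie}_{L^\infty}$; although $\norm{\Phi'}_{L^\infty}$ is controlled only by the double-exponential factor, that quantity is $\varepsilon$-independent, so for sufficiently small $\varepsilon$ the product is dominated by $\mathfrak{e}(t)\varepsilon^{-2}$, which is where the ``sufficiently small $\varepsilon$'' hypothesis in the statement enters.
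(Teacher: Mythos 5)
The paper gives no proof of its own here: its proof of Lemma \ref{lem:Phi} is a single citation to \cite{NLS}, Lemma 4.8, so a line-by-line comparison is not possible. Your argument for part~(a) — local well-posedness by contraction in $H^2(\R)$ using the algebra property, $L^2$ conservation, the energy identity $\tfrac{\d}{\d t}\mathcal{E}^{\Phi(t)}(t)=\lr{\Phi,\dot{\Vp}\Phi}$ to get the $H^1$ bound, and a Gronwall argument at the $\partial_x^2$ level for $H^2$ — is the standard route and is sound in outline. Two bookkeeping remarks: the Lemma asserts $\norm{\Phi(t)}_{H^1(\R)}\leq\mathfrak{e}(t)$ and $\norm{\Phi(t)}_{L^\infty(\R)}\leq\mathfrak{e}(t)$ with constant exactly one, while your chain $\norm{\Phi'}_{L^2}^2\leq\mathcal{E}^{\Phi(t)}(t)+\norm{\Vp(t,\cdot)}_{L^\infty}$ only yields $\norm{\Phi}_{H^1}\ls\mathfrak{e}(t)$ as written, so the constants need to be tracked a bit more carefully; and in the $H^2$ Gronwall step the coefficients involve $\partial_x\Vp(t,(\cdot,0))$ and $\partial_x^2\Vp(t,(\cdot,0))$, which are \emph{not} among the terms entering $\mathfrak{e}^2(t)$ in \eqref{def:e} — these must instead be absorbed into the $\ls$ via the $H^4$ regularity of $\Vp(t,(\cdot,0))$ from A3, which tacitly requires a bound uniform in $t$.

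The genuine gap is in your treatment of part~(b). For the $x$-component of $\na\phe$ you obtain $\norm{\Phi'\chie}_{L^\infty}\leq\norm{\Phi'}_{L^\infty}\norm{\chie}_{L^\infty}\ls\exp\{\mathfrak{e}^2(t)+\int_0^t\mathfrak{e}^2(s)\d s\}\,\varepsilon^{-1}$, and then argue that this is dominated by $\mathfrak{e}(t)\varepsilon^{-2}$ ``for sufficiently small $\varepsilon$''. But the implicit constant in $\ls$ is, by the paper's stated convention, independent of $t$; to make your argument work one would need $\varepsilon\leq C\,\mathfrak{e}(t)\exp\{-\mathfrak{e}^2(t)-\int_0^t\mathfrak{e}^2(s)\d s\}$ for all $t$, and since the right-hand side tends to zero as $t\to\infty$ (even for $\Vp=0$, the integral in the exponent still grows linearly), no fixed $\varepsilon>0$ works. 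So ``sufficiently small $\varepsilon$'' cannot close this step in the way you invoke it. Either the bound is being used only on a compact time interval $[-T,T]$ with $\varepsilon$ allowed to depend on $T$ (which departs from the stated convention and should be said explicitly), or the control of $\norm{\Phi'}_{L^\infty}$ must enter the estimate for $\norm{\na\phe}_{L^\infty}$ in a different way than simple domination by the $y$-part; you should check how \cite{NLS}, Lemma 4.8 actually handles the $\partial_x$-component before accepting this as a proof.
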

\begin{proof}
\cite{NLS}, Lemma~4.8.
\end{proof}

\begin{lem}\label{lem:pfp}
Let $t\in\R$ be fixed and let $j,k\in\{1\mydots N\}$. Let $g:\R^3\times\R^3\rightarrow\R$ and $h:\R\times\R\to\R$ 
be measurable functions such that $|g(z_j,z_k)|\leq G(z_k-z_j)$ and $|h(x_j,x_k)|\leq H(x_k-x_j)$ 
almost everywhere for some $G:\R^3\rightarrow\R$, $H:\R\to\R$. 
 Then
\lemit{
	\item	\label{lem:pfp:1} 	
			$\onorm{p_jg(z_j,z_k)p_j}\ls\mathfrak{e}^2(t)\varepsilon^{-2}\norm{G}_{L^1(\R^3)}$\; for $G\in L^1(\R^3)$, 
	\item	\label{lem:pfp:2}
			$\onorm{g(z_j,z_k)p_j}=\onorm{p_jg(z_j,z_k)}\ls \mathfrak{e}(t)\varepsilon^{-1}\norm{G}_{L^2(\R^3)}$\; for $G\in L^2\cap L^\infty(\R^3)$, 
	\item	\label{lem:pfp:3} 
			$\onorm{g(z_j,z_k)\na_jp_j}\ls\mathfrak{e}(t)\varepsilon^{-2}\norm{G}_{L^2(\R^3)}$\; for $G\in L^2(\R^3)$, 
	\item 	\label{lem:pfp:4}	
			$\onorm{ h(x_j,x_k)\pp_j}=\onorm{\pp_j h(x_j,x_k)}\leq\mathfrak{e}(t)\norm{H}_{L^2(\R)}$ for $H\in L^2\cap L^\infty(\R)$.  
}
\end{lem}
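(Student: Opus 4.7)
The plan is to reduce each of the four bounds to a Schur-type one-particle estimate by writing out the projections explicitly and exploiting the tensor-product structure $\phe=\Phi\chie$. In each case, after applying the operator to a test $\psi\in L^2(\R^{3N})$, the norm-squared collapses (by Fubini) to a one-coordinate integral in $z_j$ or $x_j$ which is bounded uniformly in the remaining free variable $z_k$ or $x_k$, and all the $\varepsilon$-dependence is then read off from the $L^\infty$-bounds on $\phe$, $\nabla\phe$, $\Phi$ furnished by Lemma~\ref{lem:Phi}.

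For (a), I would first note that since $p_j=|\phe(z_j)\rangle\langle\phe(z_j)|$ acts only on coordinate $j$, the operator $p_j g(z_j,z_k) p_j$ factorises as $p_j$ composed with multiplication (in $z_k$ and the remaining variables) by the scalar kernel
$$G_\phe(z_k)\;:=\;\langle\phe,\,g(\cdot,z_k)\phe\rangle_{L^2(\R^3)},$$
whence $\onorm{p_j g p_j}\leq\sup_{z_k}|G_\phe(z_k)|$. Pulling out one factor of $\norm{\phe}_\infty$ and recognising the remaining integral as $\int|\phe(z_j)|\,|g(z_j,z_k)|\,dz_j\leq\norm{\phe}_\infty\norm{G}_{L^1}$ gives $|G_\phe(z_k)|\leq\norm{\phe}_\infty^2\norm{G}_{L^1}$, and Lemma~\ref{cor:varphi:1} supplies $\norm{\phe}_\infty^2\ls\mathfrak{e}^2(t)\varepsilon^{-2}$.

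For (b) and (c), I would keep only one projection and apply Fubini in $z_j$ alone, obtaining
$$\norm{g(z_j,z_k)\,p_j\psi}^2\;\leq\;\sup_{z_k}\!\int|g(z_j,z_k)|^2|\phe(z_j)|^2\,dz_j\cdot\norm{\psi}^2\;\leq\;\norm{\phe}_\infty^2\,\norm{G}_{L^2}^2\,\norm{\psi}^2,$$
which together with Lemma~\ref{cor:varphi:1} proves (b); the adjoint identity follows from self-adjointness of $p_j$. The same computation with $\nabla_j p_j=|\nabla\phe(z_j)\rangle\langle\phe(z_j)|$ in place of $p_j$ yields (c), with $\norm{\phe}_\infty$ replaced by $\norm{\nabla\phe}_\infty\ls\mathfrak{e}(t)\varepsilon^{-2}$. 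For (d), the same recipe applies, but now $\pp_j=|\Phi\rangle\langle\Phi|\otimes\mathbbm{1}_{L^2(\R^2_{y_j})}$ acts trivially on $y_j$ and $h$ depends only on $x_j,x_k$, so the $y_j$-integration yields the factor $\norm{\chie}_{L^2(\R^2)}^2=1$ and only integration in $x_j$ remains; together with Lemma~\ref{lem:Phi:1} this produces $\norm{\Phi}_\infty^2\norm{H}_{L^2(\R)}^2\leq\mathfrak{e}^2(t)\norm{H}_{L^2(\R)}^2$.

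No serious obstacle is anticipated — these are classical one-particle Schur bounds. The only point requiring attention is the dimensional bookkeeping: in (a)--(c) the $L^\infty$-bounds on $\phe$ and $\nabla\phe$ contain factors of $\varepsilon^{-1}$ (respectively $\varepsilon^{-2}$) coming from $\chie(y)=\varepsilon^{-1}\chi(y/\varepsilon)$ and, for the $\nabla_y$-component of $\nabla\phe$, from the extra derivative in $y$; in (d), by contrast, the $y$-coordinate is not projected so $\chie$ is integrated out trivially and no power of $\varepsilon$ survives in the bound.
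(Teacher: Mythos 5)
Your argument is correct and is the standard Schur-bound computation that the paper delegates to \cite{NLS}, Lemma~4.9. One small misstatement in part (d): $\pp_j$ is the identity on the $y_j$-factor and $h$ is $y_j$-independent, so there is no $\chie$ and no factor $\norm{\chie}_{L^2(\R^2)}^2$ to extract --- the $y_j$-integration simply passes through --- but the resulting bound is unaffected.
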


\begin{proof}
\cite{NLS}, Lemma~4.9.
\end{proof}

\begin{lem}\label{lem:a_priori:w12}
Let $\varepsilon$ be sufficiently small and $t\in\R$ be fixed. Then
\lemit{
	\item	\label{lem:a_priori:4}
			\hspace{-0.2cm}\begin{tabular}{p{3.6cm}p{3.5cm}p{5cm}}
			$\onorm{\partial_{x_1}\pp_1}\leq\mathfrak{e}(t),$ & $\onorm{\na_{y_1}\pc_1}\ls\varepsilon^{-1}$, 
			& $\onorm{\partial^2_{x_1}p_1}\leq\norm{\Phi(t)}_{H^2(\R)}$,
 			\end{tabular}
 		\item[]
 			\hspace{-0.2cm}\begin{tabular}{p{3.6cm}p{3.5cm}p{4cm}}
 			$\norm{\qc_1\psi^{N,\varepsilon}(t)}\leq \mathfrak{e}(t)\varepsilon$, &
 			$\norm{\partial_{x_1}\qp_1\psi}\ls\mathfrak{e}(t)$,&
			$\norm{\na_{y_1}\qc_1\psi^{N,\varepsilon}(t)}\ls\mathfrak{e}(t),$ 
			\end{tabular}	
 		\item[]
 			\hspace{-0.2cm}\begin{tabular}{p{3.6cm}p{3.5cm}p{4cm}}
 			$\norm{\partial_{x_1}\psi^{N,\varepsilon}(t)}\leq\mathfrak{e}(t)$, &
			$\norm{\nabla_{y_1}\psi^{N,\varepsilon}(t)}\ls\varepsilon^{-1}$, &
			$\norm{\na_1\psi^{N,\varepsilon}(t)}\ls\varepsilon^{-1},$
			\end{tabular}
	\item $\left\lVert{\sqrt{w_\mu^{(12)}}\psi^{N,\varepsilon}(t)}\right\rVert\ls \mathfrak{e}(t)N^{-\frac12},\qquad $\label{lem:w12:1}
	\item $\norm{w_\mu^{(12)}\psi^{N,\varepsilon}(t)}\ls \mathfrak{e}(t)N^\frac12\varepsilon^{-2},$\label{lem:w12:2}
	\item $\onorm{p_1\mathbbm{1}_{\supp{w_\mu}}(\z_1-\z_2)}=\onorm{\mathbbm{1}_{\supp{w_\mu}}(\z_1-\z_2)p_1}
	\ls \mathfrak{e}(t)N^{-\frac32}\varepsilon^2,$\label{lem:w12:3}
	\item $\norm{p_1w_\mu^{(12)}\psi^{N,\varepsilon}(t)}\ls \mathfrak{e}^2(t)N^{-1},$\label{lem:w12:4}
	\item 
	$\norm{\left(\Vp(t,z_1)-\Vp(t,(x_1,0))\right)\psi^{N,\varepsilon}(t)}\ls\mathfrak{e}^3(t)\varepsilon.$\label{lem:taylor}
}
\end{lem}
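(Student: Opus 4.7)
The lemma gathers several a priori bounds on $\psi^{N,\varepsilon}(t)$; most of them reduce either to Lemma~\ref{lem:Phi} or to conservation of the renormalised energy $E^\psi(t)$ combined with the spectral gap of the transverse Hamiltonian. I treat the parts in turn.

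The operator-norm bounds of part~(a) involving $\pp_1$, $p_1$ and $\pc_1$ are immediate from the rank-one structure of these projections, the $H^2$-control of $\Phi$ from Lemma~\ref{lem:Phi}, and the rescaling $\chie(y)=\varepsilon^{-1}\chi(y/\varepsilon)$ combined with $\chi\in\mathcal{C}^2_\mathrm{b}$. For the $\psi$-dependent estimates I use the symmetrised decomposition
\[
E^\psi(t)\;=\;\|\partial_{x_1}\psi\|^2+\|\na_{y_1}\psi\|^2+\tfrac{1}{\varepsilon^2}\llr{\psi,(V^\perp(y_1/\varepsilon)-E_0)\psi}+\llr{\psi,\Vp(t,z_1)\psi}+\tfrac{N-1}{2}\llr{\psi,w_\mu^{(12)}\psi}.
\]
All summands except the external-field one are non-negative (the transverse term being $\geq\tfrac{C}{\varepsilon^2}\|\qc_1\psi\|^2$ by the spectral gap above $E_0$), and $|\llr{\psi,\Vp(t,z_1)\psi}|\leq\|\Vp(t,\cdot)\|_{L^\infty}\ls\mathfrak{e}^2(t)$. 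Rearranging yields $\|\partial_{x_1}\psi\|^2\ls\mathfrak{e}^2(t)$ and $\|\qc_1\psi\|^2\ls\varepsilon^2\mathfrak{e}^2(t)$; the bound $\|\na_{y_1}\psi\|^2\ls\varepsilon^{-2}$ follows from the same identity, using the upper bound $E_0/\varepsilon^2$ and the lower bound on $V^\perp$ from A2. The estimates involving $\qp_1$ or $\qc_1$ next to a derivative then follow by inserting $\mathbbm{1}=\pp_1+\qp_1$ (resp.\ $p^\chi_1+q^\chi_1$) and using the operator bounds just established.

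Parts (b)--(e) are short consequences of (a) and of the Gross--Pitaevskii scaling $\mu=\varepsilon^2/N$. For (b), bosonic symmetry yields $\|\sqrt{w_\mu^{(12)}}\psi\|^2=\tfrac{2}{N(N-1)}\llr{\psi,\sum_{i<j}w_\mu^{(ij)}\psi}$, which is bounded by $\tfrac{2}{N-1}E^\psi(t)+\mathcal{O}(\mathfrak{e}^2(t)/N)\ls\mathfrak{e}^2(t)/N$ via the identity above. Part (c) follows from the pointwise inequality $w_\mu^2\leq\|w_\mu\|_{L^\infty}w_\mu$ with $\|w_\mu\|_{L^\infty}\sim N^2/\varepsilon^4$, combined with (b). Part (d) is one application of Lemma~\ref{lem:pfp:2} with $G=\mathbbm{1}_{\supp w_\mu}$, using $\|G\|_{L^2(\R^3)}^2\leq|B_\mu|\ls\mu^3$. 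For (e) I factor $p_1 w_\mu^{(12)}\psi=(p_1\sqrt{w_\mu^{(12)}})\sqrt{w_\mu^{(12)}}\psi$; Lemma~\ref{lem:pfp:1} gives $\onorm{p_1 w_\mu^{(12)} p_1}\ls\mathfrak{e}^2(t)\varepsilon^{-2}\|w_\mu\|_{L^1}\ls\mathfrak{e}^2(t)/N$ (since $\|w_\mu\|_{L^1}=\mu\|w\|_{L^1}$), and combining with (b) gives the claim.

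For (f) I Taylor-expand the external field in the transverse variable to obtain $|\Vp(t,z_1)-\Vp(t,(x_1,0))|\leq\|\na_y\Vp(t,\cdot)\|_{L^\infty}|y_1|$, and split $\psi=p^\chi_1\psi+q^\chi_1\psi$. On the first piece, $\onorm{p^\chi_1|y_1|^2 p^\chi_1}=\lr{\chie,|y|^2\chie}_{L^2(\R^2)}=\varepsilon^2\lr{\chi,|y|^2\chi}\ls\varepsilon^2$, which is finite thanks to the exponential decay of $\chi$ noted in the remark on A2; this contributes $\ls\mathfrak{e}^2(t)\varepsilon$. On the second piece I use the crude bound $\|\Vp(t,\cdot)\|_{L^\infty}\ls\mathfrak{e}^2(t)$ together with $\|q^\chi_1\psi\|\ls\mathfrak{e}(t)\varepsilon$ from (a), yielding $\ls\mathfrak{e}^3(t)\varepsilon$ in total. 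The main technical hurdle across the whole lemma is the bound $\|\qc_1\psi\|\ls\mathfrak{e}(t)\varepsilon$ in (a): the raw energy $\llr{\psi,H(t)\psi}/N$ is of size $\varepsilon^{-2}$, and only the renormalised quantity $E^\psi(t)$ is bounded uniformly, so one must carefully exploit both the spectral gap of the transverse operator and the non-negativity of $w_\mu$ to gain the factor $\varepsilon$ without losing it to the external field or interaction contributions.
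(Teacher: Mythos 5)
Your argument is correct, and for parts (b)–(d) it is essentially the paper's own proof: (b) comes from the renormalised energy identity plus positivity of the transverse operator and the kinetic term; (c) is the pointwise inequality $w_\mu^2\leq\|w_\mu\|_{L^\infty}w_\mu$ combined with (b); (d) is Lemma~\ref{lem:pfp:2} with $G=\mathbbm{1}_{\supp w_\mu}$.

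Where you differ is in (a), (e) and (f). For (a) and (f) the paper simply cites Lemma~4.10 and Lemma~4.11 of \cite{NLS}, whereas you reconstruct the arguments: the energy decomposition and spectral gap for (a), and a Taylor expansion in $y$ together with the splitting $\psi=\pc_1\psi+\qc_1\psi$ and the decay of $\chi$ for (f). Both reconstructions are sound and match what must be going on in the cited lemmas (the key point in (f), that the $\pc_1$ piece contributes $\mathfrak{e}^2(t)\varepsilon$ via $\lr{\chie,|y|^2\chie}\ls\varepsilon^2$ while the $\qc_1$ piece contributes $\mathfrak{e}^3(t)\varepsilon$ via $\|\qc_1\psi\|\ls\mathfrak{e}(t)\varepsilon$, is exactly right). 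For (e) your route is a mild but genuine variant: the paper estimates $\|p_1w_\mu^{(12)}\psi\|\leq\onorm{p_1\mathbbm{1}_{\supp w_\mu}}\cdot\|w_\mu^{(12)}\psi\|$, i.e.\ it uses (d) times (c), while you write $p_1w_\mu^{(12)}\psi=(p_1\sqrt{w_\mu^{(12)}})\sqrt{w_\mu^{(12)}}\psi$ and bound $\onorm{p_1\sqrt{w_\mu^{(12)}}}^2=\onorm{p_1w_\mu^{(12)}p_1}\ls\mathfrak{e}^2(t)\varepsilon^{-2}\|w_\mu\|_{L^1}=\mathfrak{e}^2(t)/N$ by Lemma~\ref{lem:pfp:1}, then combine with (b). Your version is slightly more economical, since it avoids the $L^\infty$-loss in (c) followed by the $L^2$-gain in (d) and instead works directly at the $L^1$ level; both of course give the same $\mathfrak{e}^2(t)N^{-1}$.
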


\begin{proof}
Part (a) is proven in \cite[Lemma~4.10.]{NLS}. $\tfrac{E_0}{\varepsilon^2}$ is the smallest eigenvalue of $-\Delta_y+\tfrac{1}{\varepsilon^2}V^\perp(\tfrac{y}{\varepsilon})$, hence $\llr{\psi^{N,\varepsilon}(t),(-\Delta_{y_1}+\tfrac{1}{\varepsilon^2}V^\perp(\tfrac{y_1}{\varepsilon})-\tfrac{E_0}{\varepsilon^2})\psi^{N,\varepsilon}(t)}\geq 0$. 
This implies (b) as
$$\mathfrak{e}^2(t)\geq|E^{\psi^{N,\varepsilon}(t)}(t)|\geq \tfrac{N-1}{2}\Big\lVert{\sqrt{w_\mu^{(12)}}\psi^{N,\varepsilon}(t)}\Big\rVert^2-\norm{\Vp(t)}_{L^\infty(\R^3)}\gs N\norm{\sqrt{w_\mu^{(12)}}\psi^{N,\varepsilon}(t)}^2-\mathfrak{e}^2(t).$$
For part (c), observe that
$$\norm{w_\mu^{(12)}\psi^{N,\varepsilon}(t)}\leq \norm{w_\mu}_{L^\infty(\R^3)}^\frac12\Big\lVert{\sqrt{w_\mu^{(12)}}\psi^{N,\varepsilon}(t)}\Big\rVert \ls \mu^{-1}\mathfrak{e}(t)N^{-\frac12}.$$
Assertion (d) follows from Lemma~\ref{lem:pfp:2} because $\norm{\mathbbm{1}_{\supp w_\mu}}^2_{L^2(\R^3)}\ls\mu^3$.
Part (e) is a consequence of
$$\norm{p_1w_\mu^{(12)}\psi^{N,\varepsilon}(t)}=\norm{p_1\mathbbm{1}_{\supp{w_\mu}}(\z_1-\z_2)w_\mu^{(12)}\psi^{N,\varepsilon}(t)}\leq \onorm{p_1\mathbbm{1}_{\supp{w_\mu}}(\z_1-\z_2)}\norm{w_\mu^{(12)}\psi^{N,\varepsilon}(t)}.$$
Finally, (f) is proven in \cite[Lemma~4.11]{NLS}.
\end{proof}

\begin{lem}\label{lem:psi-Phi}
Let $\psi\in L_+^2(\R^{3N})$ be normalised and $f\in L^\infty(\R)$. Then
\begin{equation*}
\left|\llr{\psi,f(x_1)\psi}-\lr{\Phi(t),f\Phi(t)}_{L^2(\R)}\right|\ls\norm{f}_{L^\infty(\R)}\llr{\psi,\hat{n}\psi}.
\end{equation*}
\end{lem}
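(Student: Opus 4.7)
The strategy is the standard $p+q$ decomposition of Pickl's method combined with the tensor structure of the condensate. First, I insert $\mathbb{1}=p_1+q_1$ on both sides of $f(x_1)$:
\begin{equation*}
\llr{\psi,f(x_1)\psi}=\llr{\psi,p_1 f(x_1) p_1\psi}+2\Re\llr{\psi,p_1 f(x_1) q_1\psi}+\llr{\psi,q_1 f(x_1) q_1\psi}.
\end{equation*}
Since $f$ depends only on the $x$-coordinate and $\phe=\Phi\chie$ factorises with $\chie$ normalised, the ``pp''-block collapses:
\begin{equation*}
p_1 f(x_1) p_1 = \lr{\Phi\chie,f(x_1)\Phi\chie}_{L^2(\R^3)}\,p_1 = \lr{\Phi,f\Phi}_{L^2(\R)}\,p_1,
\end{equation*}
so $\llr{\psi,p_1 f(x_1) p_1\psi}=\lr{\Phi,f\Phi}(1-\llr{\psi,q_1\psi})$. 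Substituting and rearranging,
\begin{equation*}
\llr{\psi,f(x_1)\psi}-\lr{\Phi,f\Phi}=-\lr{\Phi,f\Phi}\llr{\psi,q_1\psi}+2\Re\llr{\psi,p_1 f(x_1) q_1\psi}+\llr{\psi,q_1 f(x_1) q_1\psi}.
\end{equation*}

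I then estimate the three right-hand terms separately, aiming in each case for a bound of the form $\|f\|_{L^\infty}\llr{\psi,\hat n\psi}$. The first is trivial: $|\lr{\Phi,f\Phi}|\leq\|f\|_{L^\infty}\|\Phi\|_{L^2}^2=\|f\|_{L^\infty}$. The third is handled by Cauchy--Schwarz in $L^2(\R^{3N})$, giving $|\llr{\psi,q_1 f(x_1) q_1\psi}|\leq\|f\|_{L^\infty}\|q_1\psi\|^2=\|f\|_{L^\infty}\llr{\psi,q_1\psi}$. Both contributions are controlled by $\|f\|_{L^\infty}\llr{\psi,\hat n\psi}$ using the fact that the weight $n$ defining $\hat n$ dominates the ``$q_1$-weight'' $k/N$ on each $P_k$-eigenspace, so $\llr{\psi,q_1\psi}\leq\llr{\psi,\hat n\psi}$ for symmetric $\psi$.

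The cross term is the main obstacle, because a direct Cauchy--Schwarz yields only $\|f\|_{L^\infty}\|q_1\psi\|$, which is the square root rather than the linear quantity. The key structural observation is that $p_1 f(x_1) q_1$ intertwines the $P_k$- and $P_{k-1}$-eigenspaces, so the ``diagonal'' parts cancel and only off-diagonal contributions remain. I plan to exploit this by applying the shift-commutation identity of Lemma~\ref{lem:commutators:2} to redistribute a factor of $\hat n$ asymmetrically across $f(x_1)$: formally, writing $q_1=q_1\hat n\hat n^{-1}$ on $\mathrm{ran}(1-P_0)$ (well defined since $q_1 P_0=0$) and moving $\hat n$ to the left of $f(x_1)$ at the cost of a one-step shift, the cross term is recast as $\llr{\hat n_{+1} p_1\psi,\,f(x_1)\,q_1\,\hat n^{-1}\psi}$. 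Cauchy--Schwarz together with the elementary bounds $\|\hat n_{+1} p_1\psi\|^2\ls\llr{\psi,\hat n\psi}$ and $\|q_1\hat n^{-1}\psi\|\leq 1$ then delivers the required bound. Collecting the three contributions yields the lemma.
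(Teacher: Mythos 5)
The paper proves this by citing \cite{NLS}, Lemma~4.6, so there is no in-paper argument to compare with; I assess your proposal on its own. The $p_1+q_1$ decomposition and the bounds for the $p_1\!\cdots p_1$ and $q_1\!\cdots q_1$ blocks are fine. The gap is in the cross term. From your two bounds $\|\hat n_1 p_1\psi\|^2\ls\llr{\psi,\hat n\psi}$ and $\|q_1\hat n^{-1}\psi\|\leq 1$, Cauchy--Schwarz gives
$$\big|\llr{\hat n_1 p_1\psi,\,f(x_1)\,q_1\hat n^{-1}\psi}\big|\ls\|f\|_{L^\infty(\R)}\sqrt{\llr{\psi,\hat n\psi}}\,,$$
which, since $\llr{\psi,\hat n\psi}\leq1$, is \emph{larger} than the target $\|f\|_{L^\infty}\llr{\psi,\hat n\psi}$, not smaller. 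Shifting the \emph{whole} factor $\hat n$ is lopsided: the $q_1$-side is left with $\hat n^{-1}$, and $\|q_1\hat n^{-1}\psi\|^2=\sum_{k\geq1}\|P_k\psi\|^2$ is controlled only by $1$. The balanced split --- insert $\hat n^{1/2}\hat n^{-1/2}$ and shift $\hat n^{1/2}$ to the $p_1$-side --- does give $\|q_1\hat n^{-1/2}\psi\|^2=\llr{\psi,\hat n\psi}$, but on the $p_1$-side one finds (with $n(k)=\sqrt{k/N}$)
$$\big\|(\hat n^{1/2})_1\, p_1\psi\big\|^2=\sum_{j\geq0}\sqrt{\tfrac{j+1}{N}}\,\tfrac{N-j}{N}\,\|P_j\psi\|^2\,,$$
and the $j=0$ term $N^{-1/2}\|P_0\psi\|^2$ is not dominated by $\llr{\psi,\hat n\psi}$. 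This route therefore yields $\ls\|f\|_{L^\infty}\big(\llr{\psi,\hat n\psi}+N^{-1/2}\big)$, not the stated bound.

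The extra $N^{-1/2}$ is not an artefact of the method: take $\psi=\sqrt{1-\delta^2}\,\phe^{\otimes N}+\delta\,\Psi_1$ with $\Psi_1$ a normalised vector in $\mathrm{ran}\,P_1$ chosen so that $\llr{\phe^{\otimes N},p_1f(x_1)q_1\Psi_1}\neq0$ (generic). Then $2\Re\llr{\psi,p_1f(x_1)q_1\psi}$ is of order $\delta N^{-1/2}\|f\|_{L^\infty}$, while $\llr{\psi,\hat n\psi}=\delta^2 n(1)=\delta^2 N^{-1/2}$, so the ratio grows like $\delta^{-1}$ as $\delta\to0$. Hence no estimate of the exact form $\ls\|f\|_{L^\infty}\llr{\psi,\hat n\psi}$, with no additive remainder, can hold for arbitrary normalised symmetric $\psi$; in particular your final Cauchy--Schwarz step cannot be made to close as claimed. (In the only place the paper uses this lemma, inside the proof of Lemma~\ref{lem:E_kin:GP}, an additional $N^{-1/2}$ would be absorbed by the $N^{-1+\bo}$ error since $\bo>\tfrac56$, so this does not threaten the main theorem --- but it does mean the proposal as written does not establish the statement.)
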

\begin{proof}
\cite{NLS}, Lemma~4.6.
\end{proof}

\subsection{Microscopic structure}\label{subsec:mictrostructure}
In this section, we prove some important properties of the solution $\fb$ of the zero-energy scattering equation \eqref{eqn:scat:f} and of its complement $\gb$. 

\begin{lem}\label{lem:scat} 
Let $\fb$ as in Definition~\ref{def:scat}, $j_\mu$ as in \eqref{eqn:scat} and $R_\bo$ as in Definition~\ref{def:U}. Then
\lemit{
	\item $\fb$ is a non-negative, non-decreasing function of $|\z|$,\label{lem:scat:1}
	\item $\fb(\z)\geq j_\mu(\z)$ for all $\z\in\R^3$ and there exists $\kappa_\bo\in \big(1,\frac{\mu^\bo}{\mu^\bo-\mu a}\big)$ such that for $|\z|\leq\mu^\bo$, $\fb(\z)=\kappa_\bo j_\mu(\z)$,\label{lem:scat:2}
	\item $R_\bo\ls\mu^\bo$.\label{lem:scat:5}
	\item $\norm{\mathbbm{1}_{|z_1-z_2|<R_\bo}\nabla_1\psi}^2+\tfrac12\llr{\psi, (w_\mu^{(12)}-U_\bo^{(12)})\psi}\geq 0$ for any $\psi\in \mathcal{D}(\nabla_1)$.\label{lem:scat:peter}
}
\end{lem}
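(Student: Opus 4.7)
My plan is to exploit the spherical symmetry of the scattering problem \eqref{eqn:scat:f}, reducing it to an ODE via the ansatz $\fb(\z) = u(r)/r$ with $r = |\z|$. The equation becomes
\begin{equation*}
u''(r) = \tfrac12\bigl(w_\mu(r) - U_\bo(r)\bigr)u(r),
\end{equation*}
with $u(0) = 0$ forced by continuity of $\fb$ at the origin, and boundary conditions $u(R_\bo) = R_\bo$, $u'(R_\bo) = 1$ from $C^1$-matching with $\fb \equiv 1$ on $\{|\z| \geq R_\bo\}$. The support structure splits $[0,R_\bo]$ into three subintervals where the equation is elementary: the inner region $[0,\mu]$ with $u'' = \tfrac12 w_\mu u$, the free region $[\mu,\mu^\bo]$ with $u'' = 0$, and the outer region $[\mu^\bo,R_\bo]$ with $u'' = -\omega^2 u$, $\omega^2 := \tfrac12 a\mu^{1-3\bo}$.

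For parts \emph{(a)} and \emph{(b)}, ODE uniqueness with $u(0)=0$ immediately forces $u(r) = \kappa_\bo\, r\, j_\mu(r)$ on $[0,\mu]$ for some constant $\kappa_\bo$, giving the identity $\fb = \kappa_\bo j_\mu$ of \emph{(b)}; in the free region $u(r) = \kappa_\bo(r - \mu a)$ is linear, and in the outer region $u(r) = A\cos(\omega r) + B\sin(\omega r)$, with $A,B,\kappa_\bo$ (and eventually $R_\bo$) fixed by matching. Since $\omega(R_\bo - \mu^\bo) = \mathcal{O}(\mu^{(1-\bo)/2}) \to 0$, the outer solution stays well inside its first quarter-period, so $u > 0$ and $\fb > 0$ on $(0,R_\bo)$. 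Monotonicity on the first two regions follows from that of $j_\mu$; on the outer region, I would introduce $v(r) := r u'(r) - u(r)$, compute $v'(r) = r u''(r) = -\omega^2 r u \leq 0$ with $v(R_\bo) = 0$, hence $v \geq 0$ and $\fb'(r) = v(r)/r^2 \geq 0$. Matching at $r = \mu^\bo$ gives $\kappa_\bo = R_\bo\omega\sin\theta + \cos\theta$ with $\theta := \omega(R_\bo - \mu^\bo)$; a Taylor expansion in the small parameter $\theta$ then yields $R_\bo\omega\sin\theta > 1 - \cos\theta$ (equivalently $\kappa_\bo > 1$), while the upper bound $\kappa_\bo < \mu^\bo/(\mu^\bo - \mu a)$ is just $\fb(\mu^\bo) = \kappa_\bo(1 - \mu a/\mu^\bo) < 1 = \fb(R_\bo)$, immediate from strict monotonicity. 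The global inequality $\fb \geq j_\mu$ then follows by cases: on $[0,\mu^\bo]$ from the explicit formula and $\kappa_\bo > 1$; on $[\mu^\bo,R_\bo]$ from the maximum principle applied to the superharmonic function $\fb - j_\mu$ (using $\Delta\fb = -\tfrac12 U_\bo \fb \leq 0$ and $\Delta j_\mu = 0$ there) with positive boundary values; and on $\{|\z|>R_\bo\}$ trivially.

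For part \emph{(c)}, I would first establish existence of a minimal $R_\bo \in (\mu^\bo,\infty)$ at which the scattering length of $w_\mu - U_\bo$ vanishes via a continuity-monotonicity argument: at $R_\bo = \mu^\bo$ the only contribution is $w_\mu$ with scattering length $\mu a > 0$, and increasing $R_\bo$ adds the attractive piece $-U_\bo$ which strictly decreases the scattering length, forcing it through zero at a finite value. Eliminating $\kappa_\bo$ between the two matching conditions at $r = \mu^\bo$ yields the transcendental equation $\mu^\bo - \mu a = (R_\bo\cos\theta - \omega^{-1}\sin\theta)/(R_\bo\omega\sin\theta + \cos\theta)$, which, after Taylor expansion and substitution of $\omega^2 = \tfrac12 a\mu^{1-3\bo}$, reduces to $R_\bo(R_\bo - \mu^\bo) \sim 2\mu^{2\bo}$, giving $R_\bo \ls \mu^\bo$.

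Part \emph{(d)} is a ``Dyson-type'' estimate. Fixing $\z_2,\dots,\z_N$, setting $\phi(\z_1) := \psi(\z_1,\dots,\z_N)$ and translating $\z_1 \mapsto \z_1 + \z_2$, it suffices (after integration over the remaining coordinates) to show
\begin{equation*}
\int_{B_{R_\bo}}|\na\phi|^2 + \tfrac12\int_{B_{R_\bo}}(w_\mu - U_\bo)|\phi|^2 \geq 0.
\end{equation*}
Writing $\phi = \fb\,\xi$, which is legitimate since $\fb > 0$ on $B_{R_\bo}$ by \emph{(a)}, a direct computation using the ODE $\Delta\fb = \tfrac12(w_\mu - U_\bo)\fb$ gives the pointwise identity
\begin{equation*}
|\na\phi|^2 = \fb^2|\na\xi|^2 + \na\cdot(\fb\na\fb\,|\xi|^2) - \tfrac12(w_\mu - U_\bo)|\phi|^2.
\end{equation*}
Integrating over $B_{R_\bo}$ and noting that the boundary term $\int_{\partial B_{R_\bo}}\fb\na\fb\cdot\mathbf{n}\,|\xi|^2\,dS$ vanishes because $\na\fb \equiv 0$ on $\partial B_{R_\bo}$ (by $C^1$-matching with the outer constant $1$) yields the claim. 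I expect the main obstacle to be part \emph{(c)}: the asymptotic analysis of the transcendental matching equation to pin down the leading-order behaviour of $R_\bo$, together with the rigorous justification of the smallness of $\theta$ tacitly used in the Taylor expansions throughout, requires some care.
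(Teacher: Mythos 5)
Your proposal is correct and, for parts (a)--(c), follows the same ODE-reduction strategy as the paper, with minor variations in the details. The substitution $u(r) = r\fb(r)$, the three-region analysis, and the matching conditions are all the same. Two small differences: for monotonicity on the outer region you introduce the auxiliary function $v(r) = ru'(r) - u(r)$ and show $v' \leq 0$ with $v(R_\bo) = 0$, whereas the paper argues via concavity of $\tilde{f}$ on that interval; and for $\kappa_\bo$ and $R_\bo$ you solve the matching conditions explicitly and Taylor-expand the resulting transcendental relation, whereas the paper runs an intermediate-value argument on the one-parameter family $\{\tilde{f}_\kappa\}_{\kappa\geq 1}$ and bounds $R_\bo - \mu^\bo$ by a clean integral estimate of $\tilde{f}''_{\kappa_\bo}$ over $[\mu^\bo, R_\bo]$. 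The paper's integral estimate is simpler and avoids any circularity in justifying that $\theta$ is small, so you may prefer to adopt it; your explicit computation gives the same order $R_\bo \lesssim \mu^\bo$ but needs the extra observation that $R_\bo$ lies inside the first quarter-period of the outer oscillation before the Taylor expansion is rigorous. The existence of $R_\bo$ via ``monotonicity of the scattering length in $R_\bo$'' is also correct in spirit but less self-contained than the paper's IVT argument and would need a citation or a short proof of that monotonicity.

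Part (d) is where your approach genuinely diverges, and for the better. The paper refers to the indirect argument of Pickl (and its 2d analogue), which constructs multi-center operators $H^{Z_n}$, proves their positivity, and then derives a contradiction from a hypothetical negative-energy $\tilde\psi$ via a carefully designed cutoff. Your direct ground-state substitution $\phi = \fb\,\xi$, exploiting the identity $|\na\phi|^2 = \fb^2|\na\xi|^2 + \na\cdot(\fb\na\fb\,|\xi|^2) - \tfrac12(w_\mu - U_\bo)|\phi|^2$ (which follows from $\Delta\fb = \tfrac12(w_\mu - U_\bo)\fb$) and the vanishing of the boundary flux (since $\na\fb = 0$ on $\partial B_{R_\bo}$ by $C^1$-matching), reduces the positivity statement to the trivially nonnegative $\int\fb^2|\na\xi|^2$. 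This is shorter, self-contained, and naturally scoped to the single-pair estimate actually stated in the lemma. One point to tighten: you justify ``$\fb > 0$ on $B_{R_\bo}$'' by citing (a), but (a) only gives $\fb \geq 0$; you should observe separately that $\fb(0) = \kappa_\bo j_\mu(0) > 0$ (which holds because $w$ is bounded, so $\tilde{j}'(0) > 0$ by Gr\"onwall-type uniqueness for the linear ODE), whence $\fb \geq \fb(0) > 0$ by monotonicity, making $\xi = \phi/\fb$ an admissible $H^1$-function for the integration by parts.
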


\begin{proof}
We prove this Lemma~by explicitly constructing a spherically symmetric, continuously differentiable solution $\fb$ of~\eqref{eqn:scat:f}. This solution is unique by \cite[Chapter 2.2, Theorem~16]{evans}.
Consider $\tilde{f}:\R^+_0\rightarrow \R$ with
\begin{equation}\label{eqn:f:tilde}
\tilde{f}(r):=r\fb(r),
\end{equation}
where $r:=|\z|$. $\fb\in \mathcal{C}^1(\R^3)$ solves \eqref{eqn:scat:f} precisely if $\tilde{f}$ solves the corresponding ODE
\begin{equation}\label{eqn:scat:f^tilde}
\begin{cases}
	\tilde{f}''(r)=\frac12\left(w_\mu(r)-U_\bo(r)\right)\tilde{f}(r) 	& \text{for } 0<r<R_\bo,\\
	\tilde{f}(r)=r											& \text{for } r\geq R_\bo,\\
	\tilde{f}(r)=0											& \text{for } r=0,							
\end{cases}
\end{equation}
where $'\equiv \frac{\d}{\d r}$. Analogously, \eqref{eqn:scat} is equivalent to
\begin{equation}\label{eqn:scat:j^tilde}
\begin{cases}
	\tilde{j}''(r)=\frac12w_\mu(r)\tilde{j}(r) 	& \text{for } 0<r<\mu,\\
	\tilde{j}(r)=r-\mu a									& \text{for } r\geq \mu,\\
	\tilde{j}(r)=0										& \text{for } r=0,							
\end{cases}
\end{equation}
where $\tilde{j}:\R^+_0\rightarrow\R^+_0$ is defined as $\tilde{j}(r):=rj_\mu(r)$
and depicted in Figure \ref{fig}.

\begin{figure}[t]
	\begin{center}
	\includegraphics[scale=0.6]{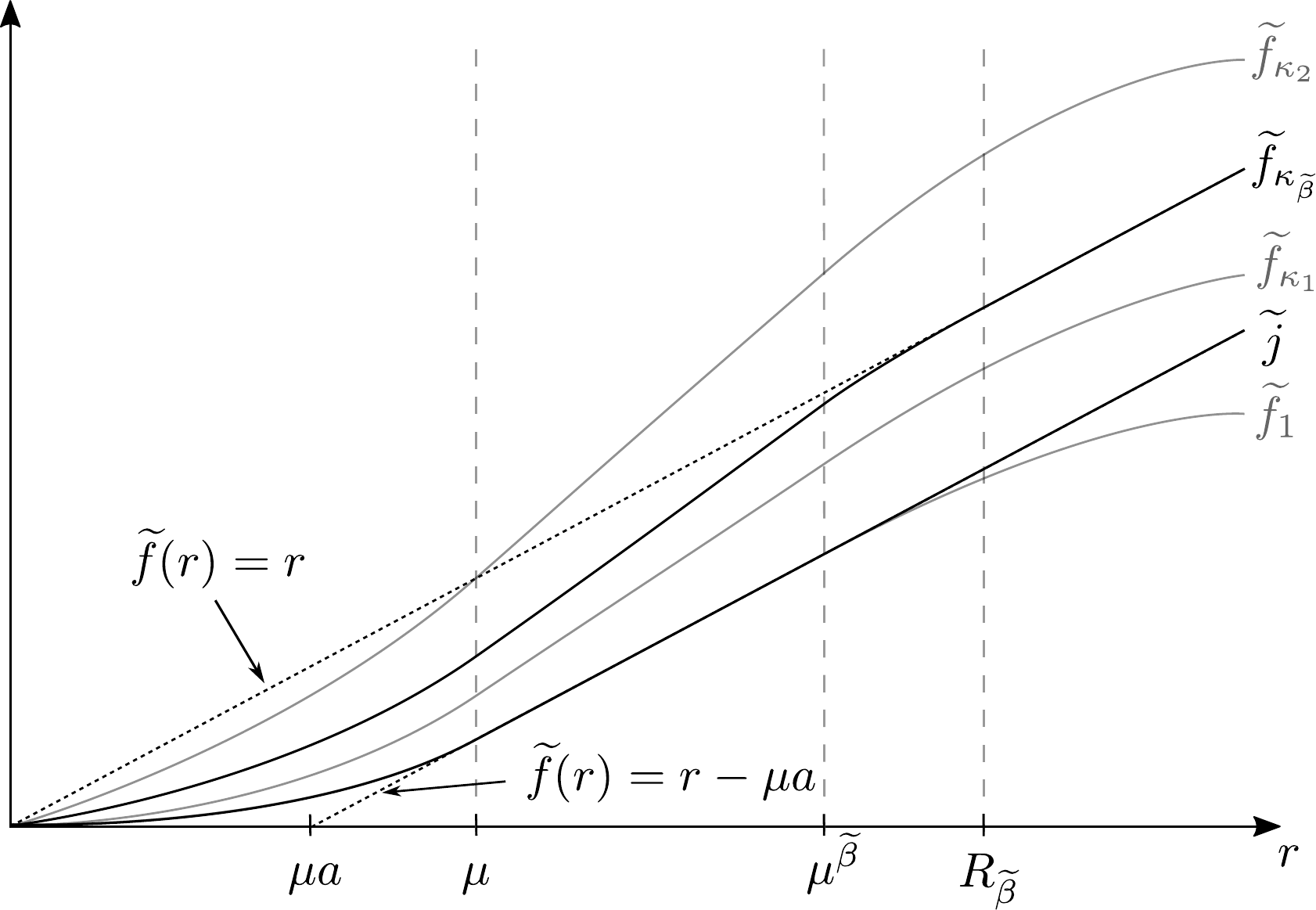}
	\end{center}
	\caption{Construction of the solution $\tilde{f}_{\kappa_\bo}$ of \eqref{eqn:scat:f^tilde}. The lower black curve represents the solution $\tilde{j}$ of \eqref{eqn:scat:j^tilde}, the dashed graphs mark the straight lines $r$ and $r-\mu a$. The functions $\tilde{f}_1$ and $\tilde{f}_{\kappa_2}$ drawn in grey are exemplary members of the one-parameter family $\{\tilde{f}_{\kappa}\}_{\kappa\geq 1}$ with $1<\kappa_1<\kappa_2$. For $0<r<\mu^\bo$, $\tilde{f}_{\kappa}(r)=\kappa \tilde{j}(r)$ is a multiple of $\tilde{j}(r)$. This implies in particular that $\tilde{f}_\kappa(r)$ is a straight line with slope $\kappa$ for $\mu<r<\mu^\bo$. In the region $r>\mu^\bo$, $\tilde{f}_\kappa$ is concave. 
	The solution to \eqref{eqn:scat:f^tilde} must become tangential to the straight line $r$ at some point $r>\mu^\bo$, which will be called $R_\bo$. It is clear that $\tilde{f}_1$ and $\tilde{f}_{\kappa_1}$ will not touch the straight line $r$ (at least not before they decrease and increase again). Contrarily, $\tilde{f}_{\kappa_2}$ already intersects $r$ at $\mu$ and is therefore ruled out as well.
	As the family is strictly increasing in $\kappa$, there must be a curve in between $\tilde{f}_1$ and $\tilde{f}_{\kappa_2}$ that is tangential to $r$ at some point. This is the solution $\tilde{f}_{\kappa_\bo}$ of \eqref{eqn:scat:f^tilde}, drawn in black.
	} 
	\label{fig}
\end{figure}

For $0\leq r\leq \mu^\bo$, $\tilde{f}''(r)=\frac12 w_\mu(r)\tilde{f}(r)$ and $\tilde{f}(0)=0$. Clearly, both conditions are fulfilled by the choice $\tilde{f}_\kappa(r)=\kappa\, \tilde{j}(r)$ for some $\kappa\geq 1$. Consequently, 
\begin{equation}\label{eqn:AB}
\tilde{f}_\kappa(\mu^\bo)=\kappa(\mu^\bo-\mu a)\quad\text{and}\quad\tilde{f}'_\kappa(\mu^\bo)=\kappa.
\end{equation} 
For $\mu^\bo<r<R_\bo$, $\tilde{f}_\kappa$ solves $\tilde{f}''_\kappa(r)=-\frac12 U_\bo(r)\tilde{f}_\kappa(r)$ and is subject to the boundary conditions~\eqref{eqn:AB}. As $U_\bo$ is constant over this region, the solution for $\mu^\bo< r< R_\bo$ is
$$
\tilde{f}_\kappa(r)=\kappa\Big[A\sin(ur)+B\cos(ur)\Big],
$$
where $u:=\sqrt{\tfrac12a\mu^{1-3\bo}}$ and
\begin{eqnarray*}
A&:=&\left((\mu^\bo-\mu a)\sin (\mu^\bo u)+ u^{-1}\cos (\mu^\bo u)\right),\\
B&:=&\left((\mu^\bo-\mu a)\cos (\mu^\bo u)- u^{-1}\sin (\mu^\bo u)\right),
\end{eqnarray*}
i.e.~$A$ and $B$ depend on the quantities $\mu$, $a$ and $\mu^\bo$ but are independent of $\kappa$. The two parameters $\kappa$ and $R_\bo$ must be chosen such that
\begin{equation}\label{eqn:AB:2}
\tilde{f}_\kappa(R_\bo)=R_\bo \quad \text{and} \quad \tilde{f}'_\kappa(R_\bo)=1.
\end{equation}
Denote the position of the first maximum of $\tilde{f}_\kappa$ by $r_\mathrm{max}$. Clearly, $r_\mathrm{max}$ is independent of $\kappa$.
$R_\bo$ is defined as the minimal value where the scattering length of $w_\mu-U_\bo$ equals zero. This means
$$R_\bo:=\min\{r\in(\mu^\bo,r_\mathrm{max}]: \tilde{f}_\kappa(r)=r \;\text{ and }\;\tilde{f}'_\kappa(r)=1\},$$ 
i.e.~$R_\bo$ is defined as the first value of $r$ where $\tilde{f}_\kappa$ is tangential to the straight line $\tilde{f}(r)=r$. This implies in particular that $\tilde{f}_\kappa$ is increasing.
Clearly, $R_\bo$ depends on $\kappa$, hence it remains to prove that suitable $\kappa$, $R_\bo$ exist.
To this end, consider the one-parameter family $\{\tilde{f}_{\kappa}\}_{\kappa\geq 1}$.
\begin{itemize}
\item For $\kappa=1$, we have $\tilde{f}_1(r)=\tilde{j}(r)\leq \tilde{j}(\mu^\bo)=\mu^\bo-\mu a$ for $r\leq \mu^\bo$. As $\tilde{f}_1$ is concave for $\mu^\bo<r<R_\bo$, this implies $\tilde{f}_1(r)<r$ for all $r\in(\mu^\bo, r_\mathrm{max}]$. Consequently, the choice $\kappa=1$ cannot be a solution of~\eqref{eqn:scat:f^tilde}.
\item On the other hand, $\kappa=\frac{\mu^\bo}{\mu^\bo-\mu a}>1$ can neither yield a solution because in this case, $\tilde{f}_{\kappa}(\mu^\bo)=\mu^\bo$ and $\tilde{f}'_{\kappa}(\mu^\bo)>0$, hence $\tilde{f}_{\kappa}>r$ for all $r\in(\mu^\bo, r_\mathrm{max}]$. 
\item Since $\tilde{f}_\kappa(r)=\kappa\tilde{f}_1(r)$, the one-parameter family is strictly increasing in $\kappa$. Together with $\tilde{f}_\kappa(r)<r$ for $\kappa=1$ and $\tilde{f}_\kappa(r)>r$ for $\kappa\geq\frac{\mu^\bo}{\mu^\bo-\mu a}$, this implies that there must be a unique $\kappa_\bo\in(1,\frac{\mu^\bo}{\mu^\bo-\mu a})$ such that $\tilde{f}_{\kappa_\bo}$ satisfies \eqref{eqn:AB:2}. 
\end{itemize}
To obtain an upper bound for $R_\bo$, recall that $\tilde{f}_{\kappa_\bo}$ is increasing and, by construction, $\mathcal{C}^2$ in $[\mu^\bo,R_\bo]$, hence
\begin{eqnarray*}
\kappa_\bo-1&=&\tilde{f}'_{\kappa_\bo}(\mu^\bo)-\tilde{f}'_{\kappa_\bo}(R_\bo)=-\int\limits_{\mu^\bo}^{R_\bo}\tilde{f}_{\kappa_\bo}''(r)\d r
=\tfrac12 a\mu^{1-3\bo}\int\limits_{\mu^\bo}^{R_\bo}\tilde{f}_{\kappa_\bo}(r)\d r\\
&\geq &\tfrac12 a\mu^{1-3\bo}\tilde{f}_{\kappa_\bo}(\mu^\bo)(R_\bo-\mu^\bo)
\gs \kappa_\bo\mu^{1-3\bo}(\mu^\bo-\mu a)(R_\bo-\mu^\bo).
\end{eqnarray*}
With $\frac{\kappa_\bo-1}{\kappa_\bo}<\frac{\mu a}{\mu^\bo}\ls\mu^{1-\bo}$, this yields
$$R_\bo-\mu^\bo\ls\frac{\kappa_\bo-1}{\kappa_\bo(\mu^\bo-\mu a)}\mu^{-1+3\bo}\ls\frac{\mu^{2\bo}}{\mu^\bo-\mu a}=\frac{\mu^\bo}{1-\frac{\mu a}{\mu^\bo}}\ls\mu^\bo$$
for sufficiently small $\mu$.
Due to the respective properties of $\tilde{f}_{\kappa_\bo}$, it is immediately clear that $\fb$ is non-negative, that $\fb\geq j_\mu$ and that $\fb(\z)=\kappa_\bo\, j_\mu(\z)$ for $|\z|\leq\mu^\bo$. To see that $\fb$ is non-decreasing, observe that for $\mu^\bo\leq r\leq R_\bo$, $\tilde{f}'_{\kappa_\bo}(R_\bo)\leq \tilde{f}'_{\kappa_\bo}(r)$ as $\tilde{f}_{\kappa_\bo}$ is concave, hence
$$1=\tilde{f}'_{\kappa_\bo}(R_\bo) \leq \tilde{f}_{\kappa_\bo}'(r)=r(\fb)'(r)+\fb(r)\leq r(\fb)'(r)+1$$
for $\mu^\bo\leq r\leq R_\bo$ as $\fb(r)=r^{-1}\tilde{f}_{\kappa_\bo}(r)\leq 1$. Thus $(\fb)'(r)\geq0$ for all $r\geq0$.

Finally, for the proof of part (d), we refer to \cite[Lemma~5.1(3)]{pickl2015} and the analogous two-dimensional statement in \cite[Lemma~7.10]{jeblick2016}. The idea of the proof is the following: one shows first that the one-particle operator $H^{Z_n}:=-\Delta+\tfrac12\sum_{z_k\in Z_n}(w_\mu-U_\bo)(\cdot-z_k)$ is for each $n\in\mathbb{N}$ a positive operator, where $Z_n$ is an $n$-elemental subset of $\R^3$ such that $B_{R_\bo}(z_k)$ are pairwise disjoint for any two $z_k\in Z_n$. This first assertion follows from the definition of $\fb$ and from the fact that if the ground state energy of $H^{Z_n}$ was negative, the ground state would be strictly positive.
The next step is to prove that the quadratic form $Q(\psi):=\norm{\mathbbm{1}_{|\cdot|\leq R_\bo}\nabla\psi}^2+\tfrac12\lr{\psi,(w_\mu-U_\bo)\psi}$ for $\psi\in H^1(\R^3)$ is nonnegative. Assuming that there exists a $\tilde{\psi}$ such that $Q(\tilde{\psi})<0$, one constructs a set $Z_n$ and a function $\chi_R\in H^1(\R^3)$ such that $\lr{\chi_R,H^{Z_n}\chi_R}<0$ for some $n$, contradicting the positivity of $H^{Z_n}$ which holds for all $n\in\mathbb{N}$. The function $\chi_R$ is constructed in such a way that the part of $\lr{\chi_R,H^{Z_n}\chi_R}$ inside a ball with radius $R$ containing a sufficiently large neighbourhood of $Z_n$ equals $n Q(\tilde{\psi})<0$. The decay of $\chi_R$ outside the ball is chosen such that its positive kinetic energy  is not large enough to cancel this negative term for sufficiently large $n$.
\end{proof}

The next two lemmata provide estimates for expressions containing $\gb$ or $\na\gb$.

\begin{lem}\label{lem:g}
For $\gb$ as in Definition~\ref{def:scat} and sufficiently small $\varepsilon$,
\lemit{
	\item $|\gb(\z)|\ls\frac{\mu}{|\z|}$ ,\label{lem:g:1}
	\item $\norm{\gb}_{L^2(\R^3)}\ls \varepsilon^{2+\bo} 	N^{-1-\frac{\bo}{2}},$ \quad
		 $\onorm{p_1\gbot}=\onorm{\gbot p_1}\ls \mathfrak{e}(t)\varepsilon^{1+\bo}N^{-1-\frac{\bo}{2}},$\label{lem:g:2}
	\item $\norm{\gbot\psi^{N,\varepsilon}(t)}
	\ls\varepsilon N^{-1}$,\label{lem:g:3}
	\item $\onorm{p_1\mathbbm{1}_{\supp{\gb}}(\z_1-\z_2)}=\onorm{\mathbbm{1}_{\supp{\gb}}(\z_1-\z_2)p_1}\ls\mathfrak{e}(t)\varepsilon^{-1+3\bo}N^{-\frac{3}{2}\bo}$,\label{lem:g:4}
	\item $\norm{\mathbbm{1}_{\supp{\gb}}(\z_1-\z_2)\psi^{N,\varepsilon}(t)}\ls\mathfrak{e}(t)\varepsilon^{2\bo-\frac{2}{3}}N^{-\bo}$.\label{lem:g:5}
}
\end{lem}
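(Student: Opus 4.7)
}
The unifying strategy is to exploit the very explicit description of $\fb$ from Lemma~\ref{lem:scat}, together with the fact that $\gb$ is supported in the ball $B_{R_\bo}(0)$ with $R_\bo\lesssim\mu^\bo$, and then combine this with standard Sobolev/Hardy-type tools and the a priori estimates collected in Lemmas \ref{lem:pfp} and \ref{lem:a_priori:w12}.

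For part (a) I would split according to the three regions identified in the proof of Lemma \ref{lem:scat}. For $|z|\leq\mu$ one trivially has $\gb(z)\leq 1\leq \mu/|z|$. For $\mu<|z|\leq\mu^\bo$ the identity $\fb(z)=\kappa_\bo j_\mu(z)$ together with \eqref{eqn:j} gives
\[
\gb(z)=(1-\kappa_\bo)+\kappa_\bo\,\tfrac{\mu a}{|z|},
\]
and since $\kappa_\bo\in\bigl(1,\tfrac{\mu^\bo}{\mu^\bo-\mu a}\bigr)$ we have $|1-\kappa_\bo|\lesssim\mu^{1-\bo}\leq\mu/|z|$, while the second term is clearly $\lesssim\mu/|z|$. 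Finally for $\mu^\bo\leq|z|\leq R_\bo$ the monotonicity of $\fb$ (part (a) of Lemma \ref{lem:scat}) yields $\gb(z)\leq\gb(\mu^\bo)\lesssim\mu^{1-\bo}$, which is again bounded by $\mu/|z|$ because $|z|\lesssim \mu^\bo$.

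Parts (b), (c), (d) then follow by more or less direct calculation. For (b) I would compute
\[
\norm{\gb}_{L^2(\R^3)}^2\;\lesssim\;\mu^2\int_{|z|\leq R_\bo}\frac{\d z}{|z|^2}\;\lesssim\;\mu^2 R_\bo\;\lesssim\;\mu^{2+\bo},
\]
so that $\norm{\gb}_{L^2}\lesssim\mu^{1+\bo/2}=\varepsilon^{2+\bo}N^{-1-\bo/2}$; the operator-norm statement then comes from Lemma \ref{lem:pfp:2} since $\gb\in L^2\cap L^\infty(\R^3)$. For (c) I would use part (a) plus Hardy's inequality in $\R^3$ to get $\norm{\gbot\psi^{N,\varepsilon}(t)}^2\lesssim\mu^2\norm{\nabla_1\psi^{N,\varepsilon}(t)}^2$, and then invoke Lemma \ref{lem:a_priori:w12}(a), giving a factor $\varepsilon^{-1}$. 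For (d) the volume bound $|\supp\gb|\lesssim R_\bo^3\lesssim\mu^{3\bo}$ gives $\norm{\mathbbm{1}_{\supp\gb}}_{L^2(\R^3)}\lesssim\mu^{3\bo/2}$, and Lemma \ref{lem:pfp:2} produces the claimed operator norm.

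Part (e) is the one requiring genuine care, since a naive 3D Sobolev/Hölder chain only yields $\norm{\mathbbm{1}_{\supp\gb}(z_1-z_2)\psi^{N,\varepsilon}(t)}\lesssim R_\bo\varepsilon^{-1}$, which is a worse power of $\varepsilon$ than stated. The idea is to exploit the anisotropy between the $x$- and $y$-directions reflected in Lemma \ref{lem:a_priori:w12}(a). Concretely, I would bound
\[
\mathbbm{1}_{|z_1-z_2|\leq R_\bo}\leq\mathbbm{1}_{|x_1-x_2|\leq R_\bo}\,\mathbbm{1}_{|y_1-y_2|\leq R_\bo},
\]
then carry out the $x_1$-integration using the one-dimensional Gagliardo--Nirenberg inequality $\sup_{x_1}|\psi|^2\lesssim\norm{\psi}_{L^2_{x_1}}\norm{\partial_{x_1}\psi}_{L^2_{x_1}}$, and the $y_1$-integration using the two-dimensional estimate $\norm{f}_{L^4(\R^2)}^2\lesssim\norm{f}_{L^2}\norm{\nabla f}_{L^2}$ combined with H\"older on $\mathbbm{1}_{|y_1-y_2|\leq R_\bo}$. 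Applying Cauchy--Schwarz in the remaining variables and feeding in $\norm{\partial_{x_1}\psi^{N,\varepsilon}(t)}\leq\mathfrak{e}(t)$ and $\norm{\nabla_{y_1}\psi^{N,\varepsilon}(t)}\lesssim\varepsilon^{-1}$ from Lemma \ref{lem:a_priori:w12}(a) then produces the stated power $\varepsilon^{2\bo-2/3}N^{-\bo}\,\mathfrak{e}(t)$ after optimizing the way the factors of $R_\bo\lesssim\mu^\bo$ are distributed among the three integrations. The main obstacle is precisely this bookkeeping: one must avoid invoking second derivatives of $\psi^{N,\varepsilon}(t)$ (which are not controlled by $\mathfrak{e}$) while still extracting three factors of $R_\bo^{1/2}$ from the indicator, and it is the anisotropic $\varepsilon^{-1}$-versus-$\mathfrak{e}(t)$ split that forces the non-obvious exponent $2\bo-\tfrac23$.
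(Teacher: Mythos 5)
Parts (a)--(d) of your proof are correct and essentially the same as the paper's (for (a) the paper takes a shortcut: $\fb\geq j_\mu$ from Lemma~\ref{lem:scat:2} together with \eqref{eqn:j} immediately gives $0\leq\gb=1-\fb\leq 1-j_\mu\leq\mu a/|z|$, avoiding the case split; for (c) the paper inlines Hardy's inequality via integration by parts, which is equivalent to your argument).

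For part (e), however, your proposed route has a genuine gap. The tensorised decomposition $\mathbbm{1}_{|z_1-z_2|\le R_\bo}\le\mathbbm{1}_{|x_1-x_2|\le R_\bo}\,\mathbbm{1}_{|y_1-y_2|\le R_\bo}$ followed by 1d Gagliardo--Nirenberg in $x_1$ produces, after taking $\sup_{x_1}$, a $y_1$-dependent product $g(y_1)h(y_1)$ with $g=\norm{\psi}_{L^2_{x_1}}$ and $h=\norm{\partial_{x_1}\psi}_{L^2_{x_1}}$. To extract the factor $R_\bo$ from the remaining $y_1$-indicator while keeping both $g$ and $h$ in play, you then apply 2d Gagliardo--Nirenberg to $h$, which forces $\nabla_{y_1}h\le\norm{\nabla_{y_1}\partial_{x_1}\psi}_{L^2_{x_1}}$ -- a mixed second derivative of $\psi^{N,\varepsilon}(t)$ that is not controlled by any of the a priori bounds in Lemma~\ref{lem:a_priori:w12}(a), precisely the pitfall you flag at the end of your sketch but do not actually avoid. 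If instead you apply GNS only to $g$ and Cauchy--Schwarz to $h$ (or drop the $y_1$-indicator entirely), the resulting bound is of order $\mu^{3\bo/4}\varepsilon^{-1/4}$ (respectively $\mu^{\bo/2}$), which is strictly worse than the claimed $\mu^{\bo}\varepsilon^{-2/3}$ since $\bo>\tfrac56>\tfrac23$. The paper sidesteps this by doing a single H\"older step with exponents $p=3$, $q=\tfrac32$ on the $z_1$-integral to reach $\|\psi\|_{L^6_{z_1}}^2$, then rescaling $z_1\mapsto\tilde z_1=(x_1,y_1/\varepsilon)$ and applying the 3d Sobolev inequality in the $\tilde z_1$-variable: the anisotropic gradient $\nabla_{\tilde z_1}=(\partial_{x_1},\varepsilon\nabla_{y_1})$ produces exactly the combination $\norm{\partial_{x_1}\psi}^2+\varepsilon^2\norm{\nabla_{y_1}\psi}^2\lesssim\mathfrak e^2(t)$, which involves only first derivatives, and the Jacobian contributes the crucial $\varepsilon^{2/3}$ that, together with the rescaled Sobolev constant $\varepsilon^{-2}$, yields the net $\varepsilon^{-4/3}$. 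You should replace your 1d$\times$2d split with this one-shot anisotropic Sobolev argument.
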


\begin{proof}
By Lemma~\ref{lem:scat:2}, $\fb(\z)\geq j_\mu(\z)$, hence
\begin{equation*}
\gb(\z)=1-\fb(\z)\leq 1-j_\mu(\z)\leq\tfrac{\mu a}{|\z|}
\end{equation*}
and, since $\supp{\gb}\subseteq\{z\in\R^3:|\z|\leq R_\bo\ls\mu^\bo\}$,
\begin{equation*}
\norm{\gb}^2_{L^2(\R^3)}=\int\limits_{|\z|\leq R_\bo}|\gb(\z)|^2\d\z\ls\mu^2\int\limits_{|\z|\ls \mu^\bo}\tfrac{1}{|\z|^2}\d\z\ls\mu^{2+\bo}.
\end{equation*}
The second part of (b) then follows immediately from Lemma~\ref{lem:pfp:2}.
For part (c), observe that
$\norm{\gbot\psi} \ls \mu\norm{\tfrac{1}{|\z_1-\z_2|}\psi}$
and 
\begin{eqnarray*}
\left\lVert{\tfrac{1}{|\z_1-\z_2|}\psi}\right\rVert^2
&=&\int\limits_{\R^{3(N-1)}} \d z_N\mycdots \d z_2\int\limits_{\R^3} \d\z_1 \overline{\psi(z_1\mydots z_N)}\left(\na_1\cdot\tfrac{\z_1-\z_2}{|\z_1-\z_2|^2}\right) \psi(z_1 \mydots z_N)\\
&=&-2\Re\llr{\na_1\psi,\tfrac{\z_1-\z_2}{|\z_1-\z_2|^2}\psi}
\leq 2\norm{\na_1\psi} \left\lVert\tfrac{1}{|\z_1-\z_2|}\psi\right\rVert.
\end{eqnarray*}
Consequently,
$$\norm{\gbot\psi^{N,\varepsilon}(t)}\ls\mu \norm{\na_1\psi^{N,\varepsilon}(t)}\overset{\text{\ref{lem:a_priori:4}}}{\ls}\mu\varepsilon^{-1}.$$
The proof of (d) works analogously to the proof of Lemma~\ref{lem:w12:3}.
Finally, using Hölder's inequality with $p=3$, $q=\frac{3}{2}$ in the $\d z_1$-integration, we obtain for (e)
\begin{eqnarray*}
\norm{\mathbbm{1}_{\supp{\gb}}(\z_1-\z_2)\psi}^2
&=&\int \d z_N\mycdots \d z_2\int \d z_1\mathbbm{1}_{\supp{\gb}}(\z_1-\z_2)|\psi(z_1\mydots z_N)|^2\\
&\leq& \int \d z_N\mycdots \d z_2 \left(\int \d z_1\mathbbm{1}_{\supp{\gb}}(\z_1-\z_2)\right)^\frac23 \left(\int \d z_1 |\psi(z_1\mydots z_N)|^6\right)^\frac26\\
&\ls& \mu^{2\bo}\int\d z_N\mycdots \d z_2\left(\int\d z_1|\psi(z_1\mydots z_N)|^6\right)^\frac26.
\end{eqnarray*}
Substituting $z_1\mapsto \tilde{z}_1=(x_1,\tfrac{y_1}{\varepsilon})$ and using Sobolev's inequality in the $\d \tilde{z}_1$-integral, we obtain
\begin{eqnarray*}
\left(\int\d z_1|\psi(z_1\mydots z_N)|^6\right)^\frac26
&=&\left(\varepsilon^2\int\d \tilde{z}_1|\psi((x_1,\varepsilon\tilde{y}_1),z_2\mydots z_N)|^6\right)^\frac26\\
&\ls&\varepsilon^\frac23\int\d\tilde{z}_1|\nabla_{\tilde{z}_1}\psi((x_1,\varepsilon\tilde{y}_1),z_2\mydots z_N)|^2\\
&=&\varepsilon^{-\frac43}\int\d z_1\left(|\partial_{x_1}\psi(z_1\mydots z_N)|^2+\varepsilon^2|\nabla_{y_1}\psi(z_1\mydots z_N)|^2\right)
\end{eqnarray*} 
as $\nabla_{\tilde{z}_1}=(\partial_{x_1},\varepsilon\nabla_{y_1})$ and $\d z_1=\varepsilon^2\d\tilde{z}_1$. Hence by Lemma~\ref{lem:a_priori:4},
$$\norm{\mathbbm{1}_{\supp{\gb}}(\z_1-\z_2)\psi^{N,\varepsilon}(t)}^2
\; \ls\; \mu^{2\bo}\varepsilon^{-\frac43}\left(\norm{\partial_{x_1}\psi^{N,\varepsilon}(t)}^2+\varepsilon^2\norm{\nabla_{y_1}\psi^{N,\varepsilon}(t)}^2\right)
\; \ls\; \mu^{2\bo}\varepsilon^{-\frac43}\mathfrak{e}^2(t).$$
\end{proof}

\begin{lem}\label{lem:nabla:g}
For $\gb$ as in Definition~\ref{def:scat}, it holds that
\lemit{
	\item $\norm{\na\gb}_{L^2(\R^3)}\ls N^{-\frac{1}{2}}\varepsilon,$\label{lem:nabla:g:1}
	\item $\onorm{(\na_1\gbot)p_1}\ls\mathfrak{e}(t)N^{-\frac12},$
	\item $\onorm{(\na_1\gbot)\cdot\na_1p_1}\ls\mathfrak{e}(t)N^{-\frac12}\varepsilon^{-1}$.
}
\end{lem}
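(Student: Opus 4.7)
The central observation is that part (a) is a direct energy estimate coming from the scattering equation \eqref{eqn:scat:f}, and parts (b)--(c) then follow by feeding the resulting $L^2$ bound into the one-body estimates of Lemma~\ref{lem:pfp}. So my first task is to establish (a); the rest is essentially bookkeeping.

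For part (a), since $\gb = 1-\fb$, we have $\nabla \gb = -\nabla \fb$ and $\supp \nabla \gb \subseteq B_{R_\bo}(0)$. I would integrate by parts on the ball $B_{R_\bo}$:
\begin{equation*}
\int_{B_{R_\bo}}|\nabla\fb|^2 \d z \;=\; -\int_{B_{R_\bo}} \fb\,\Delta \fb \d z \;+\; \int_{\partial B_{R_\bo}} \fb\,(\partial_r \fb)\d S.
\end{equation*}
The crucial point is that the boundary term vanishes: writing $\tilde f(r)=r\fb(r)$ as in the proof of Lemma~\ref{lem:scat}, the tangency condition $\tilde f'(R_\bo)=1$ combined with $\fb(R_\bo)=1$ yields $\partial_r\fb(R_\bo)=(\tilde f'(R_\bo)-\fb(R_\bo))/R_\bo =0$. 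Substituting the scattering equation $-\Delta\fb=\tfrac12(U_\bo-w_\mu)\fb$ and using that $\int\fb^2 U_\bo\geq 0$ along with $\fb\leq 1$, I obtain
\begin{equation*}
\norm{\nabla\gb}_{L^2(\R^3)}^2 \;=\; \tfrac12\int\fb^2(w_\mu-U_\bo)\d z \;\leq\; \tfrac12\int w_\mu\d z \;=\; \tfrac{\mu}{2}\norm{w}_{L^1(\R^3)}\;\lesssim\;\mu,
\end{equation*}
which is the claim since $\mu=\varepsilon^2/N$.

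For parts (b) and (c), I would apply Lemmas~\ref{lem:pfp:2} and \ref{lem:pfp:3} respectively with $G:=|\nabla\gb|\in L^2(\R^3)$. Concretely, part (b) follows from
\begin{equation*}
\onorm{(\nabla_1\gbot)p_1}\;\lesssim\;\mathfrak{e}(t)\varepsilon^{-1}\norm{\nabla\gb}_{L^2(\R^3)}\;\lesssim\;\mathfrak{e}(t)\varepsilon^{-1}\cdot N^{-\frac12}\varepsilon\;=\;\mathfrak{e}(t)N^{-\frac12},
\end{equation*}
and analogously part (c) gains an extra factor $\varepsilon^{-1}$ from Lemma~\ref{lem:pfp:3}. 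The $L^\infty$-integrability of $\nabla\gb$ needed to apply Lemma~\ref{lem:pfp:2} in (b) is immediate from the explicit construction in the proof of Lemma~\ref{lem:scat}: $\fb\in\mathcal{C}^1(\R^3)$ is built piecewise from the smooth functions $\kappa_\bo j_\mu$ and a sinusoidal solution on the annulus $\mu^\bo\leq r\leq R_\bo$, hence $\fb'$ is a bounded function (though not uniformly in $\mu$).

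The only delicate point is the vanishing of the boundary term in part (a); this is really where the specific choice of $R_\bo$ (the minimal radius producing zero scattering length) pays off, turning $\fb$ into a $\mathcal{C}^1$ extension of the constant $1$ and removing the surface contribution. Everything else is a routine consequence of the preliminary lemmata.
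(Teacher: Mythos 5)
Your overall strategy for part (a) — integrate by parts over $B_{R_\bo}$, observe that the boundary term vanishes because $\fb'(R_\bo)=0$, and then use the scattering equation to turn $\norm{\nabla\fb}^2_{L^2}$ into an $L^1$-type quantity — is a genuinely different and in fact cleaner route than the paper's. The paper instead bounds $\gb'(r)$ pointwise via the ODE for $\tilde g(r)=r\gb(r)$ and integrates $\int r^2|\gb'(r)|^2\,\d r$ directly; both hinge on the same observation that $\tilde g'(R_\bo)=0$, i.e.\ $\fb'(R_\bo)=0$, which is exactly what the choice of $R_\bo$ delivers. Parts (b) and (c) are handled identically to the paper (Lemma~\ref{lem:pfp}), and your verification that $\nabla\fb\in L^\infty$ from the explicit construction is correct.

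However, there is a sign error in the key identity of part (a). The scattering equation gives $\Delta\fb=\tfrac12(w_\mu-U_\bo)\fb$, so $-\int\fb\,\Delta\fb=\tfrac12\int\fb^2(U_\bo-w_\mu)$, not $\tfrac12\int\fb^2(w_\mu-U_\bo)$ as you wrote (as stated, your identity would force $\norm{\nabla\gb}^2_{L^2}\leq 0$). With the sign corrected, you must instead drop the nonnegative term $\tfrac12\int\fb^2 w_\mu$ and bound $\tfrac12\int\fb^2 U_\bo\leq\tfrac12\int U_\bo$. Note this is no longer the trivial scaling estimate $\int w_\mu\ls\mu$: you now need $\int U_\bo\ls\mu$, which requires $R_\bo\ls\mu^\bo$ (Lemma~\ref{lem:scat:5}, used implicitly in Lemma~\ref{lem:U:in:W}). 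Once this correction is made the argument is sound and yields $\norm{\nabla\gb}^2_{L^2}\ls\mu=N^{-1}\varepsilon^2$ as claimed.
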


\begin{proof}
Denote $r\equiv|z|$ and $'\equiv\tfrac{\d}{\d r}$.
As $\gb$ is spherically symmetric, we define $\tilde{g}(r):=r\gb(r)$. Consequently, 
$$|\na\gb(r)|=|{\gb}'(r)|=\tfrac{|\tilde{g}'(r)-\gb(r)|}{r}\leq \tfrac{|\tilde{g}'(r)|}{r}+\tfrac{|\gb(r)|}{r},$$
$\tilde{g}'(r)=1-\tilde{f}'(r)$ and $\tilde{g}''(r)=-\tilde{f}''(r)$ with $\tilde{f}$ from \eqref{eqn:f:tilde}. Hence $\tilde{g}'(R_\bo)=0$ by \eqref{eqn:AB:2} and 
\begin{equation}\label{eqn:lem:nabla:g:1}
|\tilde{g}'(r)|
=|\tilde{g}'(r)-\tilde{g}'(R_\bo)|=\left|\int\limits_r^{R_\bo}\tilde{f}''(\rho)\d\rho\right|
\leq \tfrac12\int\limits_r^{R_\bo}w_\mu(\rho)\rho\d\rho+\tfrac12\int\limits_r^{R_\bo}U_\bo(\rho)\rho\d\rho
\end{equation}
by \eqref{eqn:scat:f^tilde} and as $\tilde{f}(\rho)\leq \rho$. For $0\leq r\leq \mu$, 
$$|\tilde{g}'(r)|\leq\tfrac12\norm{w_\mu}_{L^\infty(\R^3)} \int\limits_0^\mu \rho\d\rho
+\tfrac12 a\mu^{1-3\bo}\int\limits_{\mu^\bo}^{R_\bo}\rho\d\rho
\overset{\text{\ref{lem:scat:5}}}{\ls} 1+\mu^{1-\bo}\ls 1
$$
and $|\gb(r)|\leq 1$, hence $|{\gb}'(r)|\ls \tfrac{1}{r}$. For $\mu\leq r\leq R_\bo$, the first term in \eqref{eqn:lem:nabla:g:1} equals zero, hence $|{\gb}'(r)|\ls \tfrac{\mu^{1-\bo}}{r}+\tfrac{\mu}{r^2}$ by Lemma~\ref{lem:g:1}. Thus
\begin{equation*}
\norm{\nabla\gb}^2_{L^2(\R^3)}=\int\limits_0^\mu|{\gb}'(r)|^2r^2\d r+\int\limits_\mu^{R_\bo}|{\gb}'(r)|^2r^2\d r\ls\mu+\mu^{2-\bo}\ls\mu.
\end{equation*}
The two remaining inequalities follow by Lemma~\ref{lem:pfp}.
\end{proof}

\subsection{Estimate of the kinetic energy}\label{subsec:E_kin}

In this section, we provide a bound for the kinetic energy of $\qp_1\psi^{N,\varepsilon}(t)$. 
The main part of the kinetic energy results from the microscopic structure, which is localised around the scattering centres (on the sets $\overline{\mathcal{C}}_j$ in Definition~\ref{def:cutoffs} below).
We show that the kinetic energy in regions where sufficiently large neighbourhoods around these centres (the sets $\Abar_j\supset\Cbar_j$) are cut out is of lower order. 
To prove this, we will also need the sets $\B_j$, which consist of all $N$-particle configurations where at most two particles interact (one of which is particle $j$).

\begin{definition}\label{def:cutoffs}
Let $d\in(\tfrac56,\bo)$, $j,k\in\{1\mydots N\}$ and define
\begin{eqnarray*}
a_{j,k}&: = &\left\lbrace (z_1\mydots z_N): |z_j-z_k|<\mu^d\right\rbrace,\\
c_{j,k}&: = &\left\lbrace (z_1\mydots z_N): |z_j-z_k|<R_\bo\right\rbrace,\\
a_{j,k}^x&: = &\left\lbrace (z_1\mydots z_N): |x_j-x_k|<\mu^d\right\rbrace.
\end{eqnarray*}
Then the subsets $\Abar_j$,  $\Bbar_j$, $\overline{\mathcal{C}}_j$ and $\Abar^x_j$
of $\R^{3N}$ are defined as
$$\Abar_j\; :=\; \bigcup\limits_{k\neq j}a_{j,k}, \qquad
\Bbar_j\; :=\; \bigcup\limits_{k,l\neq j}a_{k,l}, \qquad
\overline{\mathcal{C}}_j\; :=\; \bigcup\limits_{k\neq j}c_{j,k}, \qquad
\Abar^x_j\; :=\; \bigcup\limits_{k\neq j}a^x_{j,k}
$$
and their complements are denoted by $\A_j$, $\B_j$, $\mathcal{C}_j$ and $\A^x_j$, i.e.~$\A_j:=\R^{3N}\setminus\Abar_j$ etc.
\end{definition}

Note that the characteristic functions $\charBo$ and $\charBbaro$ do not depend on $z_1$, and $\charAbarox$ and $\charAox$ do not depend on any $y$-coordinate. Hence, $\charBo$ and $\charBbaro$ commute with all operators acting exclusively on the first slot of the tensor product, and $\charAbarox$ and $\charAox$ commute with all operators acting only the $y$-coordinates.
The main result of this section is given by the following lemma:

\begin{lem}\label{lem:E_kin:GP}
\begin{equation*}
\norm{\charAo\partial_{x_1}\qp_1\psi^{N,\varepsilon}(t)}\ls \exp\left\{\mathfrak{e}^2(t)+\int_0^t\mathfrak{e}^2(s)\d s\right\}
\left(\alpha_\xi^<(t)+(N\varepsilon^\delta)^{1-\bo}+N^{-1+\bo}\right)^\frac12.
\end{equation*}
\end{lem}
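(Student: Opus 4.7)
The plan is to bound $\norm{\charAo\partial_{x_1}\qp_1\psi^{N,\varepsilon}(t)}^2$ by pulling it out of an expanded energy difference $E^{\psi^{N,\varepsilon}(t)}(t)-\mathcal{E}^{\Phi(t)}(t)$, whose modulus is bounded by $\alpha_\xi^<(t)$ through the very definition of the counting functional, and by estimating the remaining contributions one by one. Combining $|E^\psi-\mathcal{E}^\Phi|\leq\alpha_\xi^<$ with $\llr{\psi,\hat n\psi}\ls\alpha_\xi^<+N^{-\xi}$ (and using $\xi<1-\bo$) and Lemma~\ref{lem:Phi:1} to identify the double-exponential prefactor will then give the asserted inequality after taking a square root.

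Writing out \eqref{E^psi} and \eqref{E^Phi} explicitly, the energy difference splits into the kinetic part $\norm{\nabla_1\psi}^2-\norm{\Phi'}^2$, the non-negative transverse form $\llr{\psi,(-\Delta_{y_1}+\tfrac{1}{\varepsilon^2}(V^\perp(\tfrac{y_1}{\varepsilon})-E_0))\psi}$, the external-field difference $\llr{\psi,\Vp(t,z_1)\psi}-\lr{\Phi,\Vp(t,(x,0))\Phi}$, and the interaction difference $\tfrac{N-1}{2}\llr{\psi,w_\mu^{(12)}\psi}-\tfrac{b}{2}\lr{\Phi,|\Phi|^2\Phi}$. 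The transverse form is dropped as non-negative; the external-field difference is bounded by $\mathfrak{e}^2(t)\llr{\psi,\hat n\psi}+\mathfrak{e}^3(t)\varepsilon$ after Taylor-expanding $\Vp$ around $y_1=0$ (remainder controlled by Lemma~\ref{lem:taylor}) and applying Lemma~\ref{lem:psi-Phi} to the in-plane part. For the $x$-kinetic piece, direct computation gives $\norm{\partial_{x_1}\pp_1\psi}^2=\norm{\Phi'}^2\llr{\psi,\pp_1\psi}$, so cancellation with $\norm{\Phi'}^2$ leaves a remainder $\norm{\Phi'}^2\llr{\psi,\qp_1\psi}\ls\norm{\Phi(t)}_{H^2(\R)}^2\llr{\psi,\hat n\psi}$, which produces the double-exponential factor. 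Decomposing further $\norm{\partial_{x_1}\qp_1\psi}^2=\norm{\charAo\partial_{x_1}\qp_1\psi}^2+\norm{\charAbaro\partial_{x_1}\qp_1\psi}^2$ and dropping the non-negative $\charAbaro$-term isolates the quantity to be estimated, modulo commutator corrections of the form $[\partial_{x_1},\pp_1]$ whose operator norm is bounded by $\norm{\Phi'}_{L^\infty}\ls\norm{\Phi(t)}_{H^2(\R)}$ via Lemma~\ref{lem:Phi:1}.

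The delicate step is the interaction. Add and subtract $\tfrac{N-1}{2}\llr{\psi,U_\bo^{(12)}\psi}$. The softer remainder $\tfrac{N-1}{2}\llr{\psi,U_\bo^{(12)}\psi}-\tfrac{b}{2}\lr{\Phi,|\Phi|^2\Phi}$ is of precisely the form handled in \cite{NLS} for $\beta\in(0,1)$: $U_\bo$ has range $\mathcal{O}(\mu^\bo)$ and, by Definitions~\ref{def:U}--\ref{def:scat} and \eqref{eqn:scat(w-U)=0}, $\mu^{-1}\norm{U_\bo\fb}_{L^1(\R^3)}\int|\chi|^4\to b$; factorising through $p_1p_2$-projectors via Lemma~\ref{lem:pfp} and replacing expectations of $|\Phi(x_1)|^2$ via Lemma~\ref{lem:psi-Phi} gives an upper bound of order $\alpha_\xi^<(t)+(N\varepsilon^\delta)^{1-\bo}+N^{-1+\bo}$. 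The singular remainder $\tfrac{N-1}{2}\llr{\psi,(w_\mu^{(12)}-U_\bo^{(12)})\psi}$, which is $\mathcal{O}(1)$ and cannot be bounded directly, is absorbed by the kinetic energy near scattering centres: applying Lemma~\ref{lem:scat:peter} to every pair $(1,k)$ and summing, the exchange symmetry of $|\psi^{N,\varepsilon}(t)|^2$ yields
\[
(N-1)\bigl\|\mathbbm{1}_{|z_1-z_2|<R_\bo}\nabla_1\psi\bigr\|^2+\tfrac{N-1}{2}\llr{\psi,(w_\mu^{(12)}-U_\bo^{(12)})\psi}\geq 0.
\]
Since $R_\bo\ls\mu^\bo<\mu^d$, one has $\Cbar_1\subset\Abar_1$, and the integrand $\sum_{k\neq 1}\mathbbm{1}_{|z_1-z_k|<R_\bo}$ is pointwise dominated by $\charAbaro$ plus a correction supported on the set where particle $1$ is within $R_\bo$ of two or more other particles simultaneously. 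This three-particle-cluster error is estimated by a Sobolev argument in the spirit of Lemma~\ref{lem:g:5}, combined with the a priori bound $\norm{\nabla_1\psi}\ls\varepsilon^{-1}$ from Lemma~\ref{lem:a_priori:4}, and is shown to be of smaller order than $(N\varepsilon^\delta)^{1-\bo}$ by the admissibility condition. The leftover positive contribution $\norm{\charAbaro\nabla_1\psi}^2$ then fits inside the decomposition $\norm{\nabla_1\psi}^2=\norm{\charAo\nabla_1\psi}^2+\norm{\charAbaro\nabla_1\psi}^2$, and only $\norm{\charAo\nabla_1\psi}^2\geq\norm{\charAo\partial_{x_1}\qp_1\psi}^2$ (up to the same commutator remainder) survives into the upper bound.

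Assembling the contributions yields
\[
\norm{\charAo\partial_{x_1}\qp_1\psi^{N,\varepsilon}(t)}^2\ls|E^\psi(t)-\mathcal{E}^\Phi(t)|+\norm{\Phi(t)}_{H^2(\R)}^2\,\llr{\psi,\hat n\psi}+(N\varepsilon^\delta)^{1-\bo}+N^{-1+\bo},
\]
from which the claim follows using the bounds on $\alpha_\xi^<$, $\hat n$, and $\norm{\Phi}_{H^2}$ cited in the first paragraph. The main obstacle is the interaction step: one has to pass Lemma~\ref{lem:scat:peter} from a single pair to the symmetrised sum and then compare the overcounted indicator $\sum_k\mathbbm{1}_{|z_1-z_k|<R_\bo}$ against the single indicator $\charAbaro$. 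Controlling the resulting three-particle-cluster error via the dilute-gas bounds of Lemma~\ref{lem:g} and the admissibility condition is precisely what turns the heuristic ``$\fb\approx 1$ except near scattering centres'' into the quantitative estimate claimed.
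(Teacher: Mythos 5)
Your overall strategy is the right one (bound $\norm{\charAo\partial_{x_1}\qp_1\psi}^2$ from the energy difference, add and subtract $U_\bo^{(12)}$, use Lemma~\ref{lem:scat:peter} to absorb the singular remainder), and a number of the individual estimates you cite are exactly the ones the paper uses. But there is an internal contradiction in your treatment of the transverse energy that is not a cosmetic slip; it is precisely the new difficulty this lemma faces compared with the unconfined Gross--Pitaevskii case.

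You propose to ``drop the transverse form $\llr{\psi,(-\Delta_{y_1}+\tfrac{1}{\varepsilon^2}(V^\perp-E_0))\psi}$ as non-negative'', and two sentences later you invoke Lemma~\ref{lem:scat:peter} in the form
$(N-1)\norm{\mathbbm{1}_{|z_1-z_2|<R_\bo}\nabla_1\psi}^2+\tfrac{N-1}{2}\llr{\psi,(w_\mu^{(12)}-U_\bo^{(12)})\psi}\geq 0$,
where $\nabla_1=(\partial_{x_1},\nabla_{y_1})$ is the full three-dimensional gradient. You cannot have both: if you discard the transverse form, you have already discarded the $\norm{\nabla_{y_1}\psi}^2$ contribution that the localized kinetic term $\norm{\mathbbm{1}_{|z_1-z_2|<R_\bo}\nabla_1\psi}^2$ needs; $\norm{\partial_{x_1}\psi}^2$ alone does not suffice in Lemma~\ref{lem:scat:peter}. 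Conversely, if you keep only the potential part $\tfrac{1}{\varepsilon^2}(V^\perp-E_0)$, it is not sign-definite and cannot be dropped. The paper's proof lives exactly in this tension: it does not drop the transverse form but extracts from it a usable localized piece $\norm{\charAbaro\charBo\nabla_{y_1}\psi}^2$ by inserting $\qc_1$ and $\charAbarox$, bounding the negative part $\tfrac{1}{\varepsilon^2}\norm{(V^\perp-E_0)_-}_{L^\infty}\norm{\charAbarox\qc_1\psi}^2$ via the Sobolev argument of Lemma~\ref{lem:cutoffs:4} (which is where the admissibility condition and the exponent $(N\varepsilon^\delta)^{1-\bo}$ enter). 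Remark~\ref{rem:differences} highlights this compensation between $\norm{\nabla_{y_1}\psi}^2$ and the negative part of $\tfrac{1}{\varepsilon^2}(V^\perp-E_0)$ as the key new ingredient of the proof; your sketch does not address it.

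A second, more minor gap: you propose to apply Lemma~\ref{lem:scat:peter} per pair and then dominate the overcounted indicator $\sum_k\mathbbm{1}_{|z_1-z_k|<R_\bo}$ by $\charAbaro$ plus a three-particle-cluster correction, but you do not actually show this correction is small; you gesture at Lemma~\ref{lem:g} and admissibility without a concrete bound. The paper sidesteps overcounting entirely by multiplying with $\charBo$: on $\Bo$ no two particles $\neq 1$ are within $\mu^d$ of one another, hence the balls $c_{1,k}\cap\Bo$ are pairwise disjoint and $\sum_k\mathbbm{1}_{c_{1,k}\cap\Bo}\leq\charAbaro\charBo$ holds without any correction, while the complementary $\charBbaro$-terms (e.g.\ $\norm{\charAbaro\charBbaro\partial_{x_1}\psi}^2$ and $\tfrac{N-1}{2}\|\charBbaro\sqrt{w_\mu^{(12)}}\psi\|^2$) are simply dropped as non-negative. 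This $\Bo/\Bbaro$ device both simplifies the counting and is what makes the $\charBbaro$-related error terms tractable via Lemma~\ref{lem:cutoffs:3}; your plan would need to reproduce its effect by hand.
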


To prove Lemma~\ref{lem:E_kin:GP}, we need several estimates on the cutoff functions $\charAbaro$, $\charAbarox$ and $\charBbaro$.

\begin{lem}\label{lem:cutoffs}
Let $\Abaro$, $\Abar^x_1$ and $\Bbaro$ as in Definition~\ref{def:cutoffs}. Then
\lemit{
	\item $\onorm{\charAbaro p_1}\ls\mathfrak{e}(t)\mu^{\frac{3d}{2}-\frac12}$, \qquad  
	$\onorm{\charAbaro \partial_{x_1}p_1}\ls\norm{\Phi(t)}_{H^2(\R)}\,\mu^{\frac{3d}{2}-\frac12}$,\label{lem:cutoffs:1}
	\item $\norm{\charAbaro\psi}\ls\mu^{d-\frac13}\left(\norm{\partial_{x_1}\psi}+\varepsilon\norm{\nabla_{y_1}\psi}\right)$\; for any $\psi\in L^2(\R^{3N})$,\label{lem:cutoffs:2}
	\item $\norm{\charAbaro\nabla_{y_1}\pc_1\psi^{N,\varepsilon}(t)}\ls\mathfrak{e}(t)N^{-\frac12}
	,$\label{lem:cutoffs:5}
	\item $\norm{\charBbaro\psi}\ls\mu^{d-\frac13}\left(\sum\limits_{k=2}^N(\norm{\partial_{x_k}\psi}^2+\varepsilon^2\norm{\nabla_{y_k}\psi}^2)\right)^\frac12$\; for any $\psi\in L^2(\R^{3N})$,
	\item $\norm{\charBbaro\psi^{N,\varepsilon}(t)}\ls\varepsilon\mathfrak{e}(t)$,
	\label{lem:cutoffs:3}
	\item $\norm{\charAbarox\qc_1\psi^{N,\varepsilon}(t)}^2\ls\mathfrak{e}^2(t)\varepsilon^2(N\varepsilon^\delta)^{1-\bo}$.
	\label{lem:cutoffs:4}
}
\end{lem}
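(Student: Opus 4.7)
\textbf{Proof plan for Lemma~\ref{lem:cutoffs}.}

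The common thread throughout all six estimates is the pointwise union bound $\charAbaro \leq \sum_{k=2}^N \mathbbm{1}_{|z_1-z_k|<\mu^d}$ (and similarly for $\charBbaro$, $\charAbarox$), combined with Hölder or Sobolev inequalities applied in coordinates rescaled to $\tilde z = (x, y/\varepsilon)$, so that each ball $\{|z_j-z_k|<\mu^d\}$ becomes an ellipsoid of $\tilde z$-measure $\mu^{3d}/\varepsilon^2$. After summing over $k$ one picks up a factor $N$ which, using $\sqrt{N}/\varepsilon = \mu^{-1/2}$, combines cleanly with the $\mu^{3d}$ (or $\mu^{2d}$) volume factors.

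For part (a), I would write $\onorm{\charAbaro p_1}^2=\onorm{p_1\charAbaro p_1}\le\sum_{k}\onorm{p_1\mathbbm{1}_{|z_1-z_k|<\mu^d}p_1}$ and apply Lemma~\ref{lem:pfp:1} with $G=\mathbbm{1}_{\{|z|<\mu^d\}}$, so that $\onorm{p_1\mathbbm{1}_{a_{1,k}}p_1}\ls\mathfrak e^2(t)\varepsilon^{-2}\mu^{3d}$; summing and taking square roots yields $\mathfrak e(t)\sqrt N\varepsilon^{-1}\mu^{3d/2}=\mathfrak e(t)\mu^{3d/2-1/2}$. The $\partial_{x_1}p_1$ variant is identical, using that $\partial_{x_1}p_1=|\Phi'\chi^\varepsilon\rangle\langle\Phi\chi^\varepsilon|$ and $\norm{\Phi'}_{L^\infty}\le\norm{\Phi}_{H^2}$. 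For parts (b) and (d), observe that $\bigcup_k \{|z_1-z_k|<\mu^d\}$ has $z_1$-measure at most $(N-1)C\mu^{3d}$; in the rescaled $\tilde z_1$ coordinate this is $(N-1)C\mu^{3d}/\varepsilon^2$. Hölder's inequality with exponents $(3,3/2)$ together with the $3$D Sobolev embedding $\norm{\psi}_{L^6(\tilde z_1)}\ls\norm{\nabla_{\tilde z_1}\psi}_{L^2}$ then gives (b). For (d), exactly the same computation but integrating in $z_k$ instead of $z_1$ (for each $k\ge 2$) produces the sum $\sum_{k=2}^N(\norm{\partial_{x_k}\psi}^2+\varepsilon^2\norm{\nabla_{y_k}\psi}^2)$. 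Part (e) then follows directly from (d) applied to $\psi^{N,\varepsilon}(t)$, using symmetry and the estimates $\norm{\partial_{x_1}\psi^{N,\varepsilon}(t)}\ls\mathfrak e(t)$ and $\varepsilon\norm{\nabla_{y_1}\psi^{N,\varepsilon}(t)}\ls 1$ from Lemma~\ref{lem:a_priori:4}, noting that $\mu^{d-1/3}\sqrt N = \varepsilon\cdot\mu^{d-5/6}$ is $\le\varepsilon$ because $d>5/6$.

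Parts (c) and (f) are the ones requiring the most care, as they exploit the confined anisotropy. For (c), I decompose $\nabla_{y_1}\pc_1\psi^{N,\varepsilon}(t)=\nabla_{y_1}\chi^\varepsilon(y_1)\,f(x_1,z_2,\dots,z_N)$ with $f=\langle\chi^\varepsilon,\psi^{N,\varepsilon}\rangle_{y_1}$. On the set $\{|z_1-z_k|<\mu^d\}$ with $\mu^d\ll\varepsilon$, I estimate the $y_1$-integral of $|\nabla_{y_1}\chi^\varepsilon|^2$ by $\varepsilon^{-4}\norm{\nabla\chi}_\infty^2\cdot\pi\mu^{2d}$ (since $\nabla\chi$ is essentially constant on such a small ball, A2), while the $x_1$-integral of $|f|^2$ over $|x_1-x_k|<\mu^d$ is controlled by $C\mu^d\norm{f}_{L^2(x_1)}\norm{\partial_{x_1}f}_{L^2(x_1)}$ via one-dimensional Sobolev. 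Summing over $k$ and using $\norm{f}\le 1$ and $\norm{\partial_{x_1}f}\le\norm{\partial_{x_1}\psi^{N,\varepsilon}(t)}\ls\mathfrak e(t)$ gives $N\mu^{3d}\varepsilon^{-4}\mathfrak e(t)$, which for $d>5/6$ is indeed $\ls\mathfrak e^2(t)/N$. For (f), I apply the union bound $\charAbarox\le\sum_k\mathbbm{1}_{|x_1-x_k|<\mu^d}$ and then use the one-dimensional Gagliardo--Nirenberg interpolation $\norm{u}_{L^\infty(\R)}^2\le C\norm{u}_{L^2}\norm{u'}_{L^2}$ on the function $x_1\mapsto\qc_1\psi^{N,\varepsilon}(t)(x_1,y_1,z_2,\dots,z_N)$ for each fixed $(y_1,z_2,\dots,z_N)$. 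Combined with Cauchy--Schwarz in the remaining variables and the a priori bounds $\norm{\qc_1\psi^{N,\varepsilon}(t)}\ls\mathfrak e(t)\varepsilon$ and $\norm{\partial_{x_1}\qc_1\psi^{N,\varepsilon}(t)}\le\norm{\partial_{x_1}\psi^{N,\varepsilon}(t)}\ls\mathfrak e(t)$ from Lemma~\ref{lem:a_priori:4}, this yields $\norm{\charAbarox\qc_1\psi^{N,\varepsilon}(t)}^2\ls N\mu^d\mathfrak e^2(t)\varepsilon$; using admissibility $N\le\varepsilon^{-\delta}$ and $d\in(5/6,\bo)$, a short calculation shows that $N\mu^d\varepsilon^{-1}\ls(N\varepsilon^\delta)^{1-\bo}$, giving the claim.

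The main technical obstacles are parts (c) and (f): in both, getting the claimed power of $\varepsilon$ requires not just the trivial union bound but also choosing the correct variable for Sobolev embedding (the $x_1$-direction for (f), because $\qc_1\psi$ is small in norm but not pointwise; and using the $L^\infty$-bound on $\nabla\chi$ on a $\mu^d/\varepsilon$-small ball in (c)). The constraint $d\in(5/6,\bo)$ with $\bo<2/(2+\delta)$ is exactly what makes the resulting exponents balance under the admissibility condition $N\varepsilon^\delta\to 0$.
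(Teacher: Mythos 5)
Your proof is correct and follows the paper's overall strategy: the union bound on the cutoff function, Hölder or Gagliardo--Nirenberg inequalities in the rescaled coordinate $\tilde z_1=(x_1,y_1/\varepsilon)$, and the a priori bounds of Lemma~\ref{lem:a_priori:4}. Parts (a), (b), (d), (e) and (f) match the paper's argument essentially line by line. Part (c) is the one place you take a genuinely different route. The paper gets (c) in two lines by applying the already-proved part (b) to the vector-valued function $\nabla_{y_1}\pc_1\psi^{N,\varepsilon}(t)$, then bounding $\norm{\partial_{x_1}\nabla_{y_1}\pc_1\psi}\ls\mathfrak{e}(t)\varepsilon^{-1}$ and $\norm{\nabla_{y_1}\nabla_{y_1}\pc_1\psi}\ls\varepsilon^{-2}$ and checking that $\mu^{d-1/3}\varepsilon^{-1}\leq N^{-1/2}$ because $d>\tfrac56$. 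You instead redo the estimate from scratch: factor $\pc_1\psi=\chi^\varepsilon(y_1)f(x_1,z_2,\dots,z_N)$, apply one-dimensional Gagliardo--Nirenberg in the $x_1$ variable and the $L^\infty$-bound on $\nabla\chi^\varepsilon$ over a $\mu^d$-ball in $y_1$. Both close; your version produces the slightly sharper intermediate bound $N\mu^{3d}\varepsilon^{-4}\mathfrak{e}(t)$ as compared to the paper's $\mu^{2d-2/3}\varepsilon^{-2}\mathfrak{e}^2(t)$, but only $d>\tfrac56$ is needed either way, and the paper's route is shorter because it recycles (b). One small slip in your closing commentary: the constraint $\bo<2/(2+\delta)$ plays no role in this lemma — only $\tfrac56<d<\bo$ and $\delta<\tfrac25$ enter; the bound $\bo<2/(2+\delta)$ is used elsewhere (to make $\varepsilon^2/\mu^{\bo}\to 0$ in the treatment of $\gamma^<$).
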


\begin{proof}
In the sense of operators, $\charAbaro=\mathbbm{1}_{\bigcup\limits_{k\geq2}a_{1,k}}\leq \sum\limits_{k=2}^N\mathbbm{1}_{a_{1,k}}$. Hence, for any $\psi\in L^2(\R^{3N})$
\begin{eqnarray*}
\norm{\charAbaro p_1\psi}^2&\leq&\sum\limits_{k=2}^N\Big\llangle{\psi\ket{\phe(z_1)}
\left(\int_{\R^3}\d z_1|\phe(z_1)|^2\mathbbm{1}_{a_{1,k}}(z_1,z_k)\right)
\bra{\phe(z_1)}\psi}\Big\rrangle\\
&\overset{\text{\ref{cor:varphi:1}}}{\ls}&\mathfrak{e}^2(t)N\varepsilon^{-2}\mu^{3d}\norm{p_1\psi}^2
\end{eqnarray*}
and the second part of assertion (a) follows analogously with Lemma~\ref{lem:Phi:1}. Part (b) is proven analogously to Lemma~\ref{lem:g:5}, noting that 
$\left(\int_{\R^3}\d z_1\charAbaro(z_1\mydots z_N)\right)^\frac23\ls N^\frac23\mu^{2d}$. Part (c) follows from this with $\tfrac13-d<-\tfrac12$ and $2d-\tfrac53>0$ 
and as $\norm{\nabla_{y_1}\pc_1\partial_{x_1}\psi}^2\ls\mathfrak{e}^2(t)\varepsilon^{-2}$ by Lemma~\ref{lem:a_priori:4} and
$$\norm{\nabla_{y_1}\partial_{y_1^{(1)}}\pc_1\psi}^2+\norm{\nabla_{y_1}\partial_{y_1^{(2)}}\pc_1\psi}^2
\ls\varepsilon^{-4},$$
where we have put $y_1=(y_1^{(1)},y_1^{(2)})$.
For assertion (d), note that $\charBbaro\leq \sum_{k=2}^N\mathbbm{1}_{\Abar_k}$, hence $\norm{\charBbaro\psi}^2\leq\sum_{k=2}^N\norm{\mathbbm{1}_{\Abar_k}\psi}^2$, and (e) follows from Lemma~\ref{lem:a_priori:4} and since $d>\frac56$. Finally, 
$$\norm{\charAbarox\qc_1\psi}^2
\leq\int\limits_{\R^{3N-1}}\hspace{-0.2cm}\d z_N\mycdots \d y_1
\left(\int\limits_{\R}\d x_1\charAbarox(x_1\mydots x_N)\right)
\left(\sup\limits_{x_1\in\R}|\qc_1\psi(z_1\mydots z_N)|^{2}\right).$$
Note that $\int_{\R}\d x_1\charAbarox(x_1\mydots x_N)\ls N\mu^d$ analogously to above. For the second factor in the integral, the one-dimensional Gagliardo-Nirenberg-Sobolev inequality \cite[Theorem~8.5]{lieb_loss},
$$\sup\limits_{x\in\R}|f(x)|^2\leq \norm{f'}_{L^2(\R)}\norm{f}_{L^2(\R)}\quad \text{ for } f\in H^1(\R),
$$
implies
$$\sup\limits_{x_1\in\R}|\qc_1\psi(z_1\mydots z_N)|^{2}\leq\norm{\qc_1\partial_{x_1}\psi(\cdot,y_1\mydots z_N)}_{L^{2}(\R)}\norm{\qc_1\psi(\cdot,y_1\mydots z_N)}_{L^{2}(\R)}.
$$
Using Cauchy-Schwarz in the $\d y_1\mycdots\d z_N$-integration, we obtain
\begin{eqnarray*}
\norm{\charAbarox\qc_1\psi^{N,\varepsilon}(t)}^2
&\ls &N\mu^d\norm{\qc_1\partial_{x_1}\psi^{N,\varepsilon}(t)}\norm{\qc_1\psi^{N,\varepsilon}(t)}\\
&\overset{\text{\ref{lem:a_priori:4}}}{\ls}&
\mathfrak{e}^2(t)N^{1-d}\varepsilon^{2d+1}
\; =\; \mathfrak{e}^2(t)(N\varepsilon^\delta)^{1-d}\varepsilon^2\varepsilon^{2d-1+\delta(d-1)}.
\end{eqnarray*}
Assertion (f) follows from this because $d<\bo$ and since the last exponent is positive as $0<\delta<\frac25$ and $d>\frac56$.
\end{proof}

We will use some techniques and intermediate results from \cite{NLS}, which are listed in Lemma~\ref{lem:NLS} below. In \cite{NLS}, one considers a class of interaction potentials $\mathcal{W}_{\bo,\eta}$ (\cite[Definition~2.2]{NLS}), which, recalling that $\mu(N,\varepsilon)=\tfrac{\varepsilon^2}{N}$, can be characterised in the following way:

\begin{definition}\label{def:W}
Let $\eta>0$. The set $\mathcal{W}_{\bo,\eta}$ is defined as the set containing all families of interaction potentials
$$v:(0,1)\to L^\infty(\R^3,\R), \quad \mu\mapsto v(\mu),$$ 
such that it holds for all $\mu\in(0,1)$ that $\norm{v(\mu)}_{L^\infty(\R^3)}\ls \mu^{1-3\bo}$, $v(\mu)$ is non-negative and spherically symmetric, $\supp{v(\mu)} \subseteq \left\{\z\in\R^3:|\z|\ls \mu^\bo\right\}$ and 
$	\lim\limits_{\mu\to0}\mu^{-\eta}	\left|b(\mu,v)-b(v)\right|	=0,$
where
$$b(\mu,v):= \mu^{-1}\int_{\R^3}v(\mu,z)\d\z\int_{\R^2}|\chi(y)|^4\d y \quad\text{ and }\quad b(v):=\lim_{\mu\to0}b(\mu,v).$$
\end{definition}

\begin{lem}\label{lem:NLS}
Let $v\in\mathcal{W}_{\bo,\eta}$ for some $\eta>0$.
\lemit{
\item Let $\he:\{z\in\R^3:|z|\leq\varepsilon\}\rightarrow\R$ be the unique solution of $\Delta\he=v(\mu)$ with boundary condition $\he\big|_{|z|=\varepsilon}=0$ and denote $\heij:=\he(\z_i-\z_j).$
Then $$\onorm{p_1(\na_1\heot)}\ls \mathfrak{e}(t) N^{-1}\mu^{-\frac{\bo}{2}}\varepsilon.$$ \label{lem:NLS:2}
\item Let $R\ls \mu^\bo$ such that $\supp v(\mu)\subseteq \{z\in\R^3: |z|\leq R\}$.
Let $\te\in\mathcal{C}^\infty(\R^3, [0,1])$ be spherically symmetric such that 
$\te(z)=1$ for $|z|\leq R$, $\te(z)=0$ for $|z|\geq \varepsilon$, and $\te$ is decreasing for $R<|z|<\varepsilon$. Denote $\teij:=\te(\z_i-\z_j).$
Then $$\norm{\na\te}_{L^\infty(\R^3)}\ls \varepsilon^{-1}.$$\label{lem:NLS:3}
\item Let $\beta_1\in[0,\bo]$. Define
$$\overline{v(\mu,x)}:=\int_{\R^2}\d y_1|\chie(y_1)|^2\int_{\R^2}\d y_2|\chie(y_2)|^2 v(\mu,(x,y_1-y_2))$$
and let $\hbo:[-N^{-\beta_1},N^{-\beta_1}]\rightarrow\R$ be the unique solution of $\tfrac{\d^2}{\d x^2}\hbo=\overline{v(\mu)}$ with boundary condition  $\hbo(\pm N^{-\beta_1})=0$. Then $$\onorm{\pp_1(\tfrac{\d}{\d x_1}\hboot)}\ls\mathfrak{e}(t)N^{-1-\frac{\beta_1}{2}}.$$\label{lem:NLS:4}
\item Let $R\ls \mu^\bo$ such that $\supp v(\mu)\subseteq \{z\in\R^3: |z|\leq R\}$.
For $\beta_1\in[0,\bo]$, let $\tb\in\mathcal{C}^\infty(\R,[0,1])$ be an even function such that $\tb(x)=1$ for $|x|\leq R$, $\tb(x)=0$ for $|x|\geq N^{-\beta_1}$ and $\tb$ is decreasing for $R<|x|<N^{-\beta_1}$. Denote $\tbij:=\tb(x_i-x_j)$. Then 
$$\norm{\tfrac{\d}{\d x}\tb}_{L^\infty(\R)}\ls N^{\beta_1}, \qquad \onorm{\pp_1\left(\tfrac{\d}{\d x_1}\tbot\right)}\ls\mathfrak{e}(t)N^{\frac{\beta_1}{2}}.$$\label{lem:NLS:5}
\item Let $\psi\in L^2(\R^{3N})$ be symmetric in $\{z_1\mydots z_N\}$. Then
\begin{align*}
&\left|\llr{\psi,p_1p_2\left((N-1)v(\mu,z_1-z_2)\right)p_1p_2\psi}
-\llr{\psi,b(v)|\Phi(x_1)|^2\psi}\right|\\
&\hspace{7cm}\ls\mathfrak{e}^2(t)\big(\tfrac{\mu^\bo}{\varepsilon}+N^{-1}+\mu^\eta+\llr{\psi,\hat{n}\psi}\big). 
\end{align*}\label{lem:NLS:1}
\item Let $\psi,\tilde{\psi}\in L^2(\R^{3N})$ and $t_2\in\{q_2,\qp_2\pc_2\}$. Then
$$N\left|\llr{\psi, \qc_1t_2v(\mu,z_1-z_2)p_1p_2\tilde{\psi}}\right|\ls \mathfrak{e}(t)\mu^{-\frac{\bo}{2}}(\norm{\qc_1\psi}+\varepsilon\norm{\nabla_1\qc_1\psi})\norm{\tilde{\psi}}.$$ \label{lem:NLS:6}
\item Let $\psi\in L^2(\R^{3N})$ be symmetric in $\{z_1\mydots z_N\}$. Then
$$N\left|\llr{\psi,\pc_1\pc_2\qp_1\qp_2(\tfrac{\d}{\d x_1}\tbot)(\tfrac{\d}{\d x_1}\hbot)p_1p_2\psi}\right|\ls\mathfrak{e}^2(t)\llr{\psi,\hat{n}\psi}.$$ \label{lem:NLS:7}

}
\end{lem}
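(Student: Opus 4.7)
The seven estimates in Lemma~\ref{lem:NLS} all rest on three common ingredients: (i) energy identities for the Poisson problems defining $\he$ and $\hbo$, exploiting the support constraint $\supp v(\mu)\subseteq B_{\mu^\bo}(0)$ and the scaling $\mu=\varepsilon^2/N$; (ii) the operator bounds in Lemma~\ref{lem:pfp}, which convert $L^2$- or $L^\infty$-bounds on a two-body kernel $G(z_1-z_2)$ into norm bounds for $p\,G(z_1-z_2)$; and (iii) the quantitative convergence $|b(\mu,v)-b(v)|\ls\mu^\eta$ built into the definition of $\mathcal{W}_{\bo,\eta}$.

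For (a) and (c), I would integrate by parts to obtain $\|\nabla\he\|_{L^2(B_\varepsilon)}^2=-\int\he\,v(\mu)$ and the analogous identity $\|(\hbo)'\|_{L^2}^2=-\int\hbo\,\overline{v(\mu)}$, then bound the right-hand sides by $\|\he\|_{L^\infty(\supp v)}\|v\|_{L^1}$. The Green's function representation gives $\|\he\|_{L^\infty(\supp v)}\ls\mu^{1-\bo}$ and $\|v\|_{L^1}\ls\mu$ (since $b(\mu,v)$ is uniformly bounded), so $\|\nabla\he\|_{L^2}\ls\mu^{1-\bo/2}$, and Lemma~\ref{lem:pfp:2} combined with $\mu\varepsilon^{-1}=\varepsilon N^{-1}$ produces the bound in (a). Part (c) is the one-dimensional analogue using Lemma~\ref{lem:pfp:4}, after noting that integrating $|\chie|^2\otimes|\chie|^2$ against $v(\mu)$ yields $\|\overline{v(\mu)}\|_{L^1(\R)}\ls\mu\varepsilon^{-2}$. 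Parts (b) and (d) are purely geometric: I explicitly construct smooth monotone interpolations in an annulus of width $\sim\varepsilon$ or interval of width $\sim N^{-\beta_1}$, giving the claimed gradient bounds directly; the second estimate in (d) follows from Lemma~\ref{lem:pfp:4} together with $\|(\tb)'\|_{L^2(\R)}\ls N^{\beta_1/2}$.

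Part (e) is the heart of the lemma. I would substitute $\chie(y)=\varepsilon^{-1}\chi(y/\varepsilon)$, change variables $z=z_1-z_2$, and Taylor-expand $\Phi(x_1-\mu^\bo u)$ and $\chi(\tfrac{y_1}{\varepsilon}-\tfrac{\mu^\bo}{\varepsilon}v)$ on the scale of the interaction support, using the moderate-confinement condition $\mu^\bo/\varepsilon\to 0$ to justify the expansion in the transverse directions. The leading term assembles into $(N-1)\mu^{-1}\int v\int|\chi|^4\cdot|\Phi(x_1)|^2=b(\mu,v)|\Phi(x_1)|^2$; the Taylor remainders furnish the $\mu^\bo/\varepsilon$ error, the normalisation mismatch gives the $N^{-1}$ contribution, the $\mathcal{W}_{\bo,\eta}$-convergence $b(\mu,v)\to b(v)$ supplies the $\mu^\eta$ error, and removing the outer projections $\pp_1\pp_2$ via Lemma~\ref{lem:psi-Phi} and Lemma~\ref{lem:commutators:2} yields the $\llangle\psi,\hat{n}\psi\rrangle$ contribution. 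For (f) and (g), I would rewrite $v=\Delta\he$ in (f) and $\overline v=\tfrac{\d^2}{\d x^2}\hbo$ in (g), integrate by parts to transfer one derivative from $\he$ (resp.\ $\hbo$) onto $\qc_1$ (resp.\ $\qp_1$), apply Cauchy--Schwarz, and invoke (a) respectively (c) together with the a priori energy bounds in Lemma~\ref{lem:a_priori:4} and the $\hat f q_1q_2$ estimate of Lemma~\ref{lem:fqq:2} to absorb the remaining $q_1q_2$ structure.

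\textbf{Main obstacle.} The delicate step is part (e): achieving all four simultaneous error rates $\mu^\eta$, $\mu^\bo/\varepsilon$, $N^{-1}$, and $\llangle\psi,\hat n\psi\rrangle$ requires careful bookkeeping of how the $y$-integration decouples from the $x$-integration only in the regime $\mu^\bo\ll\varepsilon$, and of how the scattering-length characterisation of $b(v)$ enters through $\mathcal{W}_{\bo,\eta}$. For parts (f) and (g), the technical difficulty is tracking the balance between the $\mu^{-\bo/2}$ factor produced by the Poisson solution and the $\varepsilon^{-1}$ factor emerging from each transverse derivative, so that the final bound is indeed small along admissible sequences.
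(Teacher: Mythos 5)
The paper does not prove this lemma: its proof is a one-sentence citation to Lemmas~4.12--4.16, Corollaries~4.14 and~4.16, and displays~(69)--(75) of~\cite{NLS}. Your sketch therefore has to reconstruct what is in that reference, and it does so along the correct lines: the energy identity $\norm{\nabla\he}_{L^2}^2=-\int\he\,v(\mu)$, the Green's-function bound $\norm{\he}_{L^\infty(\supp v)}\ls\mu^{1-\bo}$ and $\norm{v(\mu)}_{L^1}\ls\mu$ give $\norm{\nabla\he}_{L^2}\ls\mu^{1-\bo/2}$, and Lemma~\ref{lem:pfp:2} together with $\mu\varepsilon^{-1}=\varepsilon N^{-1}$ produces the stated bound in~(a); the one-dimensional analogue with $\norm{\overline{v(\mu)}}_{L^1(\R)}\ls\mu\varepsilon^{-2}=N^{-1}$ and interval half-length $N^{-\beta_1}$ gives~(c); (b) and~(d) are indeed elementary constructions and Lemma~\ref{lem:pfp:4}; (e) is the coupling-constant computation via convolution and Taylor expansion on the scale $\mu^\bo\ll\varepsilon$, with the $\mathcal{W}_{\bo,\eta}$ condition supplying the $\mu^\eta$ error; (f) and~(g) then follow by Cauchy--Schwarz and (a)--(d). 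All of your scaling bookkeeping checks out.

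Two imprecisions worth tidying. First, in~(f) the integration by parts is only legitimate after writing $v^{(12)}=\teot\Delta_1\heot$, since $\he$ is defined only on $B_\varepsilon$; this produces an additional $\nabla_1\teot$ term which is precisely what the $\varepsilon^{-1}$ bound of~(b) is there to control, and which your plan does not mention even though you set up $\te$. Second, your description of~(g) speaks of rewriting $\overline{v}=\tfrac{\d^2}{\d x^2}\hbo$ and integrating by parts, but the quantity in~(g) already carries $\tfrac{\d}{\d x_1}\hbot$; it is the result of such an integration by parts, and its estimate needs only Cauchy--Schwarz together with~(c), (d) and Lemma~\ref{lem:fqq:2} (plus the observation $\qp_j=\qp_j q_j$ to pass from $\qp_1\qp_2$ to $q_1 q_2$). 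These are gaps of exposition rather than of substance; the approach is the same as the cited one and the numerics are right.
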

\begin{proof}
Parts (a) and (b) follow from Lemma~4.12, Lemma~4.13 and Corollary~4.14 in \cite{NLS} and assertions (c) and (d) are taken from Lemma~4.15 and Corollary~4.16 in \cite{NLS}.
Parts (e) and (f) are (69)-(71) and (74) in \cite{NLS}, and (g) follows from the estimate of (75) in \cite{NLS}.
\end{proof}

\begin{lem}\label{lem:U:in:W}
Let $\eta>0$. Then the family $U_\bo$ is contained in $\mathcal{W}_{\bo,\eta}$.
\end{lem}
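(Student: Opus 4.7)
The plan is to check the four conditions defining $\mathcal{W}_{\bo,\eta}$ (Definition~\ref{def:W}) for the family $\mu\mapsto U_\bo$. Three of them follow directly from Definition~\ref{def:U}: $U_\bo$ is by construction non-negative and spherically symmetric with $\norm{U_\bo}_{L^\infty(\R^3)}=a\mu^{1-3\bo}\ls\mu^{1-3\bo}$; and Lemma~\ref{lem:scat:5} gives $R_\bo\ls\mu^\bo$, so $\supp U_\bo\subseteq\{z:|z|\ls\mu^\bo\}$. The only substantive task is to identify the limit $b(U_\bo)$ and control its rate of convergence.

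The key observation is the identity~\eqref{eqn:scat(w-U)=0}, which I will use to trade $\int U_\bo\fb\,\d z$ for the better-understood quantity $\int w_\mu\fb\,\d z$. Since $\supp w_\mu\subseteq\{|z|\leq\mu\}\subseteq\{|z|\leq\mu^\bo\}$, Lemma~\ref{lem:scat:2} lets me replace $\fb=\kappa_\bo j_\mu$ on this set, and then \eqref{eqn:integral:scat} together with \eqref{eqn:a^N,eps} yield $\int w_\mu\fb\,\d z=\kappa_\bo\cdot 8\pi a\mu$. The bounds $1<\kappa_\bo<\mu^\bo/(\mu^\bo-\mu a)$ from Lemma~\ref{lem:scat:2} give $|\kappa_\bo-1|\ls\mu^{1-\bo}$. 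Meanwhile, on $\supp U_\bo\subseteq\{\mu^\bo\leq|z|\leq R_\bo\}$, the monotonicity of $\fb$ (Lemma~\ref{lem:scat:1}) together with $\fb(R_\bo)=1$ and $\fb(\mu^\bo)=\kappa_\bo(1-a\mu^{1-\bo})\in[1-a\mu^{1-\bo},1]$ forces $|1-\fb|\ls\mu^{1-\bo}$.

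Putting these together, I would write $\int U_\bo\,\d z=\int U_\bo\fb\,\d z+\int U_\bo(1-\fb)\,\d z$; the last term is at most $\ls\mu^{1-\bo}\int U_\bo\,\d z$, which can be absorbed into the left-hand side once $\mu$ is small. This produces $\mu^{-1}\int U_\bo\,\d z=8\pi a+O(\mu^{1-\bo})$ and hence
$$\left|b(\mu,U_\bo)-8\pi a\int_{\R^2}|\chi(y)|^4\,\d y\right|\ls\mu^{1-\bo}.$$
Thus $b(U_\bo)=8\pi a\int_{\R^2}|\chi(y)|^4\,\d y$ (the defining limit exists), and $\mu^{-\eta}|b(\mu,U_\bo)-b(U_\bo)|\to 0$ for every $\eta<1-\bo$. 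Since $\bo<1$ is fixed, this yields the lemma for all sufficiently small $\eta>0$, which is what is needed for the later applications through Lemma~\ref{lem:NLS}. There is no real conceptual obstacle here; the only delicate point is the absorption step, which depends on knowing that both the pointwise deviation $|1-\fb|$ on $\supp U_\bo$ and the constant $\kappa_\bo-1$ are simultaneously of the same size $\mu^{1-\bo}$, so that combining them does not degrade the overall rate.
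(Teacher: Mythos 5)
Your proof is correct but takes a genuinely different route than the paper's. The paper computes $\mu^{-1}\int_{\R^3}U_\bo\,\d z=\tfrac{4\pi}{3}a\bigl(R_\bo^3\mu^{-3\bo}-1\bigr)$ directly from Definition~\ref{def:U} and then asserts (citing Lemma~\ref{lem:scat:5}) that this quantity equals a $\mu$-independent constant, so that $b(\mu,U_\bo)=b(U_\bo)$ identically and the rate condition holds trivially for every $\eta>0$. You instead pass through the scattering identity~\eqref{eqn:scat(w-U)=0} and the explicit control on $\fb$ from Lemma~\ref{lem:scat} --- which is precisely the strategy the paper itself uses for the companion Lemma~\ref{lem:Uf:in:W}. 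Your version is in fact more defensible: the ratio $R_\bo/\mu^\bo$ is fixed by the $\mu$-dependent matching conditions~\eqref{eqn:AB:2} and converges to a constant ($7^{1/3}$) only with $\mathcal{O}(\mu^{1-\bo})$ corrections, so the paper's exact-equality claim is not actually delivered by Lemma~\ref{lem:scat:5} alone, which gives only the upper bound $R_\bo\ls\mu^\bo$. Your decomposition $\int U_\bo=\int U_\bo\fb+\int U_\bo\gb$, combined with $\int U_\bo\fb=\kappa_\bo\cdot 8\pi a\mu$ via~\eqref{eqn:scat(w-U)=0} and~\eqref{eqn:integral:scat}, the bound $|\kappa_\bo-1|\ls\mu^{1-\bo}$ from Lemma~\ref{lem:scat:2}, and $0\le\gb\le a\mu^{1-\bo}$ on $\supp U_\bo$, yields the clean rate $|b(\mu,U_\bo)-b|\ls\mu^{1-\bo}$, and the absorption step is sound because $\gb\ge0$ there. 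The resulting restriction $\eta<1-\bo$, which you flag honestly, is harmless: the only places the paper needs $U_\bo\in\mathcal{W}_{\bo,\eta}$ are in applying parts of Lemma~\ref{lem:NLS} inside the kinetic-energy estimate (Section~\ref{subsec:E_kin}), and those parts do not involve the rate parameter $\eta$, so membership for a single $\eta>0$ suffices.
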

\begin{proof}
Note that $\mu^{-1}\int_{\R^3}U_\bo(z)\d z=\tfrac{4\pi}{3}a(R_\bo^3\mu^{-3\bo}-1)=\tfrac{4\pi}{3}ac$ for some $c>0$ by Lemma~\ref{lem:scat:5}, hence $b(\mu,U_\bo)=b(U_\bo)$. The remaining requirements are easily verified.
\end{proof}
\begin{lem}\label{lem:Uf:in:W}
Let $0<\eta<1-\bo$. Then the family $U_\bo\fb$ is contained in $\mathcal{W}_{\bo,\eta}$.
\end{lem}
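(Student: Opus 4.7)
The plan is to verify the four defining properties of $\mathcal{W}_{\bo,\eta}$ from Definition~\ref{def:W} for the family $v(\mu) = U_\bo\fb$. Three of them are essentially immediate from the construction of $U_\bo$ in Definition~\ref{def:U} and the properties of $\fb$ collected in Lemma~\ref{lem:scat}; the delicate point is the convergence rate of $b(\mu, U_\bo\fb)$.

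First I would check the easy structural properties. Non-negativity holds because $U_\bo\geq 0$ by Definition~\ref{def:U} and $\fb\geq 0$ by Lemma~\ref{lem:scat:1}. Spherical symmetry is inherited from $U_\bo$ and $\fb$ since both are functions of $|z|$. For the support, Definition~\ref{def:U} gives $\supp U_\bo\subseteq\{\mu^\bo<|z|<R_\bo\}$ and Lemma~\ref{lem:scat:5} ensures $R_\bo\lesssim \mu^\bo$, so $\supp (U_\bo\fb)\subseteq\{|z|\lesssim\mu^\bo\}$. For the $L^\infty$-bound, note $U_\bo(z)\leq a\mu^{1-3\bo}$ by definition and $\fb\leq 1$ (since $\fb$ is non-decreasing with $\fb\equiv 1$ for $|z|\geq R_\bo$ by Lemma~\ref{lem:scat:1} and Definition~\ref{def:scat}), yielding $\|U_\bo\fb\|_{L^\infty(\R^3)}\lesssim \mu^{1-3\bo}$.

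The main step is to establish the quantitative rate $\mu^{-\eta}|b(\mu,U_\bo\fb)-b(U_\bo\fb)|\to 0$ for $\eta<1-\bo$. The key identity is \eqref{eqn:scat(w-U)=0}, which rewrites
\begin{equation*}
\int_{\R^3}U_\bo(z)\fb(z)\,\d z = \int_{\R^3}w_\mu(z)\fb(z)\,\d z.
\end{equation*}
Since $\supp w_\mu\subseteq\{|z|\leq\mu\}\subseteq\{|z|\leq\mu^\bo\}$, Lemma~\ref{lem:scat:2} gives $\fb(z)=\kappa_\bo\, j_\mu(z)$ on $\supp w_\mu$. Combining this with \eqref{eqn:integral:scat} and \eqref{eqn:a^N,eps} yields
\begin{equation*}
\mu^{-1}\int_{\R^3}U_\bo(z)\fb(z)\,\d z = \kappa_\bo\,\mu^{-1}\int_{\R^3}w_\mu(z)j_\mu(z)\,\d z = 8\pi a\,\kappa_\bo.
\end{equation*}
Consequently $b(U_\bo\fb)=\lim_{\mu\to 0} 8\pi a\,\kappa_\bo\int|\chi|^4\d y$, and the rate reduces to estimating $|\kappa_\bo-1|$.

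The crucial estimate is that $\kappa_\bo\in\left(1,\tfrac{\mu^\bo}{\mu^\bo-\mu a}\right)$ by Lemma~\ref{lem:scat:2}. Expanding the upper bound as $\tfrac{1}{1-a\mu^{1-\bo}}$, I obtain
\begin{equation*}
0<\kappa_\bo-1 < \frac{a\mu^{1-\bo}}{1-a\mu^{1-\bo}}\lesssim \mu^{1-\bo}
\end{equation*}
for $\mu$ sufficiently small, so that
\begin{equation*}
|b(\mu,U_\bo\fb)-b(U_\bo\fb)| = 8\pi a\,|\kappa_\bo-1|\,\int_{\R^2}|\chi(y)|^4\d y \lesssim \mu^{1-\bo}.
\end{equation*}
Multiplying by $\mu^{-\eta}$ yields a bound of order $\mu^{1-\bo-\eta}$, which tends to zero precisely under the hypothesis $\eta<1-\bo$. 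The existence of the limit $b(U_\bo\fb)$ is automatic from the explicit form $8\pi a\,\kappa_\bo\int|\chi|^4$ and $\kappa_\bo\to 1$. The main (minor) obstacle is just carefully tracking $\kappa_\bo$; everything else follows directly from results already proven, so no additional machinery is required.
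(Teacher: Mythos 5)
Your proof is correct and follows essentially the same route as the paper: verify the structural properties of $\mathcal{W}_{\bo,\eta}$ directly, then use \eqref{eqn:scat(w-U)=0} together with Lemma~\ref{lem:scat:2} and \eqref{eqn:integral:scat} to identify $b(\mu,U_\bo\fb)=8\pi a\,\kappa_\bo\int|\chi|^4$, and finally bound $|\kappa_\bo-1|\lesssim\mu^{1-\bo}$ from the containment $\kappa_\bo\in(1,\tfrac{\mu^\bo}{\mu^\bo-\mu a})$.
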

\begin{proof}
We drop the $\mu$-dependence of the family members and write $U_\bo\fb$ instead of $(U_\bo\fb)(\mu)$. 
By Lemma~\ref{lem:scat}, $\fb$ is spherically symmetric, $0\leq\fb(\z)\leq 1$ and $R_\bo\ls\mu^\bo$, hence 
$\norm{U_\bo\fb}_{L^\infty(\R^3)}\ls \mu^{1-3\bo}$ and $\supp{U_\bo\fb}\subseteq \{\z\in\R^3:|\z|\ls \mu^{\bo}\}$ by Definition~\ref{def:U} of $U_\bo$. Further,
$$
\mu^{-1}\int\limits_{\R^3}U_\bo(\z)\fb(\z)\d\z
\overset{\text{\eqref{eqn:scat(w-U)=0}}}{=}\mu^{-1}\int\limits_{B_\mu(0)} w_\mu(\z)\fb(\z)
\overset{\text{\ref{lem:scat:2}}}{=}\mu^{-1}\kappa_\bo\int\limits_{B_\mu(0)}w_\mu(\z)j_\mu(\z)
\overset{\text{\eqref{eqn:integral:scat}}}{=}\kappa_\bo 8\pi a,
$$
which yields $b(\mu,U_\bo\fb)=\kappa_\bo 8\pi a\int_{\R^2}|\chi(y)|^4\d y$ and consequently
\begin{equation}\label{b=b}
b(U_\bo\fb)=\lim\limits_{\mu\to0}b(\mu,U_\bo\fb)=8\pi a\int\limits_{\R^2}|\chi(y)|^4\d y=b
\end{equation}
by Lemma~\ref{lem:scat:2}. This implies
$$|b(\mu,U_\bo\fb)-b(U_\bo\fb)|=8\pi a (\kappa_\bo-1)\int\limits_{\R^2}|\chi(y)|^4\d y\ls \frac{\mu a}{\mu^\bo-\mu a}\overset{\text{\ref{lem:scat:2}}}{\ls} \mu^{1-\bo}.$$
\end{proof}

\noindent\emph{Proof of Lemma~\ref{lem:E_kin:GP}.}
We will in the following abbreviate $\psi^{N,\varepsilon}(t)\equiv \psi$ and $\Phi(t)\equiv\Phi$. 
\begin{eqnarray}
&\hspace{-1cm}E^\psi&\hspace{-0.7cm}(t)-\mathcal{E}^\Phi(t)\nonumber\\
&=&\norm{\charAo\partial_{x_1}q_1\psi}^2+\norm{\charAo\partial_{x_1}p_1\psi}^2+2\Re\llr{\partial_{x_1}p_1\psi,\charAo\partial_{x_1}q_1\psi}
+\norm{\charAbaro\charBbaro\partial_{x_1}\psi}^2\nonumber\\
&&+\norm{\charAbaro\charBo\partial_{x_1}\psi}^2+\llr{\psi,(-\Delta_{y_1}+\tfrac{1}{\varepsilon^2}V^\perp(\tfrac{y_1}{\varepsilon})-\tfrac{E_0}{\varepsilon^2})\psi}+\tfrac{N-1}{2}\Big\|{\charBbaro\sqrt{w_\mu^{(12)}}\psi}\Big\|^2\nonumber\\
&&+\tfrac{N-1}{2}\llr{\psi,\charBo \left(w_\mu^{(12)}-U_\bo^{(12)}\right)\psi}
+\tfrac{N-1}{2}\llr{\psi,\charBo p_1p_2U_\bo^{(12)}p_1p_2\charBo \psi}\nonumber\\
&&+\tfrac{N-1}{2}\llr{\psi,\charBo (1-p_1p_2)U_\bo^{(12)}(1-p_1p_2)\charBo \psi}\nonumber\\
&&+(N-1)\Re\llr{\psi,\charBo p_1p_2 U_\bo^{(12)}(1-p_1p_2)\charBo \psi}
+\llr{\psi,\Vp(t,z_1)\psi}\nonumber\\
&&-\norm{\Phi'}^2_{L^2(\R)}-\lr{\Phi,\tfrac{b}{2}|\Phi|^2\Phi}-\lr{\Phi,\Vp(t,(x,0))\Phi}\nonumber\\
&\geq&\norm{\charAo\partial_{x_1}q_1\psi}^2\nonumber\\
&&+\norm{\charAbaro\charBo\partial_{x_1}\psi}^2+\llr{\psi,(-\Delta_{y_1}+\tfrac{1}{\varepsilon^2}V^\perp(\tfrac{y_1}{\varepsilon})-\tfrac{E_0}{\varepsilon^2})\psi}\nonumber\\
&&\hspace{7cm}+\tfrac{N-1}{2}\llr{\psi,\charBo \left(w_\mu^{(12)}-U_\bo^{(12)}\right)\psi}\label{eqn:E_kin:GP:1}\\
&&+2\Re\llr{\partial_{x_1}p_1\psi,\charAo\partial_{x_1}q_1\psi}\label{eqn:E_kin:GP:2}\\
&&+\norm{\charAo\partial_{x_1}p_1\psi}^2-\norm{\Phi'}^2_{L^2(\R)}\label{eqn:E_kin:GP:3}\\
&&+\tfrac{b}{2}\left(\llr{\psi,|\Phi(x_1)|^2\psi}-\lr{\Phi,|\Phi|^2\Phi}\right)
+\llr{\psi,\Vp(t,z_1)\psi}-\lr{\Phi,\Vp(t,(x,0))\Phi}\label{eqn:E_kin:GP:4}\\
&&+\tfrac{N-1}{2}\llr{\psi,\charBo p_1p_2U_\bo^{(12)}p_1p_2\charBo\psi}-\tfrac{b}{2}\llr{\psi,|\Phi(x_1)|^2\psi}\label{eqn:E_kin:GP:5}\\
&&+(N-1)\Re\llr{\psi,\charBo(p_1q_2+q_1p_2)U_\bo^{(12)}p_1p_2\charBo\psi}\label{eqn:E_kin:GP:6}\\
&&+(N-1)\Re\llr{\psi,\charBo q_1q_2U_\bo^{(12)}p_1p_2\charBo\psi}.\label{eqn:E_kin:GP:7}
\end{eqnarray}
We will now estimate these expressions separately. For \eqref{eqn:E_kin:GP:1}, recall that $\chie$ is the ground state of 
$-\Delta_{y}+\tfrac{1}{\varepsilon^2}V^\perp(\tfrac{y}{\varepsilon})$ with eigenvalue $\tfrac{E_0}{\varepsilon^2}$, 
hence $(-\Delta_{y_1}+\tfrac{1}{\varepsilon^2}V^\perp(\tfrac{y_1}{\varepsilon})-\tfrac{E_0}{\varepsilon^2})\pc_1=0$
and $-\Delta_{y}+\tfrac{1}{\varepsilon^2}V^\perp(\tfrac{y}{\varepsilon})-\tfrac{E_0}{\varepsilon^2}\geq0$ as operator. Using further that $\charAbarox=(\charAbarox)^2$, $\charBbaro=(\charBbaro)^2$ and their complements commute with $-\Delta_{y_1}+\tfrac{1}{\varepsilon^2}V^\perp(\tfrac{y_1}{\varepsilon})-\tfrac{E_0}{\varepsilon^2}$ and with $\qc_1$, we conclude
\begin{eqnarray*}
\llangle\psi,(-\Delta_{y_1}+\tfrac{1}{\varepsilon^2}V^\perp(\tfrac{y_1}{\varepsilon})-\tfrac{E_0}{\varepsilon^2})\psi\rrangle
&\geq& \llr{\charAbarox\charBo\qc_1\psi,(-\Delta_{y_1}+\tfrac{1}{\varepsilon^2}V^\perp(\tfrac{y_1}{\varepsilon})-\tfrac{E_0}{\varepsilon^2})\charAbarox\charBo\qc_1\psi}\\
&\geq& \norm{\charAbarox\charBo\nabla_{y_1}\qc_1\psi}^2\\
&&-\tfrac{1}{\varepsilon^2}\norm{(V^\perp-E_0)_-}_{L^\infty(\R^2)}\norm{\charAbarox\qc_1\psi}^2\\
&\overset{\text{\ref{lem:cutoffs:4}}}{\gs}&\norm{\charAbaro\charBo\nabla_{y_1}\qc_1\psi}^2-\mathfrak{e}^2(t)(N\varepsilon^\delta)^{1-\bo}
\end{eqnarray*}
because $\charAbarox\geq\charAbaro$ in the sense of operators since $\Abar_1^x\supset\Abaro$. Further,
\begin{eqnarray*}
\norm{\charAbaro\charBo\nabla_{y_1}\psi}^2
&\leq&\norm{\charAbaro\charBo\nabla_{y_1}\pc_1\psi}^2
+\norm{\charAbaro\charBo\nabla_{y_1}\qc_1\psi}^2\\
&&+2\norm{\charAbaro\charBo\nabla_{y_1}\pc_1\psi}\norm{\nabla_{y_1}\qc_1\psi}\\
&\ls&\norm{\charAbaro\charBo\nabla_{y_1}\qc_1\psi}^2+\mathfrak{e}^2(t)N^{-\frac12}
\end{eqnarray*}
by Lemma~\ref{lem:a_priori:4} and Lemma~\ref{lem:cutoffs:5}. Together, this implies
$$
\eqref{eqn:E_kin:GP:1}\gs\norm{\charAbaro\charBo\nabla_1\psi}^2+\tfrac{N-1}{2}\llr{\psi,\charBo \left(w_\mu^{(12)}-U_\bo^{(12)}\right)\psi}-\mathfrak{e}^2(t)\left(N^{-\frac12}+(N\varepsilon^\delta)^{1-\bo}\right).
$$
As $d<\bo$, it follows that $R_\bo<2R_\bo<\mu^d$ for sufficiently small $\mu$, and consequently
$\Cbaro\subset\Abaro$ and $(c_{1,k}\cap\Bo)\cap(c_{1,l}\cap\Bo)=\emptyset$ for $k,l\neq 1$, $l\neq k$. Hence,
$$\charAbaro\charBo\geq \charCbaro\charBo=\mathbbm{1}_{\bigcup\limits_{k\geq2}c_{1,k}\cap\Bo}=\sum\limits_{k=2}^N\mathbbm{1}_{c_{1,k}\cap\Bo}
=\charBo\sum\limits_{k=2}^N\mathbbm{1}_{c_{1,k}}
$$
in the sense of operators, which implies
\begin{eqnarray*}
\eqref{eqn:E_kin:GP:1}
&\gs&(N-1)\norm{\mathbbm{1}_{c_{1,2}}\nabla_1\charBo\psi}^2+\tfrac{N-1}{2}\llr{\charBo\psi, \left(w_\mu^{(12)}-U_\bo^{(12)}\right)\charBo\psi}\\
&&-\mathfrak{e}^2(t)\left(N^{-\frac12}+(N\varepsilon^\delta)^{1-\bo}\right)\\
&\gs&-\mathfrak{e}^2(t)\left(N^{-\frac12}+(N\varepsilon^\delta)^{1-\bo}\right)
\end{eqnarray*}
by Lemma~\ref{lem:scat:peter} 
because $\charBo\psi\in \mathcal{D}(\nabla_1)$ and as $\mathbbm{1}_{c_{1,2}}=\mathbbm{1}_{|z_1-z_2|<R_\bo}$.
Next, observe that
\begin{eqnarray*}
|\eqref{eqn:E_kin:GP:2}|&
\leq& \left|\llr{\partial_{x_1}q_1\psi,\partial_{x_1}p_1\psi}\right|+\left|\llr{\partial_{x_1}q_1\psi,\charAbaro\partial_{x_1}p_1\psi}\right|\\
&\overset{\text{\ref{lem:commutators:2}}}{\leq}& \left|\llr{\hat{n}^{-\frac12}q_1\psi,\partial_{x_1}^2p_1\hat{n}^\frac12_1\psi}\right|+\onorm{\charAbaro\partial_{x_1}p_1}\norm{\partial_{x_1}q_1\psi}\\
&\ls&\norm{\Phi}_{H^2(\R)}\left(\llr{\psi,\hat{n}\psi}+\mathfrak{e}(t)\mu^{-\frac12+\frac{3d}{2}}\right)
\end{eqnarray*}
by Lemma~\ref{lem:a_priori:4} and Lemma~\ref{lem:cutoffs:1}. Due to the identity $\norm{\partial_{x_1}p_1\psi}^2=\norm{\Phi'}_{L^2(\R)}^2\norm{p_1\psi}^2$, 
$$
|\eqref{eqn:E_kin:GP:3}|=\left|-\norm{\charAbaro\partial_{x_1}p_1\psi}^2+\norm{\partial_{x_1}p_1\psi}^2-\norm{\Phi'}^2_{L^2(\R)}\right|
\ls \norm{\Phi}_{H^2(\R)}^2\mu^{-1+3d}+\mathfrak{e}^2(t)\norm{q_1\psi}^2.
$$
Applying Lemma~\ref{lem:psi-Phi} and Lemma~\ref{lem:taylor} to \eqref{eqn:E_kin:GP:4} yields $|\eqref{eqn:E_kin:GP:4}|\ls\mathfrak{e}^2(t)\llr{\psi,\hat{n}\psi}+\mathfrak{e}^3(t)\varepsilon$. 
Using the identity $\fb+\gb=1$ and decomposing $\charBo=\mathbbm{1}-\charBbaro$, we estimate \eqref{eqn:E_kin:GP:5} as
\begin{eqnarray*}
|\eqref{eqn:E_kin:GP:5}|
&\leq&\tfrac12\left|\llr{\psi,p_1p_2\left((N-1)(U_\bo\fb)^{(12)}\right)p_1p_2\psi}
-\llr{\psi,b|\Phi(x_1)|^2\psi}\right|\\
&&+\tfrac{N-1}{2}\left|\llr{\charBo\psi,p_1p_2(U_\bo\gb)^{(12)}p_1p_2\charBo\psi}\right|\\
&&+\tfrac{N-1}{2}\left|\llr{\psi,\charBbaro p_1p_2(U_\bo\fb)^{(12)}p_1p_2\charBbaro\psi}\right|\\
&&+(N-1)\left|\llr{\psi,\charBbaro p_1p_2(U_\bo\fb)^{(12)}p_1p_2\psi}\right|\\
&\overset{\text{\ref{lem:NLS:1}}}{\ls}&\mathfrak{e}^2(t)\left(\tfrac{\mu^\bo}{\varepsilon}+N^{-1}+\llr{\psi,\hat{n}\psi}\right)
+N\norm{\charBbaro\psi}\onorm{p_1(U_\bo\fb)^{(12)}p_1}\\
&&+N\onorm{p_1(U_\bo\gb)^{(12)}p_1}\\
&\ls&\mathfrak{e}^2(t)\left(\tfrac{\mu^\bo}{\varepsilon}+\llr{\psi,\hat{n}\psi}+\mathfrak{e}(t)\varepsilon+\mu^{1-\bo}+\mu^\eta
\right)
\end{eqnarray*}
for any $\eta<1-\bo$ by Lemma~\ref{lem:cutoffs:3} and Lemma~\ref{lem:pfp:1}. Here, we have used that $U_\bo\fb\in\mathcal{W}_{\bo,\eta}$ for $\eta<1-\bo$ by Lemma~\ref{lem:Uf:in:W}, 
$\norm{U_\bo\fb}_{L^1(\R^3)}\ls \mu$ and
$$\norm{U_\bo\gb}_{L^1(\R^3)}=a\mu^{1-3\bo}\int_{\supp{U_\bo}}\d z |\gb(z)|\ls\mu^{2-\bo}
$$
because $|\gb(z)|\leq \gb(\mu^\bo)\leq \kappa_\bo a\mu^{1-\bo}$ on $\supp U_\bo$ by Lemma~\ref{lem:scat:2} and \eqref{eqn:j}. Decomposing $\charBo$ as before and abbreviating $Q_0:=p_1p_2$ and $Q_1:=p_1q_2+q_1p_2$, we find
\begin{eqnarray*}
|\eqref{eqn:E_kin:GP:6}|&\ls& N\left|\llr{\charBbaro\psi,Q_1U_\bo^{(12)}Q_0\psi}\right|
+N\left|\llr{\psi,Q_1U_\bo^{(12)}Q_0\charBbaro\psi}\right|\\
&&+N\left|\llr{\charBbaro\psi,Q_1U_\bo^{(12)}Q_0\charBbaro\psi}\right|
+N\left|\llr{\psi,Q_1U_\bo^{(12)}Q_0\psi}\right|\\
&\overset{\text{\ref{lem:commutators:2}}}{\ls}&N\norm{\charBbaro\psi}\onorm{p_1U_\bo^{(12)}p_1}+ N\left|\llr{\hat{n}^{-\frac12}q_2\psi,p_1U_\bo^{(12)}p_1p_2\hat{n}_1^{\frac12}\psi}\right| \\
&&\ls\mathfrak{e}^2(t)\left(\mathfrak{e}(t)\varepsilon+\llr{\psi,\hat{n}\psi}\right)
\end{eqnarray*}
by Lemma~\ref{lem:fqq:2} and Lemma~\ref{lem:cutoffs:3}. For the last term, we decompose $q=\qc+\pc\qp$, hence
\begin{eqnarray}
|\eqref{eqn:E_kin:GP:7}|&\ls&
N\left|\llr{\charBo\psi, \qc_1q_2U_\bo^{(12)}p_1p_2\charBo\psi}\right|
+N\left|\llr{\psi,\qc_1\qp_2\pc_2U_\bo^{(12)}p_1p_2\mathbbm{1}_{\B_2}\psi}\right|\label{eqn:7:1}\\
&&+N\left|\llr{\charBbaro\psi,\qc_2\qp_1\pc_1U_\bo^{(12)}p_1p_2\charBo\psi}\right|\label{eqn:7:2}\\
&&+N\left|\llr{\charBo\psi,\qp_1\qp_2\pc_1\pc_2U_\bo^{(12)}p_1p_2\charBo\psi}\right|,\label{eqn:7:3}
\end{eqnarray}
where we have exchanged $1\leftrightarrow2$ in the second term of \eqref{eqn:7:1}.
As $\charBo$ and $\charBbaro$ are functions of $(z_2\mydots z_N)$ but not of $z_1$, $\norm{\nabla_1\qc_1\charBo\psi}=\norm{\charBo\nabla_1\qc_1\psi}\leq\norm{\nabla_1\qc_1\psi}\ls\mathfrak{e}(t)$ and analogously $\norm{\qc_1\charBo\psi}\ls\mathfrak{e}(t)\varepsilon$ by Lemma~\ref{lem:a_priori:4}, hence Lemma~\ref{lem:NLS:6} implies $\eqref{eqn:7:1}\ls \mathfrak{e}^2(t)\left(\tfrac{\varepsilon^2}{\mu^\bo}\right)^\frac12$.
By Lemma~\ref{lem:NLS:2}, $U_\bo^{(12)}=\teot\Delta_1\heot$. Integrating by parts in $z_1$ yields
\begin{eqnarray*}
\eqref{eqn:7:2}&\leq&
N\left|\llr{\charBbaro\nabla_1\pc_1\qp_1\psi,\qc_2\teot(\nabla_1\heot)p_1p_2\charBo\psi}\right|\\
&&+N\left|\llr{\charBbaro\psi,\qc_2\qp_1\pc_1(\nabla_1\teot)\cdot(\nabla_1\heot)p_1p_2\charBo\psi}\right|\\
&&+N\left|\llr{\charBbaro\psi,\qc_2\qp_1\pc_1\teot(\nabla_1\heot)p_2\cdot\nabla_1p_1\charBo\psi}\right|\\
&\ls&N\onorm{(\nabla_1\heot)p_1}\left(\norm{\charBbaro\psi}\left(\norm{\nabla\te}_{L^\infty(\R^3)}+\onorm{\nabla_1p_1}\right)
+\norm{\charBbaro\nabla_1\pc_1\qp_1\psi}\right)\\
&\ls&\mathfrak{e}^2(t)\left(\tfrac{\varepsilon^2}{\mu^\bo}\right)^\frac12,
\end{eqnarray*}
where we have used Lemmas~\ref{lem:cutoffs:3},~\ref{lem:a_priori:4},~\ref{lem:NLS:3} and~\ref{lem:NLS:4} and the fact that 
\begin{eqnarray*}
\norm{\charBbaro\nabla_1\pc_1\qp_1\psi}^2&=&
\norm{\charBbaro\pc_1\partial_{x_1}\qp_1\psi}^2+\norm{\qp_1\nabla_{y_1}\pc_1\charBbaro\psi}^2\\
&\leq&\norm{\partial_{x_1}\qp_1\psi}^2+\onorm{\nabla_{y_1}\pc_1}^2\norm{\charBbaro\psi}^2
\ls\mathfrak{e}^2(t).
\end{eqnarray*}
Finally, choosing $\beta_1=\bo$ such that $\pc_1\pc_2U_\bo^{(12)}\pc_1\pc_2=\Theta_{\bo}^{(12)}(\tfrac{\d^2}{\d x_1^2}\hbot)\pc_1\pc_2$ by Lemma~\ref{lem:NLS:4}, we find with the abbreviations $Q_0:=p_1p_2$ and $Q_2:=\qp_1\qp_2\pc_1\pc_2$
\begin{eqnarray*}
\eqref{eqn:7:3}
&\leq&N\left|\llr{\charBo\partial_{x_1}Q_2\psi,\Theta_{\bo}^{(12)}(\tfrac{\d}{\d x_1}\hbot)Q_0\charBo\psi}\right|\\
&&+N\left|\llr{\charBo\psi,Q_2\Theta_{\bo}^{(12)}(\tfrac{\d}{\d x_1}\hbot)\partial_{x_1}Q_0\charBo\psi}\right|\\
&&+N\left|\llr{\charBbaro\psi,Q_2(\tfrac{\d}{\d x_1}\Theta_{\bo}^{(12)})(\tfrac{\d}{\d x_1}\hbot)Q_0\charBo\psi}\right|\\
&&+N\left|\llr{\psi,Q_2(\tfrac{\d}{\d x_1}\Theta_{\bo}^{(12)})(\tfrac{\d}{\d x_1}\hbot)Q_0\charBbaro\psi}\right|\\
&&+N\left|\llr{\psi,Q_2(\tfrac{\d}{\d x_1}\Theta_{\bo}^{(12)})(\tfrac{\d}{\d x_1}\hbot)Q_0\psi}\right|\\
&\overset{\text{\ref{lem:NLS:7}}}{\leq}&N\onorm{(\tfrac{\d}{\d x_1}\hbot)\pp_1}\left(\norm{\partial_{x_1}\qp_1\psi}+\onorm{\partial_{x_1}\pp_1}+\norm{\charBbaro\psi}\norm{\tfrac{\d}{\d x}\Theta_{\bo}}_{L^\infty(\R)}
\right)\\
&&+\mathfrak{e}^2(t)\llr{\psi,\hat{n}\psi}\\
&\ls&\mathfrak{e}^2(t)\left(N^{-\frac{\bo}{2}}+\varepsilon^\bo\left(\tfrac{\varepsilon^2}{\mu^\bo}\right)^\frac12+\llr{\psi,\hat{n}\psi}\right).
\end{eqnarray*}
Thus, $|\eqref{eqn:E_kin:GP:7}|\ls \mathfrak{e}^2(t)\big(N^{-\frac{\bo}{2}}+\left(\tfrac{\varepsilon^2}{\mu^\bo}\right)^\frac12+\llr{\psi,\hat{n}\psi}\big)$. The estimates for \eqref{eqn:E_kin:GP:1} to \eqref{eqn:E_kin:GP:7} imply
$$
\left|E^\psi(t)-\mathcal{E}^\Phi(t)\right|\gs\norm{\charAo\partial_{x_1}q_1\psi}^2-\norm{\Phi}^2_{H^2(\R)}
\left(\llr{\psi,\hat{n}\psi}+(N\varepsilon^\delta)^{1-\bo}+N^{-1+\bo}\right)
$$
because $\mu^{\bo}\varepsilon^{-1}<N^{-\bo}, \varepsilon\mu^{-\frac{\bo}{2}}<(N\varepsilon^\delta)^{\frac{\bo}{2}} $, $\frac{\bo}{2}>1-\bo$ and $\mu^\eta<N^{-1+\bo}$ for sufficiently large $\eta<1-\bo$. 
As $\norm{\charAo\partial_{x_1}\qp_1\psi}\leq \norm{\charAo\partial_{x_1}q_1\psi}+\onorm{\partial_{x_1}\pp_1}\norm{\qc_1\psi}\ls \norm{\charAo\partial_{x_1}q_1\psi}+\mathfrak{e}^2(t)\varepsilon$ by Lemma~\ref{lem:a_priori:4}, this proves the claim with Lemma~\ref{lem:Phi:1}.
\qed

\subsection{Proof of Proposition~\ref{prop:dt_alpha:GP}}\label{subsec:prop:dt_alpha:GP}
Also in this proof, we will abbreviate $\psi^{N,\varepsilon}\equiv\psi$ and $\Phi(t)\equiv\Phi$.
We need to estimate 
\begin{equation}\label{eqn:dt_alpha:GP:1}
\tfrac{\d}{\d t}\alpha_\xi(t)=\tfrac{\d}{\d t}\alpha_\xi^<(t)-N(N-1)\Re\left(\tfrac{\d}{\d t}\llr{\psi,\gbot\hat{r}\psi}\right).
\end{equation}
Proposition~3.4 in \cite{NLS} provides a bound for $|\tfrac{\d}{\d t}\alpha_\xi^<(t)|$  for almost every $t\in\R$. This bound implies
$$\left|\tfrac{\d}{\d t} \alpha_\xi(t)\right|\leq |\gamma_a^<(t)|+\left|\gamma_b^<(t)-N(N-1)\Re\left(\tfrac{\d}{\d t}\llr{\psi,\gbot\hat{r}\psi}\right)\right|$$
for almost every $t$, where we have added the superscript $^<$ to the notation to avoid confusion. The two first terms are given by
\begin{align}\begin{split}
\gamma_a^<(t)\; :=\;\;&\Big|\llr{\psi,\dot{\Vp}(t,\z_1)\psi}-\lr{\Phi,\dot{\Vp}(t,(x,0))\Phi}_{L^2(\R)}\Big|\\
&-2N\Im\llr{\psi,q_1\hat{m}^a_{-1}\big(\Vp(t,\z_1)-\Vp(t,(x_1,0))\big)p_1\psi},\label{eqn:gamma_a^<}
\end{split}\\
\gamma_b^<(t)\;\;:=\;\;&-N(N-1)\Im\llr{\psi,Z^{(12)}\hat{m}\psi}=-N(N-1)\Im\llr{\psi,Z^{(12)}\hat{r}\psi}.\vphantom{\bigg(}\label{eqn:gamma_b^<}
\end{align}
The last equality in \eqref{eqn:gamma_b^<} follows by Lemma~\ref{lem:commutators:5} as
\begin{equation}\label{eqn:dt_alpha:GP:2}
\left[Z^{(12)},\hat{m}\right]=\left[Z^{(12)},p_1p_2(\hat{m}-\hat{m}_2)+(p_1q_2+q_1p_2)(\hat{m}-\hat{m}_1)\right]
=\left[Z^{(12)},\hat{r}\right]
\end{equation}
since $p_1p_2P_{N-1}=p_1p_2P_N=(p_1q_2+q_1p_2)P_N=0$.
For the second term in~\eqref{eqn:dt_alpha:GP:1}, we compute with the aid of Lemma~\ref{lem:derivative_m:3}
\begin{eqnarray}
-N(N-1)\Re\left(\tfrac{\d}{\d t}\llr{\psi,\gbot\hat{r}\psi}\right)
&=&N(N-1)\Im\llr{\psi,\gbot\Big[H(t)-\sum\limits_{j=1}^N h_j(t),\hat{r}\Big]\psi}\label{eqn:dt_alpha:GP:5}\\
&&+N(N-1)\Im\llr{\psi,\left[H(t),\gbot\right]\hat{r}\psi}.\label{eqn:dt_alpha:GP:6}
\end{eqnarray}
We expand the pair interaction in~\eqref{eqn:dt_alpha:GP:5} as
$$\sum_{i<j}w_\mu^{(ij)}=w_\mu^{(12)}+\sum\limits_{j=3}^N\left(w_\mu^{(1j)}+w_\mu^{(2j)}\right)+\sum\limits_{3\leq i<j\leq N}w_\mu^{(ij)}$$
and use 
$$w_\mu^{(12)}-b(|\Phi(x_1)|^2+|\Phi(x_2)|^2)=Z^{(12)}-\tfrac{N-2}{N-1}b(|\Phi(x_1)|^2+|\Phi(x_2)|^2),$$
hence by Lemma~\ref{lem:derivative_m:2} and the symmetry of $\psi$,
\begin{eqnarray}
\eqref{eqn:dt_alpha:GP:5}
&=&N^2(N-1)\Im\llr{\psi,\gbot\left[\Vp(t,\z_1)-\Vp(t,(x_1,0)),\hat{r}\right]\psi}\label{eqn:dt_alpha:GP:7}\\
&&+N(N-1)\Im\llr{\psi,\gbot\left[Z^{(12)},\hat{r}\right]\psi}\label{eqn:dt_alpha:GP:8}\\
&&-2N(N-2)\Im\llr{\psi,\gbot\left[b|\Phi(x_1)|^2,\hat{r}\right]\psi}\label{eqn:dt_alpha:GP:9}\\
&&+2N(N-1)(N-2)\Im\llr{\psi,\gbot\left[w_\mu^{(13)},\hat{r}\right]\psi}\label{eqn:dt_alpha:GP:10}\\
&&+\tfrac12 N(N-1)(N-2)(N-3))\Im\llr{\psi,\gbot\left[w_\mu^{(34)},\hat{r}\right]\psi}\label{eqn:dt_alpha:GP:11}\\
&&-N(N-1)(N-2)\Im\llr{\psi,\gbot\left[b|\Phi(x_3)|^2,\hat{r}\right]\psi}.\label{eqn:dt_alpha:GP:12}
\end{eqnarray}
For~\eqref{eqn:dt_alpha:GP:6}, note that
\begin{eqnarray*}
\left[H(t),\gbot\right]\hat{r}\psi
&=&-\left[H(t),\fbot\right]\hat{r}\psi\\
&=&(\Delta_1\fbot+\Delta_2\fbot)\hat{r}\psi+2(\na_1\fbot)\cdot\na_1\hat{r}\psi+2(\na_2\fbot)\cdot\na_2\hat{r}\psi\\
&=&\left(w_\mu^{(12)}- U_\bo^{(12)}\right)\fbot\hat{r}\psi-2(\na_1\gbot)\cdot\na_1\hat{r}\psi-2(\na_2\gbot)\cdot\na_2\hat{r}\psi,
\end{eqnarray*}
hence
\begin{eqnarray}
\eqref{eqn:dt_alpha:GP:6}&=&-4N(N-1)\Im\llr{\psi,(\na_1\gbot)\cdot\na_1\hat{r}\psi}\label{eqn:dt_alpha:GP:13}\\
&&+N(N-1)\Im\llr{\psi,\left(w_\mu^{(12)}- U_\bo^{(12)}\right)\fbot\hat{r}\psi}\label{eqn:dt_alpha:GP:14}.
\end{eqnarray}
We now identify some of the terms in $|\tfrac{\d}{\d t}\alpha_\xi(t)|$ with the expressions in Proposition~\ref{prop:dt_alpha:GP}: $\eqref{eqn:dt_alpha:GP:7}=\gamma_a(t)$, $\eqref{eqn:dt_alpha:GP:13}=\gamma_c(t)$, $\eqref{eqn:dt_alpha:GP:10}+\eqref{eqn:dt_alpha:GP:12}=\gamma_d(t)$, $\eqref{eqn:dt_alpha:GP:11}=\gamma_e(t)$ and $\eqref{eqn:dt_alpha:GP:9}=\gamma_f(t)$.
The remaining terms are $\gamma_a^<(t)$, $\gamma_b^<(t)$, \eqref{eqn:dt_alpha:GP:8} and \eqref{eqn:dt_alpha:GP:14}. The latter yield
\begin{eqnarray*}
\gamma_b^<(t)&&\hspace{-0.9cm}+\,\eqref{eqn:dt_alpha:GP:8}+\eqref{eqn:dt_alpha:GP:14}\\
&=&N(N-1)\Im\bigg(-\llr{\psi,Z^{(12)}\hat{r}\psi}+\llr{\psi,(1-\fbot)\left[Z^{(12)},\hat{r}\right]\psi}\\
&&\hspace{5.05cm}+\llr{\psi,(w_\mu^{(12)}- U_\bo^{(12)})\fbot\hat{r}\psi}\bigg)\\
&=&N(N-1)\Im\bigg(-\llr{\psi,\gbot\hat{r}Z^{(12)}\psi}-\llr{Z^{(12)}\fbot\psi,\hat{r}\psi}\\
&&\hspace{5cm}+\llr{(w_\mu^{(12)}- U_\bo^{(12)})\fbot\psi,\hat{r}\psi}\bigg).
\end{eqnarray*}
Observing that
$$Z^{(12)}\fbot=
\left(w_\mu^{(12)}- U_\bo^{(12)}\right)\fbot+ U_\bo^{(12)}\fbot-\tfrac{b}{N-1}\left(|\Phi(x_1)|^2+|\Phi(x_2)|^2\right)\fbot,$$
we conclude
\begin{eqnarray}
\gamma_b^<(t)&&\hspace{-0.9cm}+\,\eqref{eqn:dt_alpha:GP:8}+\eqref{eqn:dt_alpha:GP:14}\\
&=&-N(N-1)\Im\llr{\psi,\gbot\hat{r}Z^{(12)}\psi}\nonumber\\
&&-N(N-1)\Im\llr{\psi,\left( U_\bo^{(12)}-\tfrac{b}{N-1}\left(|\Phi(x_1)|^2+|\Phi(x_2)|^2\right)\right)(1-\gbot)\hat{r}\psi}\nonumber\\
&=&-N(N-1)\Im\llr{\psi,\gbot\hat{r}Z^{(12)}\psi}\label{eqn:dt_alpha:GP:15}\\
&&-N\Im\llr{\psi,b(|\Phi(x_1)|^2+|\Phi(x_2)|^2)\gbot\hat{r}\psi}\label{eqn:dt_alpha:GP:16}\\
&&-N\Im\Big\llangle\psi,(b_\bo-b)
(|\Phi(x_1)|^2+|\Phi(x_2)|^2)\hat{r}\psi\Big\rrangle\label{eqn:dt_alpha:GP:17}\\
&&-N(N-1)\Im\llr{\psi,\tilde{Z}^{(12)}\hat{m}\psi},\label{eqn:dt_alpha:GP:18}
\end{eqnarray}
where we have used the fact that $\Im\llr{\psi,\tilde{Z}^{(12)}\hat{r}\psi}=\Im\llr{\psi,\tilde{Z}^{(12)}\hat{m}\psi}$ as in \eqref{eqn:dt_alpha:GP:2}. Hence $\eqref{eqn:dt_alpha:GP:15}+\eqref{eqn:dt_alpha:GP:16}+\eqref{eqn:dt_alpha:GP:17}=\gamma_b(t)$ and $\gamma_a^<(t)+\eqref{eqn:dt_alpha:GP:18}=\gamma^<(t)$.
\qed

\subsection{Proof of Proposition~\ref{prop:gamma:GP}}
\subsubsection{Proof of the bound for $\gamma^<(t)$}\label{subsec:gamma_<}
The main tool for the estimate of $\gamma^<(t)$ is Proposition~3.5 from \cite{NLS}, which we apply to the interaction potential $U_\bo\fb$ (which, given $w$, is completely determined by a choice for $\mu$ and $\bo$, cf.\ Definitions~\ref{def:U} and \ref{def:scat}). Let us therefore first verify that the assumptions of this proposition are fulfilled, i.e.~that 
\begin{enumerate}[label={(\alph*)}]
\item ${\mu^\bo}/{\varepsilon}\rightarrow 0$, ${\varepsilon^2}/{\mu^\bo}\rightarrow0$ and $\xi\leq\tfrac{\bo}{4}$ (for $\xi$ from Definition~\ref{def:alpha}),
\item the family $U_\bo\fb$ is contained in $\mathcal{W}_{\bo,\eta}$ for some $\eta>0$.
\end{enumerate} 
We will in the sequel drop the $\mu$-dependence of the family members and simply write $U_\bo\fb$ instead of $(U_\bo\fb)(\mu)$. 
Part (a) is satisfied since $\mu^\bo/\varepsilon\to 0$ because $\bo>\tfrac56>\frac{1}{2}$. Further,
${\varepsilon^2}/{\mu^\bo}=\left(N\varepsilon^\delta\right)^\bo \varepsilon^{2-\bo(2+\delta)}<\left(N\varepsilon^\delta\right)^\bo\rightarrow0$
because $\bo\leq \tfrac{2}{2+\delta}$, and finally $\xi<\tfrac{\bo}{6}$ by assumption. Part (b) is proven in Lemma~\ref{lem:Uf:in:W}.

Proposition~3.5 in \cite{NLS} implies that for any $\beta_1\in(0,\bo]$, $\gamma^<_a(t)$ and $\gamma^<_b(t)$ are bounded by
\begin{equation}\begin{split}\label{eqn:gamma<}
|\gamma_a^<(t)|+|\gamma_b^<(t)|
\ls \mathfrak{e}(t)\exp\bigg\{\mathfrak{e}^2(t)+\int_0^t\mathfrak{e}^2&(s)\d s\bigg\}
\Big(\big|E^\psi_{U_\bo\fb}(t)-\mathcal{E}_{U_\bo\fb}^\Phi(t)\big|+\llr{\psi,\hat{n}\psi}\\
&+\tfrac{\mu^\bo}{\varepsilon}+\left(\tfrac{\varepsilon^2}{\mu^\bo}\right)^\frac12+N^{-\frac{\beta_1}{2}}+N^{-1+\beta_1+\xi}+\mu^{\eta}\Big),
\end{split}\end{equation}
where $E^\psi_{U_\bo\fb}(t)$ and $\mathcal{E}_{U_\bo\fb}^\Phi(t)$ denote the respective quantities corresponding to \eqref{E^psi} and \eqref{E^Phi} but with $w_\mu$ replaced by $U_\bo\fb$ and $b$ by $b(U_\bo\fb)$. 
Note that the energy difference $\big|E^\psi_{U_\bo\fb}(t)-\mathcal{E}_{U_\bo\fb}^\Phi(t)\big|$ enters only in the estimate of $\gamma_b^<(t)$, exclusively in the term (24) in \cite[Proposition~3.4]{NLS}, which is given by
\begin{equation}\label{eqn:24}
-2N(N-1)\Im\llr{\psi^{N,\varepsilon}(t),\qp_1\qp_2\hat{m}^a_{-1}\pc_1\pc_2(U_\bo\fb)^{(12)} p_2\pc_1\qp_1\psi^{N,\varepsilon}(t)}.
\end{equation}
To obtain a bound in terms of $|E^\psi(t)-\mathcal{E}^\Phi(t)|$ instead of $|E^\psi_{U_\bo\fb}(t)-\mathcal{E}_{U_\bo\fb}^\Phi(t)|$, we need a new estimate of \eqref{eqn:24} by means of Lemma~\ref{lem:E_kin:GP}. 

Define $\hat{l}:=N\hat{m}^a_{-1}$. 
We apply Lemma~\ref{lem:NLS:4} and~\ref{lem:NLS:5} with the choice $\beta_1=0$, i.e.
$\tz\tfrac{\d^2}{\d x^2}\hz=\overline{U_\bo\fb}$, where  
$\pc_1\pc_2(U_\bo\fb)^{(12)}\pc_1\pc_2=\pc_1\pc_2\overline{U_\bo\fb}(x_1-x_2).$
Integrating by parts and subsequently inserting the identity $\charAo+\charAbaro$ before $\partial_{x_1}\qp_1\psi$ yields
\begin{eqnarray}
\eqref{eqn:24}&\ls&N\left|\llr{\hat{l}\qp_1\qp_2\psi,\tzot(\tfrac{\d^2}{\d x_1^2}\hzot)p_2\pc_1\qp_1\psi}\right|\nonumber\\
&\leq& N\left|\llr{\charAo\partial_{x_1}\qp_1\psi,\qp_2\tzot(\tfrac{\d}{\d x_1}\hzot)p_2\pc_1 \hat{l}_1\qp_1\psi}\right|\label{eqn:24:1}\\
&&+N\left|\llr{\hat{l}\qp_1\qp_2\psi,\tzot(\tfrac{\d}{\d x_1}\hzot)p_2\pc_1 \charAo\partial_{x_1}\qp_1\psi}\right|\label{eqn:24:2}\\
&&+N\left|\llr{\partial_{x_1}\qp_1\psi,\charAbaro\qp_2(\tfrac{\d}{\d x_1}\hzot)\tzot\pc_1\pc_2 \pp_2\hat{l}_1\qp_1\psi}\right|\label{eqn:24:3}\\
&&+N\left|\llr{\charAbaro\pp_2(\tfrac{\d}{\d x_1}\hzot)\tzot\pc_1\pc_2\qp_2\hat{l}\qp_1\psi, \partial_{x_1}\qp_1\psi}\right|\label{eqn:24:4}\\
&&+N\left|\llr{\hat{l}\qp_1\qp_2\psi,(\tfrac{\d}{\d x_1}\tzot)(\tfrac{\d}{\d x_1}\hzot)p_2\pc_1\qp_1\psi}\right|.\label{eqn:24:5}
\end{eqnarray}
To estimate \eqref{eqn:24:1}, note that $\charAo\partial_{x_1}\qp_1\psi$ and $\hat{l}_1\pc_1\qp_1\psi$ are symmetric in $\{z_2\mydots z_N\}$, hence Lemma~\ref{lem:Gamma:Lambda} implies
\begin{eqnarray*}
\eqref{eqn:24:1}
&\ls& N\norm{\charAo\partial_{x_1}\qp_1\psi}\onorm{\pp_2(\tfrac{\d}{\d x_1}\hzot)}\left(\norm{\hat{l}_1\qp_1\qp_2\psi}+N^{-\frac12}\norm{\hat{l}_1\qp_1\psi}\right)\\
&\overset{\text{\ref{lem:NLS:4}}}{\ls}&\mathfrak{e}(t)\left(\norm{\charAo\partial_{x_1}\qp_1\psi}^2+\llr{\psi,\hat{n}\psi}+N^{-\frac12}\norm{\charAo\partial_{x_1}\qp_1\psi}\right)
\end{eqnarray*}
by Lemma~\ref{lem:NLS:4} because $\norm{\hat{l}_1\qp_1\psi}\ls1$ by Lemma~\ref{lem:l:3} and
$\norm{\hat{l}\qp_1\qp_2\psi}\ls \norm{\hat{n}\psi}$ by Lemma~\ref{lem:fqq:2}. \eqref{eqn:24:2} is immediately controlled by 
$$
\eqref{eqn:24:2}\ls\mathfrak{e}(t)\norm{\charAo\partial_{x_1}\qp_1\psi}\llr{\psi,\hat{n}\psi}^\frac12
\ls \mathfrak{e}(t)\left(\norm{\charAo\partial_{x_1}\qp_1\psi}^2+\llr{\psi,\hat{n}\psi}\right). 
$$
Similarly, $\eqref{eqn:24:5}\ls\mathfrak{e}(t)\llr{\psi,\hat{n}\psi}$. To estimate the two remaining terms, let $$(s^\Phi_2,t^\Phi_2)\in\{(\pp_2,\qp_2),(\qp_2,\pp_2)\}$$ and $\hat{l}_j\in\{\hat{l},\hat{l}_1\}$. By Lemma~\ref{lem:cutoffs:2} and Lemma~\ref{lem:a_priori:4},
\begin{eqnarray*}
\eqref{eqn:24:3}+\eqref{eqn:24:4}
&\ls& N\norm{\partial_{x_1}\qp_1\psi}
	\norm{\charAbaro s^\Phi_2(\tfrac{\d}{\d x_1}\hzot)\tzot t^\Phi_2\pc_2\pc_1\hat{l}_j\qp_1\psi}\\
&\ls& N\mathfrak{e}(t)\mu^{d-\frac13}\Big(
	\norm{s^\Phi_2(\tfrac{\d^2}{\d x_1^2}\hzot)\tzot t^\Phi_2\pc_2\pc_1\hat{l}_j\qp_1\psi}\\
&&\hphantom{ N\mathfrak{e}(t)\mu^{d-\frac13}\Big(}\;
	+\norm{s^\Phi_2(\tfrac{\d}{\d x_1}\hzot)(\tfrac{\d}{\d x_1}\tzot) t^\Phi_2\pc_2\pc_1\hat{l}_j\qp_1\psi}\\
&&\hphantom{N\mathfrak{e}(t)\mu^{d-\frac13}\Big(}\;
	+\norm{s^\Phi_2(\tfrac{\d}{\d x_1}\hzot)\tzot t^\Phi_2\pc_2\pc_1\partial_{x_1}\hat{l}_j\qp_1\psi}\\
&&\hphantom{ N\mathfrak{e}(t)\mu^{d-\frac13}\Big(}\;
	+\varepsilon\norm{s^\Phi_2(\tfrac{\d}{\d x_1}\hzot)\tzot t^\Phi_2\pc_2\nabla_{y_1}\pc_1\hat{l}_j\qp_1\psi}
	\Big)\\
&\ls&\mathfrak{e}(t) N\mu^{d-\frac13}\onorm{\pp_2\overline{U_\bo\fb}(x_1-x_2)}
\\
&&+\mathfrak{e}(t) N\mu^{d-\frac13}
\onorm{(\tfrac{\d}{\d x_1}\hzot)\pp_2}\left(\norm{\tfrac{\d}{\d x}\tz}_{L^\infty(\R)}
+N^\xi\mathfrak{e}^2(t)  +\varepsilon\onorm{\nabla_{y_1}\pc_1}\right)
\end{eqnarray*}
as $\norm{\partial_{x_1}\hat{l}_j\qp_1\psi}\ls\onorm{\hat{l}_j}\norm{\partial_{x_1}\qp_1\psi}
\ls N^\xi\mathfrak{e}(t)$ by Lemma~\ref{lem:commutators:2} and Lemma~\ref{lem:l}.
The last line is bounded by $\mathfrak{e}^3(t)\mu^{d-\frac13}N^\xi$ by Lemma~\ref{lem:NLS:4} and~\ref{lem:NLS:5} and Lemma~\ref{lem:a_priori:4}. 
Finally, note that $|x_1-x_2|<R_\bo\ls\mu^\bo$ for $(x_1-x_2)\in \supp \overline{U_\bo\fb}$, hence
\begin{eqnarray*}
\onorm{\pp_2\overline{U_\bo\fb}(x_1-x_2)}&=&\onorm{\pp_2\mathbbm{1}_{|\cdot|<R_\bo}(x_1-x_2)}\norm{\overline{U_\bo\fb}}_{L^\infty(\R)}\\
&\overset{\text{\ref{lem:pfp:4}}}{\ls}&\mathfrak{e}(t)\norm{\overline{U_\bo\fb}}_{L^\infty(\R)}\norm{\mathbbm{1}_{|\cdot|<R_\bo}}_{L^2(\R)}
\ls\mathfrak{e}(t)N^{-1}\mu^{-\frac{\bo}{2}}.
\end{eqnarray*}
The last bound follows since $\norm{\mathbbm{1}_{|x_1-x_2|<R_\bo}}_{L^2(\R)}\ls\mu^\frac{\bo}{2}$ and as
\begin{eqnarray*}
\left|\overline{U_\bo\fb}(x)\right|&=&\int\limits_{\R^2}\d y_1|\chie(y_1)|^2\int\limits_{\R^2}\d y_2|\chie(y_2)|^2 (U_\bo\fb)(x,y_1-y_2)\\
&\leq&\varepsilon^{-2}\int\limits_{\R^2}\d y_1|\chie(y)|^2\int\limits_{|y|<R_\bo}\d y\norm{U_\bo\fb}_{L^\infty(\R^3)}
\; \ls\; \varepsilon^{-2}\mu^{1-\bo},
\end{eqnarray*} 
where we have used that $|y|<R_\bo$ for $(x,y)\in\supp U_\bo\fb$ as above and that  $\chie$ is normalised and $\norm{U_\bo\fb}_{L^\infty(\R^3)}\ls\mu^{1-3\bo}$.
Hence,
\begin{equation}
\eqref{eqn:24}\ls\mathfrak{e}(t)\exp\left\{\mathfrak{e}^2(t)+\int_0^t\mathfrak{e}^2(s)\d s\right\}\left(\alpha_\xi^<(t)+(N\varepsilon^\delta)^{1-\bo}+N^{-1+\bo}+\mu^{d-\frac13-\frac{\bo}{2}}\right),
\end{equation}
where we have used Lemma~\ref{lem:E_kin:GP} and the fact that $\mu^{d-\frac13}N^\xi<\mu^{d-\frac13-\frac{\bo}{2}}$ and $N^{-\frac12}<\mu^{d-\frac13-\frac{\bo}{2}}$.\\

Combining this new bound for \eqref{eqn:24} with the remaining estimates of \cite[Proposition~3.5]{NLS}, we find
$$
|\gamma^<(t)|\ls \mathfrak{e}(t)\exp\left\{\mathfrak{e}^2(t)+\int_0^t\mathfrak{e}^2(s)\d s\right\}\left(\alpha_\xi^<(t)+\left(N\varepsilon^\delta\right)^{1-\bo}+N^{-1+\bo+\xi}+\mu^{d-\frac13-\frac{\bo}{2}}\right),$$
where we have chosen $\beta_1=\bo$ and used that $-1+\frac{3\bo}{2}+\xi>0$, $\mu^{1-\bo}<N^{-1+\bo+\xi}$, $\tfrac{\mu^\bo}{\varepsilon}<N^{-\frac{\bo}{2}}<N^{-1+\bo+\xi}$ and $\varepsilon\mu^{-\frac{\bo}{2}}<(N\varepsilon^\delta)^{\frac{\bo}{2}}<(N\varepsilon^\delta)^{1-\bo}$.
\qed

\subsubsection{Proof of the bound for $\gamma_a(t)$}\label{subsec:gamma_a}
By definition of $\hat{r}$ and with Lemma~\ref{lem:taylor}, Lemma~\ref{lem:g:2} and Lemma~\ref{lem:l:2},  we compute
\begin{eqnarray*}
|\eqref{gamma:GP:a}|
&\ls& N^3\left|\llr{\left(\Vp(t,z_1)-\Vp(t,(x_1,0))\right)\psi,\gbot\left(p_1p_2\hat{m}^b+(p_1q_2+q_1p_2)\hat{m}^a\right)\psi}\right|\\
&&+N^3\left|\llr{\psi,\gbot\left(p_1p_2\hat{m}^b+(p_1q_2+q_1p_2)\hat{m}^a\right)\left(\Vp(t,z_1)-\Vp(t,(x_1,0))\right)\psi}\right|\\
&\leq& 2N^3\norm{(\Vp(t,z_1)-\Vp(t,(x_1,0)))\psi}\onorm{\gbot p_1}\left(\onorm{\hat{m}^a}+\onorm{\hat{m}^b}\right)\\
&\ls&\mathfrak{e}^3(t)N^{1+\xi-\frac{\bo}{2}}\varepsilon^{2+\bo}
=\mathfrak{e}^3(t)\left(N\varepsilon^\delta\right)^{1+\xi-\frac{\bo}{2}}\varepsilon^{2+\bo-\delta(1+\xi-\frac{\bo}{2})}
<\mathfrak{e}^3(t)\varepsilon^2
\end{eqnarray*}
as $\bo-\delta(1+\xi-\frac{\bo}{2})>0$ and since $1+\xi-\frac{\bo}{2}>0$.
\qed

\subsubsection{Proof of the bound for $\gamma_b(t)$}
\emph{Estimate of \eqref{gamma:GP:b:1:1}.}
By Lemma~\ref{lem:g:2}, Lemma~\ref{lem:l:2} and Lemma~\ref{lem:Phi:1} and as $-1-\frac{\bo}{2}+\xi<0$,
$$
|\eqref{gamma:GP:b:1:1}|\ls N\norm{\Phi}^2_{L^\infty(\R)}\onorm{\gbot p_1}\left(\onorm{\hat{m}^a}+\onorm{\hat{m}^b}\right)
\ls\mathfrak{e}^3(t)N^{-1-\frac{\bo}{2}+\xi}\varepsilon^{1+\bo}
<\mathfrak{e}^3(t)\varepsilon^{1+\bo}.
$$

\noindent\emph{Estimate of \eqref{gamma:GP:b:1:2}.}
Note that $b_\bo=b(U_\bo\fb)=b$ by \eqref{b=b}, hence $\eqref{gamma:GP:b:1:2}=0$.\\

\noindent\emph{Estimate of \eqref{gamma:GP:b:2}.} By definition of $\hat{r}$ and due to the symmetry of $\psi$, 
\begin{eqnarray*}
|\eqref{gamma:GP:b:2}|
&\leq& N^2\left|\llr{\psi,\gbot p_1\hat{m}^bp_2Z^{(12)}\psi}+2\llr{\psi,\gbot p_1q_2\hat{m}^ap_1Z^{(12)}\psi}\right|\\
&\ls &N^2\onorm{p_1\gbot}\left(\onorm{\hat{m}^a}+\onorm{\hat{m}^b}\right)\norm{p_1\left(w_\mu^{(12)}-\tfrac{b}{N-1}(|\Phi(x_1)|^2+|\Phi(x_2)|^2)\right)\psi}\\
&\ls&\mathfrak{e}(t)N^{-\frac{\bo}{2}+\xi}\varepsilon^{1+\bo}\left(\norm{p_1w_\mu^{(12)}\psi}+N^{-1}\norm{\Phi}_{L^\infty(\R)}^2\right)\\
&\ls&\mathfrak{e}^3(t)N^{-1-\frac{\bo}{2}+\xi}\varepsilon^{1+\bo} 
\; <\; \mathfrak{e}^3(t)\varepsilon^{1+\bo}
\end{eqnarray*}
as a consequence of Lemma~\ref{lem:g:2}, Lemma~\ref{lem:l:2}, Lemma~\ref{lem:w12:4} and Lemma~\ref{lem:Phi:1}.

\subsubsection{Proof of the bound for $\gamma_c(t)$}

\begin{eqnarray*}
|\eqref{gamma:GP:c}|&\ls& N^2\left|\llr{\mathbbm{1}_{\supp{\gb}}(\z_1-\z_2)\psi,(\na_1\gbot)\cdot
\left(p_2\na_1(p_1\hat{m}^b+q_1\hat{m}^a)\psi+\na_1p_1q_2\hat{m}^a\psi\right)}\right|\\
&\leq &N^2\norm{\mathbbm{1}_{\supp{\gb}}(\z_1-\z_2)\psi}\Big(\onorm{(\na_1\gbot)p_2}\onorm{\na_1p_1}\onorm{\hat{m}^b}\\
&&\hspace{2.9cm}+\onorm{(\na_1\gbot)\na_1p_1}\onorm{\hat{m}^a}+\onorm{(\na_1\gbot)p_2}\norm{\na_1q_1\hat{m}^a\psi}\Big)\\
&&\ls\mathfrak{e}^2(t)\varepsilon^{2\bo-\frac{5}{3}}N^{\frac12+\xi-\bo}
\; <\; \mathfrak{e}^2(t)N^{\frac12+\xi-\bo}
\; <\; \mathfrak{e}^2(t)N^{-1+\xi+\bo}
\end{eqnarray*}
because $2\bo-\frac53>0$ and $\frac12-\bo<-1+\bo$ as $\bo>\frac56$.
In the third step, we have used Lemma~\ref{lem:g:5}, Lemma~\ref{lem:l:2}, Lemma~\ref{lem:a_priori:4}, Lemma~\ref{lem:nabla:g} and the fact that
\begin{eqnarray*}
\norm{\na_1q_1\hat{m}^a\psi}
&\overset{\text{\ref{lem:commutators:2}}}{\leq}& \norm{p_1\hat{m}^a_1\na_1(1-p_1)\psi}+\norm{q_1\hat{m}^a\na_1(1-p_1)\psi}\\
&\overset{\text{\ref{lem:l:1}}}{\ls}&\onorm{\hat{m}^a}\left(\norm{\na_1\psi}+\norm{\na_1 p_1\psi}\right)
\; \overset{\text{\ref{lem:a_priori:4}}}{\ls}\; N^{-1+\xi}\varepsilon^{-1}.
\end{eqnarray*}
\qed

\subsubsection{Proof of the bound for $\gamma_d(t)$}
\emph{Estimate of \eqref{gamma:GP:d:1}.}
With Lemma~\ref{lem:g:2}, Lemma~\ref{lem:l:2} and Lemma~\ref{lem:Phi:1}, 
\begin{eqnarray*}
|\eqref{gamma:GP:d:1}|
&\ls &N^3\left|\llr{\psi,\gbot p_1p_2b\left[|\Phi(x_3)|^2,\hat{m}^b\right]\psi}\right|\\
&&+N^3\left|\llr{\psi,\gbot\left(p_1q_2+q_1p_2\right)b\left[|\Phi(x_3)|^2,\hat{m}^a\right]\psi}\right|\\
&\ls& N^3\onorm{\gbot p_1}\norm{\Phi}^2_{L^\infty(\R)}\left(\onorm{\hat{m}^a}+\onorm{\hat{m}^b}\right)\\
&\ls& \mathfrak{e}^3(t)N^{1+\xi-\frac{\bo}{2}}\varepsilon^{1+\bo}
<\mathfrak{e}^3(t)\left(N\varepsilon^\delta\right)^{1+\xi-\frac{\bo}{2}}
\end{eqnarray*}
analogously to the estimate of $\gamma_a(t)$.\\

\noindent\emph{Estimate of \eqref{gamma:GP:d:2}.}
Observe first that
$$\hat{r}=\hat{m}^bp_1p_2+\hat{m}^a(p_1(1-p_2)+(1-p_1)p_2)=
\hat{m}^a(p_1+p_2)+(\hat{m}^b-2\hat{m}^a)p_1p_2.$$
As a consequence,
\begin{eqnarray}
|\eqref{gamma:GP:d:2}|&\ls& N^3\left|\llr{\psi,\gbot[w_\mu^{(13)},\hat{r}\,]\psi}\right|\nonumber\\
&\leq& N^3\left|\llr{\psi,\gbot p_2[w_\mu^{(13)},\hat{m}^a]\psi}\right|\label{eqn:gamma:GP:d:1}\\
&&+N^3\left|\llr{\psi,\gbot w_\mu^{(13)}p_1\hat{m}^a\psi}\right|\label{eqn:gamma:GP:d:2}\\
&&+N^3\left|\llr{\psi,\gbot p_1(\hat{m}^a+ p_2(\hat{m}^b-2\hat{m}^a))p_1w_\mu^{(13)}\psi}\right| \label{eqn:gamma:GP:d:3}\\
&&+N^3\left|\llr{\psi,\gbot w_\mu^{(13)}p_2p_1(\hat{m}^b-2\hat{m}^a)\psi}\right|.\label{eqn:gamma:GP:d:4}
\end{eqnarray}
We estimate~\eqref{eqn:gamma:GP:d:1} to~\eqref{eqn:gamma:GP:d:4} separately.
$$
\eqref{eqn:gamma:GP:d:1}=N^3\left|\llr{\psi, \gbot p_2\left[w_\mu^{(13)},p_1p_3(\hat{m}^a-\hat{m}^a_2)+(p_1q_3+q_1p_3)(\hat{m}^a-\hat{m}^a_1)\right]\psi}\right|.
$$
By definition of $\hat{m}^c$ and $\hat{m}^d$,
\begin{eqnarray*}
p_1p_3(\hat{m}^a-\hat{m}^a_2)&=&p_1p_3\hat{m}^d+p_1p_3\left(m^a(N+1)P_{N-1}+m(N+2)P_N\right)=p_1p_3\hat{m}^d,\\
(p_1q_3+q_1p_3)(\hat{m}^a-\hat{m}^a_1)&=&(p_1q_3+q_1p_3)\hat{m}^c.
\end{eqnarray*}
This leads to
\begin{eqnarray*}
\eqref{eqn:gamma:GP:d:1}
&\leq& N^3\left|\llr{w_\mu^{(13)}\psi,\gbot p_2\mathbbm{1}_{\supp{w_\mu}}(\z_1-\z_3)\Big(p_1p_3\hat{m}^d+(p_1q_3+q_1p_3)\hat{m}^c\Big)\psi}\right|\\
&&+N^3\left|\llr{\psi,\gbot p_2\Big(p_1p_3\hat{m}^d+(p_1q_3+q_1p_3)\hat{m}^c\Big)w_\mu^{(13)}\psi}\right|\\
&\ls& N^3\onorm{\gbot p_2}\left(\onorm{\hat{m}^d}+\onorm{\hat{m}^c}\right)\left(\norm{w_\mu^{(13)}\psi}\onorm{\mathbbm{1}_{\supp{w_\mu}}(\z_1-\z_3)p_1}+\norm{p_1w_\mu^{(13)}\psi}\right)
\\
&\ls& \mathfrak{e}^3(t) N^{-1+3\xi-\frac{\bo}{2}}\varepsilon^{1+\bo}
\; <\; \mathfrak{e}^3(t)\varepsilon^{1+\bo}
\end{eqnarray*}
by Lemma~\ref{lem:a_priori:w12}, Lemma~\ref{lem:g:2} and Lemma~\ref{lem:l:2}.
In order to estimate~\eqref{eqn:gamma:GP:d:2}, observe first that $\gbot w_\mu^{(13)}\neq0$ implies $|\z_2-\z_3|\ls R_\bo$. This can be seen as follows: $\gbot\neq 0$ implies $|\z_1-\z_2|\leq R_\bo$ and $w_\mu^{(13)}\neq 0$ implies $|\z_1-\z_3|\leq \mu$. Together, this yields
$$|\z_2-\z_3|\leq |\z_1-\z_2|+|\z_1-\z_3|\leq R_\bo+\mu\leq 2R_\bo.$$
Consequently,~\eqref{eqn:gamma:GP:d:2} can be written as
\begin{eqnarray*}
\eqref{eqn:gamma:GP:d:2}&=&N^3\left|\llr{\psi,\gbot w_\mu^{(13)}\mathbbm{1}_{B_{2R_\bo}(0)}(\z_2-\z_3)p_1\hat{m}^a\psi}\right|\\
&=&N^3\left|\llr{p_1\mathbbm{1}_{\supp{w_\mu}}(\z_1-\z_3)w_\mu^{(13)}\gbot\psi,\mathbbm{1}_{B_{2R_\bo}(0)}(\z_2-\z_3)\hat{m}^a\psi}\right|\\
&\leq& N^3\onorm{p_1\mathbbm{1}_{\supp{w_\mu}}(\z_1-\z_3)}\norm{\gb}_{L^\infty(\R^3)}\norm{w_\mu^{(13)}\psi}\norm{\mathbbm{1}_{B_{2R_\bo}(0)}(\z_2-\z_3)\hat{m}^a\psi}\\
&\ls &\mathfrak{e}^3(t)N^{1+\xi-\bo}\varepsilon^{2\bo-\frac23}
\; <\; \mathfrak{e}^3(t)\left(N\varepsilon^\delta\right)^{1+\xi-\bo}
\end{eqnarray*} 
by Lemma~\ref{lem:a_priori:w12} and as $2\bo-\tfrac23-\delta(1+\xi-\bo)>0$. We have used that as in the proof of Lemma~\ref{lem:g:5},
$$
\norm{\mathbbm{1}_{B_{2R_\bo}(0)}(\z_2-\z_3)\hat{m}^a\psi}^2\ls\varepsilon^{-\frac43}\mu^{2\bo}(\norm{\partial_{x_1}\hat{m}^a\psi}^2+\varepsilon^2\norm{\nabla_{y_1}\hat{m}^a\psi}^2)\ls N^{-2+2\xi-2\bo}\varepsilon^{4\bo-\frac43}\mathfrak{e}^2(t)
$$
because by Lemma~\ref{lem:l:2}, Lemma~\ref{lem:commutators:2} and Lemma~\ref{lem:a_priori:w12},
$$
\norm{\partial_{x_1}\hat{m}^a\psi}\ls \onorm{\hat{m}^a}\left(\norm{\partial_{x_1}p_1\psi}+\norm{\partial_{x_1}(1-p_1)\psi}\right)
\ls N^{-1+\xi}\mathfrak{e}(t)
$$
and analogously $\norm{\nabla_{y_1}\hat{m}^a\psi}\ls N^{-1+\xi}\varepsilon^{-1}$.
The remaining two terms~\eqref{eqn:gamma:GP:d:3} and~\eqref{eqn:gamma:GP:d:4} can be estimated as
\begin{eqnarray*}
\eqref{eqn:gamma:GP:d:3}&\ls &N^3\onorm{\gbot p_1}\left(\onorm{\hat{m}^a}+\onorm{\hat{m}^b}\right)\norm{p_1w_\mu^{(13)}\psi}\\
&\ls &\mathfrak{e}^3(t)N^{-\frac{\bo}{2}+\xi}\varepsilon^{1+\bo}
<\mathfrak{e}^3(t)\varepsilon^{1+\bo},\\
\eqref{eqn:gamma:GP:d:4}&=& N^3\left|\llr{w_\mu^{(13)}\psi,\gbot p_2\mathbbm{1}_{\supp{w_\mu}}(\z_1-\z_3)p_1(\hat{m}^b-2\hat{m}^a)\psi}\right|\\
&\leq &N^3\norm{w_\mu^{(13)}\psi}\onorm{\gbot p_2}\onorm{\mathbbm{1}_{\supp{w_\mu}}(\z_1-\z_3)p_1}\left(\onorm{\hat{m}^b}+2\onorm{\hat{m}^a}\right)\\
&\ls&\mathfrak{e}^3(t)N^{-\frac{\bo}{2}+\xi}\varepsilon^{1+\bo}
\; <\; \mathfrak{e}^3(t)\varepsilon^{1+\bo},
\end{eqnarray*}
where we have used that $\xi<\frac{\bo}{6}$ as well as Lemma~\ref{lem:a_priori:w12}, Lemma~\ref{lem:l:2} and Lemma~\ref{lem:g:2}.
\qed

\subsubsection{Proof of the bound for $\gamma_e(t)$}
Using again Lemma~\ref{lem:commutators:5}, $|\gamma_e(t)|$ can be written as
\begin{equation}\label{eqn:gamma:GP:d:7}
|\eqref{gamma:GP:e}|\ls N^4\left|\llr{\psi,\gbot\left[w_\mu^{(34)},p_3p_4(\hat{r}-\hat{r}_2)+(p_3q_4+q_3p_4)(\hat{r}-\hat{r}_1)\right]\psi}\right|.
\end{equation}
By definition of $\hat{r}$ and $\hat{m}^{c/d/e/f}$, we obtain
\begin{eqnarray*}
p_3p_4(\hat{r}-\hat{r}_2)+(p_3q_4+q_3p_4)(\hat{r}-\hat{r}_1)
&=&(p_1q_2+q_1p_2)(p_3q_4+q_3p_4)\hat{m}^c
+(p_1q_2+q_1p_2)p_3p_4\hat{m}^d\\
&&+p_1p_2(p_3q_4+q_3p_4)\hat{m}^e+
p_1p_2p_3p_4\hat{m}^f.
\end{eqnarray*}
Due to the symmetry of~\eqref{eqn:gamma:GP:d:7} under the exchanges $1\leftrightarrow2$ and $3\leftrightarrow4$, this yields
\begin{eqnarray}
|\eqref{gamma:GP:e}|
&\ls& N^4\left|\llr{\psi,\gbot p_1q_2\left[w_\mu^{(34)},p_3q_4\hat{m}^c+p_3p_4\hat{m}^d\right]\psi}\right|
\label{eqn:gamma:GP:d:8}\\
&&+N^4\left|\llr{\psi,\gbot p_1p_2\left[w_\mu^{(34)},p_3q_4\hat{m}^e+p_3p_4\hat{m}^f\right]\psi}\right|,
\label{eqn:gamma:GP:d:9}
\end{eqnarray}
where by Lemma~\ref{lem:w12:4}, Lemma~\ref{lem:g:2} and Lemma~\ref{lem:l:2},
\begin{eqnarray*}
\eqref{eqn:gamma:GP:d:8}&\leq&
N^4\left|\llr{\psi,w_\mu^{(34)}p_3\gbot p_1q_2(q_4\hat{m}^c+p_4\hat{m}^d)\psi}\right|\\
&&+N^4\left|\llr{\psi,\gbot p_1q_2(q_4\hat{m}^c+p_4\hat{m}^d)p_3w_\mu^{(34)}\psi}\right|\\
&\ls& N^4\norm{p_3w_\mu^{(34)}\psi}\onorm{\gbot p_1}\left(\onorm{\hat{m}^c}+\onorm{\hat{m}^d}\right)\\
&\ls&\mathfrak{e}^3(t)N^{-\frac{\bo}{2}+3\xi}\varepsilon^{1+\bo}
\; <\; \mathfrak{e}^3(t)\varepsilon^{1+\bo}
\end{eqnarray*}
as $\xi<\frac{\bo}{6}$. Analogously, one derives the same bound for~\eqref{eqn:gamma:GP:d:9}. 
\qed

\subsubsection{Proof of the bound for $\gamma_f(t)$}
Finally, as a consequence of Lemma~\ref{lem:l}, Lemma~\ref{lem:Phi:1} and Lemma~\ref{lem:g},
\begin{eqnarray*}
|\eqref{gamma:GP:f}|&\ls& N^2\left|\llr{\psi,\gbot p_2\left[b|\Phi(x_1)|^2,\hat{m}^bp_1+\hat{m}^aq_1\right]\psi}\right|\\
&&+N^2\left|\llr{\psi,\gbot\left[b|\Phi(x_1)|^2,p_1\hat{m}^a\right]q_2\psi}\right|\\
&\ls& N^2\norm{\Phi}^2_{L^\infty(\R)}\left(\onorm{p_2\gbot}\left(\onorm{\hat{m}^a}+\onorm{\hat{m}^b}\right)+
\norm{\gbot\psi}\norm{q_2\hat{m}^a\psi}\right)\\
&\ls& \mathfrak{e}^3(t)N^{-\frac{\bo}{2}+\xi}\varepsilon^{1+\bo}+\mathfrak{e}^2(t)\varepsilon
\; \ls\; \mathfrak{e}^2(t)\varepsilon.
\end{eqnarray*}\qed

\subsection{Proof of Proposition~\ref{prop:correction}}
Using Lemma~\ref{lem:l:2} and Lemma~\ref{lem:g:2}, we estimate
$$
N(N-1)\Re\llr{\psi,\gbot\hat{r}\psi}
\ls N^2\onorm{\gbot p_1}\left(\onorm{\hat{m}^a}+\onorm{\hat{m}^b}\right)
\ls \mathfrak{e}(t)N^{\xi-\frac{\bo}{2}}\varepsilon^{1+\bo}.
$$
\qed

\section*{Acknowledgments}
\noindent We thank Serena Cenatiempo, Maximilian Jeblick, Nikolai Leopold and Peter Pickl for helpful discussions. This work was supported by the German Research Foundation within the Research Training Group 1838 ``Spectral Theory and Dynamics of Quantum Systems''.

\renewcommand{\bibname}{References}
\bibliographystyle{abbrv}
    \bibliography{bib_PhD}
\end{document}